\providecommand{\U}[1]{\protect\rule{.1in}{.1in}}
\providecommand{\U}[1]{\protect\rule{.1in}{.1in}}
\providecommand{\U}[1]{\protect\rule{.1in}{.1in}}
\theoremstyle{plain}
\theoremstyle{definition}
\theoremstyle{plain}
\providecommand{\U}[1]{\protect\rule{.1in}{.1in}}
\newtheorem{lemma}{Lemma}
\newtheorem{theorem}{Theorem}
\newtheorem{assumption}{Assumption}
\newtheorem{remark}{Remark}
\providecommand{\definitionname}{Definition}
\providecommand{\propositionname}{Proposition}
\providecommand{\theoremname}{Theorem}
\providecommand{\definitionname}{Definition}
\providecommand{\propositionname}{Proposition}
\providecommand{\theoremname}{Theorem}
\providecommand{\definitionname}{Definition}
\providecommand{\propositionname}{Proposition}
\providecommand{\theoremname}{Theorem}
\providecommand{\definitionname}{Definition}
\providecommand{\propositionname}{Proposition}
\providecommand{\theoremname}{Theorem}
\DeclareMathOperator{\pr}{\mathbb P}
\DeclareMathOperator{\E}{\mathbb E}
\DeclareMathOperator{\Var}{\mathbb{V}}
\DeclareMathOperator{\Cov}{\mathsf{Cov}}
\DeclareMathOperator{\cv}{\rm CoVaR}
\DeclareMathOperator{\R}{\mathbb R}
\DeclareMathOperator{\cvv}{ {\rm CoVaR}_{\alpha,\beta} }
\DeclarePairedDelimiter\ceil{\lceil}{\rceil}
\DeclareMathOperator{\D}{\mathbb D}
\DeclareMathOperator{\N}{\mathbb N}
\DeclareMathOperator{\Normal}{\mathsf{Normal}}
\begin{document}

\title{\bf Estimating Systemic Risk within Financial Networks:\\A Two-Step Nonparametric Method}

\author{
Weihuan Huang\\[-6pt]
{\small School of Management \& Engineering, Nanjing University, Nanjing 210093, China}\\[-8pt]
{\small hwh@nju.edu.cn}
}

\renewcommand{\thefootnote}{\fnsymbol{footnote}}
\footnotetext[1]{The author acknowledges the financial support provided by the National Natural Science Foundation of China (NSFC) Nos. 12301601, 72161160340, 72091211, 72122008, 72371125.}

\date{}
\maketitle

\begin{abstract}
\noindent
CoVaR (conditional value-at-risk) is a crucial measure for assessing financial systemic risk, which is defined as a conditional quantile of a random variable, conditioned on other random variables reaching specific quantiles. It enables the measurement of risk associated with a particular node in financial networks, taking into account the simultaneous influence of risks from multiple correlated nodes. However, estimating CoVaR presents challenges due to the unobservability of the multivariate-quantiles condition. To address the challenges, we propose a two-step nonparametric estimation approach based on Monte-Carlo simulation data. In the first step, we estimate the unobservable multivariate-quantiles using order statistics. In the second step, we employ a kernel method to estimate the conditional quantile conditional on the order statistics. We establish the consistency and asymptotic normality of the two-step estimator, along with a bandwidth selection method. The results demonstrate that, under a mild restriction on the bandwidth, the estimation error arising from the first step can be ignored. Consequently, the asymptotic results depend solely on the estimation error of the second step, as if the multivariate-quantiles in the condition were observable. Numerical experiments demonstrate the favorable performance of the two-step estimator.
\bigskip

\noindent\textbf{Keywords:} systemic risk; financial network; conditional quantile; multivariate-quantiles condition; kernel method; order statistics; Monte-Carlo simulation; statistical analysis. 
\end{abstract}

\newpage{}

\onehalfspacing
\section{Introduction}\label{intro}

The global financial crisis of 2007-2009 is commonly attributed to the insufficient attention given to systemic risks. The recurring financial turmoils witnessed in the past decade have further reaffirmed this perspective: The financial system constitutes an  interconnected network where risk contagion often occurs from particular nodes, which, in the absence of proper evaluation and effective regulation, evolves into significant systemic risks. A typical financial network encompasses diverse entities, including financial institutions, firms, and individuals, with linkages established through mutual holdings of shares, debts, and other obligations. This interconnectedness within the network engenders risk contagion and presents formidable challenges to risk management. For instance, during the 2007-2009 financial crisis, the collapse of the U.S. housing market resulted in a substantial decline in the market value of related financial products, impacting the asset values of institutions holding these products. The risk was further amplified by high-risk behaviors and interconnections among financial institutions, ultimately culminating in the global crisis.  Therefore, managing systemic risk necessitates not only evaluating risks associated with individual nodes in financial networks but also evaluating the interdependencies between nodes. For further insights into financial networks, we refer to  \cite{elliott2014financial,allen2009networks,acemoglu2015systemic,babus2016formation,eisenberg2001systemic,jackson2021systemic,glasserman2015likely,glasserman2016contagion}, among others. 

The measurement of systemic risk stands as the most fundamental issue in risk management. The commonly employed measure by financial institutions is the value-at-risk (VaR), which quantifies an institution's portfolio loss at a specified \textit{quantile} \citep{Jorion}. However, VaR fails to account for the interconnections within financial networks, rendering it an inadequate measure of systemic risk. Expanding upon the concept of \textit{conditional quantile}, \cite{CoVaR2016} introduce CoVaR as a promising measure for evaluating financial systemic risk. Specifically, CoVaR represents the conditional quantile of an institution's loss, conditioned on another institution's loss reaching a specific VaR threshold. Based on CoVaR, they further introduce $\Delta\cv$ (difference of CoVaR) to represent the increase in VaR of an institution' loss when another institution's loss shifts from a median state to a crisis state. The measure $\Delta\cv$ effectively captures the extent to which the risk of an institution is affected by the risk of another institution. Through empirical analysis, they successfully demonstrate the predictability of the 2007-2009 financial crisis using $\Delta\cv$. Consequently, CoVaR has emerged as one of the most significant measures of financial systemic risk. For further insight into the concepts of quantile and conditional quantile, we refer readers to \cite{Serfling1980,koenker2005quantile}. 

However, CoVaR solely assesses the node-to-node systemic risk, thereby failing to capture the systemic risk within financial networks when multiple nodes simultaneously impact a specific node. Specifically, CoVaR measures the tail dependence between arbitrary pairs of random variables representing the losses of two respective institutions. Nevertheless, during a crisis, it is not uncommon to observe simultaneous crises occurring across numerous financial institutions. Consequently, it becomes necessary to consider the correlation of a specific node with multiple other nodes. This multilateral nature of financial networks prompts us to consider a measure of systemic risk associated with a specific node in financial networks, accounting for the simultaneous influence of risks stemming from multiple correlated nodes. In this paper, we study an extension of CoVaR from the univariate-quantile condition to a multivariate-quantiles condition. Specifically, CoVaR is defined as a conditional quantile of an institution's loss, conditioned on the losses of some correlated institutions reaching respective quantiles. Building upon this extension of CoVaR, $\Delta\cv$ with the multivariate-quantiles condition can be introduced to capture the systemic risk within financial networks. 

This extension is conceptually natural but presents challenges in statistical estimation due to the \textit{unobservability} of the multivariate-quantiles condition. It is important to note that the unobservability has two implications: Firstly, the condition represents an event with zero probability that cannot be directly observed; Secondly, the multivariate-quantiles within the condition are also unobservable. To address the challenges, we develop a two-step nonparametric approach to estimate CoVaR with a multivariate-quantiles condition using Monte-Carlo simulation data. Specifically, in the first step, we employ the order statistics to estimate the multivariate-quantiles in the condition. Note that, the order statistics is a conventional nonparametric method of quantile estimation \citep{Serfling1980}. In the second step, we develop a kernel estimation of the conditional quantile conditioned on the order statistics. Note that, the kernel method is a conventional nonparametric method of conditional quantiles estimation \citep{li2007nonparametric}. The distinction here is that the conventional conditional quantile assumes a fixed and known condition, whereas the multivariate-quantiles condition of CoVaR is unobservable. Consequently, in this paper, we combine these two nonparametric methods to estimate CoVaR with the unobservable condition. We underscore the advantages of employing Monte-Carlo simulation and the nonparametric methods, which are delineated as follows. 

\textit{Why use Monte-Carlo simulation?}---Real-world financial data often exhibits intricate data characteristics, such as temporal correlations, challenging the assumption of independent and identically distributed (i.i.d.) data. Nevertheless, we can leverage real financial data to calibrate a predetermined statistical model and subsequently employ this validated model to generate a significant volume of i.i.d. simulation data. This approach facilitates the acquisition of efficient and reliable CoVaR estimates through the utilization of Monte-Carlo simulation. On the other hand, compared to existing model-based methods that necessitate model simplification for analytical CoVaR computation, simulation-based methods offer greater modeling flexibility. The concept of utilizing simulation-based methods for the estimation of CoVaR with a univariate-quantile condition is first introduced by \cite{huanghong23}.

\textit{Why use nonparametric methods?}---Nonparametric methods are employed due to their ability to address the unobservability of the multivariate-quantiles condition through estimating the condition using order statistics (i.e., the first step) and estimating the conditional quantile using the kernel method (i.e., the second step). Compared to existing simulation-based methods \citep{huanghong23}, the kernel method enables the handling of conditional quantiles with a multivariate condition, making it suitable for the estimation problem of CoVaR within financial networks.

With the new two-step estimator, we engage in a thorough investigation of its asymptotic properties. We establish the consistency and asymptotic normality of the two-step estimator, accompanied by a bandwidth selection method. In particular, the asymptotic normality reveals that the rate of convergence of the two-step estimator is $(n\Delta_n)^{-1/2}$, where $n$ represents the sample size, $\Delta_n=\prod_{j=1}^m\delta_{n,j}$, $m$ denotes the number of dimensions in the condition, and $\delta_{n,j}$ represents the bandwidth parameters. Notably, in the special case of a univariate-quantile scenario (i.e., $m=1$), the two-step estimator can achieves a faster rate of convergence ($n^{-2/5}$) compared to the batching estimator ($n^{-1/3}$) provided in \cite{huanghong23}. During the proof, the estimation error of the two-step estimator is contingent upon the errors generated during each of the two respective steps. To address these errors comprehensively, we simultaneously handle both steps. Interestingly, we discover that, subject to a mild restriction on the bandwidth, the estimation error stemming from the first step can be disregarded. Consequently, the asymptotic results hinge solely on the estimation error of the second step and have the same form as those of the conventional conditional quantile, as if the multivariate-quantiles in the condition were directly observable.

Finally, we present two illustrative numerical examples within the framework of the delta-gamma approximation model. The first example involves a univariate-quantile condition for which an analytical form of CoVaR exists. The numerical results not only validate our theoretical asymptotic findings but also demonstrate the superiority of our proposed two-step estimator over the batching estimator presented in \cite{huanghong23}. In the second example, we consider a multivariate-quantiles condition for which an analytical solution of CoVaR is not available. The numerical results highlight the substantial disparity between the CoVaR with the multivariate-quantiles condition and the CoVaR derived from a univariate-quantile condition. Consequently, this underscores the importance of considering multiple institutions facing risk simultaneously and avoiding reliance on the CoVaR with a univariate-quantile condition, as it may lead to a flawed measure of systemic risk in practical applications. 

\subsection{Literature Review}

There is a body of literature focusing on the estimation method of CoVaR. One line of research employs \textit{model-based methods} for CoVaR, assuming specific structural models for the loss distribution. These models allow for estimation of the model using real financial data, enabling the calculation of CoVaR based on these estimated models. Building upon a linear factor model where the two institutions' losses exhibit a linear relationship, \cite{CoVaR2016} propose a quantile regression approach to estimate CoVaR. Moreover, both \cite{CoVaR2016} and \cite{girardi2013systemic} utilize GARCH models to capture the dynamic evolution of systemic risk contributions. \cite{white2015var} employ a combination of quantile regressions and GARCH to estimate CoVaR, and their method can be extended to a multivariate-quantiles condition. 

Alternatively, one may adopt distributional assumptions and utilize maximum likelihood techniques for CoVaR estimation. For instance, \cite{cao2013multi} employs a multivariate Student-$t$ distribution to estimate CoVaRs across firms. Similarly, \cite{bernardi2013multivariate} estimate CoVaR by employing a multivariate Markov switching model with a Student-$t$ distribution, effectively capturing heavy tails and nonlinear dependence. Furthermore, their methods can be extended to estimate CoVaR with a multivariate-quantiles condition. 

Copula models have gained popularity in CoVaR estimation due to their convenient ability to model the dependence between portfolio losses. For example, \cite{mainik2014dependence} present analytical results regarding CoVaR using copulas. \cite{karimalis2018measuring} provide a straightforward closed-form expression of the CoVaR for a wide range of copula families, allowing for the incorporation of time-varying exposures. \cite{oh2018time} introduce a new class of copula-based dynamic models for high-dimensional conditional distributions that facilitates multivariate-quantiles conditions. 

To accommodate nonlinear relationship between institutions' losses, \cite{hardle2016tenet} propose a nonlinear model where an institution's loss is a function with nonlinear properties, dependent on the aggregation of other institutions' losses. Based on this framework, they develop a CoVaR estimation method. \cite{cai2020functional} propose an alternative dynamic CoVaR estimation approach based on a functional coefficient vector autoregressive model. 

\begin{figure}[ht]
	\captionsetup{labelfont=bf}
	\caption{The Logic of Model-Based Methods \& Monte-Carlo Simulation Based Methods}
	\centering    
	\begin{tikzpicture}[node distance=6.5cm, every node/.style={minimum width=2.5cm, minimum height=1cm},myarrow/.style={-{Triangle[length=3mm,width=2mm]}}]
		\node (data) [,draw,rectangle,line width=1pt]{Real Data};
		\node (model) [right of=data,draw,rectangle,line width=1pt] {Model};
		\node (covar) [right of=model,draw,rectangle,line width=1pt] {CoVaR};
		\node (simulationdata) [below left of=covar, xshift=1.3cm, yshift=1.5cm,draw,rectangle,line width=1pt] {Simulation Data};
		
		\draw[->, shorten >=2pt, shorten <=2pt, line width=1.5pt, solid, draw opacity=0] (data) -- (model) node[midway, above] {Step (i): Estimation};
		\draw[-to, shorten >=2pt, shorten <=2pt, line width=1.5pt, dashed] let \p1 = ($(model)-(data)$) in ($(data)+(1.27cm,0.1cm)$) -- ($(model)+(-1.27cm,0.1cm)$);
		\draw[myarrow, shorten >=2pt, shorten <=2pt, line width=1.5pt, solid] let \p1 = ($(model)-(data)$) in ($(data)+(1.27cm,-0.1cm)$) -- ($(model)+(-1.27cm,-0.1cm)$);
		\draw[-to, shorten >=2pt, shorten <=2pt, line width=1.5pt, dashed] (model) -- (covar) node[midway, above]  {Step (ii): Calculation};
		\draw[myarrow, shorten >=2pt, shorten <=2pt, line width=1.5pt, solid] (model) -- (simulationdata) node[midway, left]  {Step (iii): Simulation~~};
		\draw[myarrow, shorten >=2pt, shorten <=2pt, line width=1.5pt, solid] (simulationdata) -- (covar) node[midway, right]  {~~Step (iv): Estimation};
	\end{tikzpicture}
	\caption*{\footnotesize \it Note: Model-based methods (indicated by the dashed line procedure) comprise Step~(i) followed by Step~(ii), whereas Monte-Carlo simulation based methods (indicated by the solid line procedure) involve Step~(i) followed by Step~(iii) and Step~(iv). Step~(ii) is often challenging to solve mathematically, requiring the model-based methods to simplify the models. Consequently, these simplifications result in biased CoVaR estimates. In contrast, simulation-based methods bypass Step~(ii) and offer greater flexibility in modeling.}
	\label{fig0}
\end{figure}

We refer to the above estimation approaches as \textit{model-based methods}, the logic of which is illustrated by Step~(i) and Step~(ii) shown in Figure~\ref{fig0}. Specifically, these methods involve fitting a predetermined statistical model to real financial data (i.e., Step~(i)) and mathematically calculating CoVaR based on the model (i.e., Step~(ii)). The advantage of model-based methods lies in their efficiency when the model is appropriately specified. However, if the model is misspecified, bias may be introduced that is difficult to remove. In order to obtain an analytical solution for CoVaR in Step~(ii), the aforementioned model-based methods simplify their models, which result in the inability to capture certain conventional financial models. For example, these models cannot accommodate the basic delta-gamma approximation model for complicated portfolios \citep{glasserman2004monte}, nor some advanced financial models such as the constant elasticity of variance model \citep{cox1976valuation,cox1996constant,schroder1989computing,black1976pricing,beckers1980constant}, the stochastic return model \citep{kim1996dynamic,wachter2002portfolio,merton1975optimum}, and the stochastic volatility model \citep{chacko2005dynamic,heston1993closed,campbell1999consumption}. Consequently, using these model-based methods may lead to potentially significant biases in CoVaR estimation.

Another approach involves utilizing \textit{Monte Carlo simulation-based methods} for CoVaR estimation, as first introduced by \cite{huanghong23}. The logic of these methods is illustrated by Steps (i), (iii), and (iv) in Figure \ref{fig0}. Specifically, the simulation-based methods also entail fitting a predetermined statistical model to real financial data (i.e., Step (i)), but they bypass Step (ii) by proceeding directly to Steps (iii) and (iv). The advantage of Monte Carlo simulation-based methods lies in their model flexibility, as they can accommodate complicated models capable of generating ample observations. Based on the simulation data, estimation methods can be established to obtain CoVaR. Furthermore, \cite{huanghong23} propose two estimation methods for Step (iv): batching estimation (BE) and importance sampling-inspired estimation (ISE). BE is also a two-step nonparametric method, but differs in that both steps involve order statistics, whereas our two-step method employs order statistics followed by kernel estimation. ISE is specific to delta-gamma approximation models and exhibits higher efficiency than BE with a faster rate of convergence. However, while the simulation-based methods demonstrate their efficiency in estimating CoVaR with a univariate-quantile condition, they are not applicable to CoVaR with a multivariate-quantiles condition.

Table \ref{Table0} summarizes the existing estimation methods. The current model-based methods lack model flexibility as they require simplifying their models for CoVaR calculation, although some of them can be utilized for multivariate-quantile conditions. On the other hand, the simulation-based method BE offers model flexibility but exhibits a slower rate of convergence $n^{-1/3}$ and is not applicable to multivariate-quantile conditions. The simulation-based method ISE, which has a faster rate of convergence compared to BE and is based on more advanced models than the model-based methods, is also not applicable to multivariate-quantile conditions. In contrast, the new two-step kernel nonparametric estimator in this paper provides model flexibility and is applicable to multivariate-quantile conditions. Its rate of convergence is $(n\Delta_n)^{-1/2}$, which is faster than that of BE for univariate-quantile conditions.

\begin{table}[ht]
	\centering
	\captionsetup{labelfont=bf}
	\renewcommand{\arraystretch}{1.5} 
	\begin{threeparttable}
		\caption{Comparison between Different Estimation Methods}
		\begin{tabular}{@{}ccccc@{}}
			\toprule
			\multicolumn{2}{c}{\makecell{Estimator}} & \multicolumn{1}{c}{\makecell{Model \\ Flexibility}} & \multicolumn{1}{c}{\makecell{Rate of \\ Convergence}} & \multicolumn{1}{c}{\makecell{Multivariate \\ Condition}} \\
			\midrule
			~Model-Based Methods & & no & - & -  \\
			~BE \citep{huanghong23} & & yes & $n^{-1/3}$ & no \\
			~ISE \citep{huanghong23} & & no & $n^{-1/2}$ & no \\
			~Two-Step Estimator ({\color{blue}This work}) & & yes & ~~~$(n\Delta_n)^{-1/2}$~~~ & yes \\
			\bottomrule
			\label{Table0}
		\end{tabular}
	\end{threeparttable}
\end{table}

The remainder of this paper is structured as follows. In Section~\ref{prob}, we rigorously formulate the estimation problem and present the assumptions that will be employed throughout the paper. In Section~\ref{kernel}, we introduce the two-step nonparametric estimator, establish its consistency and asymptotic normality, and conclude with a heuristic method for bandwidth selection. In Section~\ref{App}, we provide two numerical examples to substantiate our theoretical findings. Finally, in Section~\ref{conclusion}, we summarize the paper's main contributions. The detailed proofs are included in Appendix. 

\section{Problem Formulation}\label{prob}

In this section, we formally formulate the estimation of $\cv$ and introduce some notations that will be utilized throughout the paper. Let $X =(X_1,\ldots,X_m)^\top$ and $Y$ represent $m+1$ continuous random variables, and denote the joint density function of $(X,Y)$ as $f(x,y)$. For $x =(x_1,\ldots,x_m)^\top$, we define the marginal density function and cumulative distribution function of $X$ as follows:
$$
f_{  X } (  x )=\int_{\R} f(  x ,y)\dd y \quad \text{and} \quad F_{  X } (  x ) = \int_{-\infty}^{x_1} \cdots \int_{-\infty}^{x_m} f_{  X }(u_1,\ldots, u_m) \dd u_1 \cdots \dd u_m.
$$ 
These functions capture the individual characteristics of $X$ without considering $Y$. In the field of financial risk management, it is common practice to represent the losses of financial portfolios using continuous random variables \citep{hull2012risk}. For instance, $X_1,\ldots,X_m,Y$ may correspond to the losses incurred by different financial institutions. 

Based on Section 4.1 of \cite{durrett2019probability} or Section 3.3 of \cite{Ross}, when $f_{  X }(  x )>0$, we let
\begin{equation}\label{defofCP}
	F_{Y|  X }(y|  x ) \ =\ \pr \big\{ Y\leq y \big|   X =   x  \big\}  
	\ =\  \lim\limits_{\varepsilon\to 0}\frac{\pr\big\{Y\leq y, \|   X  -   x  \| \leq \varepsilon \big\}}{\pr\big\{ \|  X -   x  \| \leq \varepsilon \big\}} \ =\ \int_{-\infty}^{y} \frac{f(  x , v)}{f_{  X }(  x )} \mathrm{d}v 
\end{equation}
be the conditional distribution function, where $\|\cdot\|$ is the Euclidean metric. 
To simplify the notation, we let 
$$
g(  x ,y)=\int_{-\infty}^y f(  x ,v)\dd v,
$$ 
so we have $F_{Y|  X }(y|  x ) = g(  x , y)/ f_{  X }(  x )$. 

To define derivatives of multivariate function, we introduce the multi-index notation as follows. 
For $\ell=(\ell_1,\ldots,\ell_m)^\top \in\N^m$ (where $\N=\{0,1,2,\ldots \}$), $x=(x_1,\ldots,x_m)^\top \in\R^m$, we denote 
$$
|\ell|= \ell_1+\ell_2+\cdots+\ell_m, \quad
\ell != \ell_1 ! \ell_2 ! \cdots \ell_m !, \quad \text{and} \quad x^\ell=x_1^{\ell_1} x_2^{\ell_2}\cdots x_m^{\ell_m}.
$$ 
Notice that, we denote $x^{2\ell} = x_1^{2\ell_1} x_2^{2\ell_2}\cdots x_m^{2\ell_m}$. 
Then, we denote 
$$
\D^{\ell} f_X( x ) = \frac{\partial^{|\ell|} f_X}{\partial x_1^{\ell_1}\cdots \partial x_m^{\ell_m}}(  x ) \quad \text{and} \quad \D^{\ell} g(  x ,y) = \frac{\partial^{|\ell|} g}{\partial x_1^{\ell_1}\cdots \partial x_m^{\ell_m}}(  x , y)
$$ 
be the $\ell$-th partial derivative of $f_X$ and $g$ with respect to $  x $ respectively. These notations are common used in calculus for multivariate functions  and will be used throughout this paper. 

The most commonly used risk measure employed by financial institutions is value-at-risk (VaR), which is defined as a quantile. Specifically, the $\beta$-VaR (or $\beta$-quantile) of $Y$, denoted as $q_\beta(Y)$ with $\beta\in(0,1)$, satisfies
$$
\pr\Big\{ Y\leq q_\beta(Y) \Big\} = \beta.
$$
This expression implies that there is a $100 \times \beta\%$ confidence level associated with the statement that the loss of $Y$ does not exceed $q_\beta(Y)$. The concept of VaR was originally introduced by J.P. Morgan in the early 1990s and has since become a widely adopted risk measure in the global financial industry \citep{Jorion,DuffiePan,hull2012risk}. However, VaR, as a measure of risk for individual institutions, may not fully capture systemic risk within financial networks. This limitation arises from its inability to quantify the interconnectedness between nodes within the networks. 

In response to the financial crisis, \cite{CoVaR2016} introduced $\cv$ as a metric for measuring financial systemic risk, which satisfies
\begin{equation}
	\pr\left\{ Y\leq \cv_{\alpha_1,\beta} \Big| X_1= q_{1,\alpha_1}\right\} = \beta, 
\label{defCoVaR}
\end{equation}
where $\alpha_1,\beta\in(0,1)$. CoVaR represents a conditional quantile, specifically the $\beta$-quantile of $Y$ when $X_1$ reaches its $\alpha_1$-quantile ($X_1=q_{1,\alpha_1}$). It is important to note that $\cv_{\alpha_1,\beta}={\rm VaR}_\beta(Y)$ when the two random variables $X_1$ and $Y$ are independent. However, in situations where financial portfolio losses exhibit positive dependence, $\cv_{\alpha_1,\beta}$ is typically significantly larger than ${\rm VaR}_\beta(Y)$, indicating that tail risk during financial distress is considerably higher than during normal times. To quantify the sensitivity of the value-at-risk of $Y$ to the distress caused by the risk of $X_1$, they define 
$$
\Delta\cv_{\alpha_1,\beta} = \cv_{\alpha_1,\beta} -\cv_{0.5,\beta} \quad \mbox{for $\alpha_1>0.5$},
$$
which represents the increase in the value-at-risk of $Y$ when $X_1$ transitions from a median state $\{X_1=q_{1,0.5}\}$ to a crisis state $\{X_1=q_{1,\alpha_1}\}$. It is worth noting that when defining the crisis state, it is common to consider $\alpha_1$ close to one. Furthermore, empirical evidence presented by \cite{CoVaR2016} demonstrates that $\Delta\cv_{\alpha_1,\beta}$ effectively captures systemic risk and predicts the 2007-2009 financial crisis.

However, the above definition of CoVaR solely assesses node-to-node systemic risk, which measures the tail dependence between pairs of random variables $(X_1,Y)$. Nevertheless, during a financial crisis, it is not uncommon to observe simultaneous crises occurring across numerous financial institutions. Consequently, it becomes necessary to consider CoVaR that accounts for the simultaneous influence of risks from multiple correlated nodes. Therefore, it is natural to extend Equation \eqref{defCoVaR} to incorporate a multivariate-quantile condition, yielding the following formulation:
\begin{equation}\label{defCoVaRGeneral}
	\pr\Big\{ Y\leq \cvv \Big| X_1 = q_{1,\alpha_1},\ldots, X_m=q_{m,\alpha_m} \Big\} = \beta, 
\end{equation}
where $ {\alpha}=(\alpha_1,\ldots,\alpha_m)^\top$ and $q_{i,\alpha_i}$ is the $\alpha_i$-quantile of $X_i$ satisfying $\pr\left\{ X_i \leq q_{i,\alpha_i} \right\} = \alpha_i$, for $i=1,\ldots,m$. Furthermore, we define
$$
\Delta\cvv = \cvv -\cv_{(0.5,\ldots,0.5),\beta} \quad \mbox{for $\alpha_i\geq 0.5,~i=1,\ldots,m$}. 
$$
$\Delta\cvv$ effectively measures the increase in the value-at-risk of $Y$ when some or all of the variables $X_1,\ldots,X_m$ transition simultaneously from their median state to crisis states. 

Estimating CoVaR as defined by Equation \eqref{defCoVaRGeneral} poses challenges due to the unobservability of the event $\{ X_1 = q_{1,\alpha_1},\ldots, X_m=q_{m,\alpha_m} \}$. This unobservability has two implications: Firstly, since $X_1,\ldots,X_m$ are continuous random variables, the event has a probability of zero. Secondly, the quantiles $q_{1,\alpha_1},\ldots,q_{m,\alpha_m}$ are unobservable. To overcome the challenges, this paper proposes a two-step nonparametric method based on Monte-Carlo simulation. It is worth emphasizing that Monte-Carlo simulation methods have been extensively employed in the field of risk management \citep{glasserman2004monte,hull2012risk}. Additionally, \cite{huanghong23} propose simulation-based methods for estimating CoVaR under a univariate-quantile condition ($m=1$). However, there exists a significant gap in extending their methods to the multivariate-quantile condition. 

Suppose that we have observed an independent and identically distributed (i.i.d.) simulation sample from a validated model, denoted by
$$
\Big\{ \big(X_{1,1},\ldots,X_{1,m},Y_1\big),~\big(X_{2,1}, \ldots, X_{2,m},Y_2\big),~\ldots,~\big(X_{n,1},\ldots,X_{n,m},Y_n\big) \Big\}.
$$ 
The objective of this paper is to develop an estimator for $\cv$ based on this sample. In other words, we aim to construct an estimator for Step(iv) depicted in Figure~\ref{fig0}. Furthermore, given the typically large sample sizes available in Monte Carlo studies, we also investigate the asymptotic properties of the estimator as the sample size $n$ approaches infinity. 

To facilitate the development and the analysis of the $\cv$ estimator, we impose the following assumptions. Let $  q_\alpha =(q_{1,\alpha_1},\ldots,q_{m,\alpha_m})^\top$ where $q_{j,\alpha_j}$ is the $\alpha_j$-quantile of $X_j$, $j=1,\ldots,m$. 

\begin{assumption}\label{assu:dist}
	Let $X_1,X_2,\ldots,X_m,Y$ be continuous random variables. 
	Let $\alpha_1,\ldots,\alpha_m,\beta\in(0,1)$ be given, $\mathcal{X}\subset \R^m $ be a convex neighborhood of $  q_\alpha $, and $\mathcal{Y}\subset \R $ be a neighborhood of $\cvv$. 
	Then, 
	\begin{enumerate}[label={(\roman*)}]
	\item both $f_{X}(x)$ and $f(x ,y)$ are positive and continuous in $\mathcal{X}$ and $\mathcal{X}\times\mathcal{Y}$ respectively; 
	\item for any $y\in\mathcal{Y}$ and $\ell\in\N^m$ such that $|\ell|\leq1$, we have $\D^\ell f_{X}(  x )$ and $\D^\ell g(x,y)$ are continuous in~$\mathcal{X}$;
	\item for $  x \in\R^m$ and $\ell\in\N^m$ such that $|\ell|\leq 1$, we have $\D^\ell g(  x ,y)$ is uniformly continuous in $y\in\mathcal{Y}$;
	\item for $y\in\mathcal{Y}$ and $\ell\in\N^m$ such that $|\ell|\leq 1$, we have $\int_{\R^m} |\D^\ell g(  x ,y)|\dd    x  <\infty$.
	\end{enumerate}
\end{assumption}

Assumption \ref{assu:dist} is commonly found in the nonparametric statistics literature, as evident in works such as \cite{Pagan1999nonparametric,hardle2004nonparametric,li2007nonparametric}. Moreover, it is also prevalent in the financial engineering literature when analyzing the properties of VaR and $\cv$, as illustrated in papers by \cite{Hong09,huanghong23,LiuandHong}.  We can verify that Assumption \ref{assu:dist} holds for commonly used distribution functions. Under Assumption \ref{assu:dist}, it is apparent that $q_{i,\alpha_i}$ represents the unique value satisfying $\mathbb{P}\{X_i \leq q_{i,\alpha_i} \}=\alpha_i$ and $q_{i,\alpha_i} = F_{X_i}^{-1}(\alpha_i)$ for $i=1,\ldots,m$. Furthermore, Assumption \ref{assu:dist} guarantees that $\cv$ is the unique solution of Equation \eqref{defCoVaRGeneral}. In fact, for $x \in \mathcal{X}$, $F_{Y|X}(y|x)$ is a differentiable function of $y$, and the conditional density $f_{Y|X}(y|x)$ satisfies
\begin{equation*}\label{Fpiany}
f_{Y|  X }(y|  x ) = \frac{\partial}{\partial y} F_{Y|  X }(y|  x ) = \frac{f(  x ,y)}{f_{  X }(  x )} > 0 \quad \text{when } y\in\mathcal{Y}. 
\end{equation*}
Consequently, for any $x \in \mathcal{X}$, we have an inverse function $y=F_{Y|X}^{-1}(\beta|x)$, and
\begin{equation}\label{ConQuan}
\cvv = F^{-1}_{Y|  X }(\beta\,|\,  q_\alpha ). 
\end{equation}

\begin{assumption}\label{assu:kernel1}
The kernel function $K$ is a bounded, symmetric and compactly supported probability density function that satisfying 
	~(i) $tK(t)\to 0$ as $|t|\to\infty$; 
	~(ii) $\int_{\R} K(t) \dd t=1$; 
	~(iii) $\int_{\R} |K(t)| \dd t<\infty$; 
	~(iv) $\int_{\R} t K(t) \dd t=0$; 
	~(v) $\int_{\R} t^2K(t) \dd t <\infty$; 
    ~(vi) for any $\ell\in\N^m$ such that $|\ell|\leq 1$, we have $[\D^\ell \prod_{j=1}^m K(x_j)]^2$ is bounded and integrable, and $\lim_{\|x\| \to\infty} \|x\| \cdot \D^\ell \prod_{j=1}^m K(x_j) = 0$.
\end{assumption}

Assumption~\ref{assu:kernel1} is widely employed in nonparametric statistics as a standard assumption for the kernel function, as supported by \cite{Pagan1999nonparametric,hardle2004nonparametric,li2007nonparametric}. Note that, a kernel satisfying conditions (i)--(v) of Assumption~\ref{assu:kernel1} corresponds to a second-order kernel, while the condition (vi) of Assumption~\ref{assu:kernel1} requires that $\prod_{j=1}^m K(x_j)$ forms a Parzen-Rosenblatt kernel \citep{abdous1998pointwise}. For the sake of simplicity in notation, we define the product kernel function as
\begin{equation}\label{Notation1}
W_{n,i}(  x ) \ = \ \prod_{j=1}^m  K\left( \frac{x_j-X_{i,j}}{\delta_{n,j}} \right) \quad \text{and} \quad \D^\ell W_{n,i}(  x ) \ = \ \frac{\partial^{|\ell|} W_{n,i}}{\partial x_1^{\ell_1}\cdots \partial x_m^{\ell_m}}(  x ), 
\quad i=1,\ldots,n, 
\end{equation}
where $|\ell|=\sum_{i=1}^m \ell_i$ and $\delta_{n,j}$ for $j=1,\ldots,m$ are the bandwidth parameters. Denote 
$$
\Delta_n=\prod_{j=1}^m \delta_{n,j} \quad \text{and} \quad \delta_n= \big(\delta_{n,1},\delta_{n,2},\ldots,\delta_{n,m} \big)^\top. 
$$ 
These product kernels are commonly used in the estimation of multivariate density in the nonparametric statistics literature. 

\begin{assumption}\label{assu:kernel3}
As $n\to\infty$, we have $\delta_{n,j}\to 0 $ for all $j=1,\ldots,m$, and $n\Delta_{n}\to\infty$. 
\end{assumption}

Assumption \ref{assu:kernel3} is a typical assumption for the bandwidth parameters in nonparametric statistics \citep{Pagan1999nonparametric,hardle2004nonparametric,li2007nonparametric}. This assumption requires that all $m$ bandwidths approach zero as the sample size $n\to\infty$, ensuring that the bias of the kernel estimation diminishes as $n\to\infty$. Additionally, the assumption $n\Delta_{n}\to\infty$ imposes a condition that the bandwidths cannot decrease too rapidly, guaranteeing the convergence of the kernel estimation with a rate of $(n\Delta_n)^{-1/2}$. 

In this paper, we employ the notation $Y_n=\mathcal{O}_p(X_n)$ to signify that, for any $\varepsilon>0$, there exists a constant $c>0$ such that $\pr\{|Y_n/X_n|>c\} \leq \varepsilon$ for all $n\in\mathbb{N}$. We utilize the notation $X_n\Rightarrow X$ to indicate that $X_n$ converges in distribution to $X$. Furthermore, we utilize $\ceil{\cdot}$ to denote the ceiling function, where $\ceil{x}$ represents the smallest integer greater than or equal to $x$ for $x\in\R $; we utilize $\mathbb{I}\{ \cdot\}$ to denote the indicator function. 

\section{A Two-Step Nonparametric Estimation}\label{kernel}

To estimate $\cv$, we examine the conditional distribution function $F_{Y|  X }(y|  x ) = \pr\{Y \leq y\, |\,   X =  x \}$ for $  x  \in \mathcal{S}$, where $\mathcal{S}=\{  x \in\R^m :f_{  X }(  x )>0\}$ represents the support of $  X $.  According to  Equation~\eqref{defofCP}, we have $\pr\{Y \leq y\, |\,   X =  x \} = g(  x , y)/ f_{  X }(  x )$. Drawing upon the classical theory of kernel density estimation, we could estimate $f_{  X }(  x )$ using ${1\over n\Delta_n} \sum_{i=1}^{n} W_{n,i}(  x )$, and estimate $ g(  x ,y)$ using $ {1\over n\Delta_n} \sum_{i=1}^{n} W_{n,i}(  x ) \mathbb{I}\{ Y_i \leq y\}$, where $W_{n,i}(x)$ is defined in Equation~\eqref{Notation1}. For further details, please refer to Section 6 of \cite{li2007nonparametric}. Consequently, the estimation of the conditional distribution function $F_{Y|X}(y|x)$ can be accomplished through the following sample distribution function
\begin{align}
	\hat F_n(y,  x ) \ &=\  \frac{ {1\over n\Delta_n} \sum_{i=1}^{n} W_{n,i}(  x ) \mathbb{I}\{Y_i\leq y\} }{ {1\over n\Delta_n} \sum_{i=1}^{n} W_{n,i}(  x )} \label{eqn:emp:kernel} \\ 
	&= \ \sum_{i=1}^n w_i(  x ) \mathbb{I}\{Y_i\leq y\}, \nonumber
\end{align}
where $w_i(  x ) = W_{n,i}(  x )/\sum_{i=1}^{n} W_{n,i}(  x )$. 

Based on Equation~\eqref{eqn:emp:kernel}, we propose a two-step approach to estimate $\cv$. Consider a dataset of $n$ independent and identically distributed (i.i.d.) samples generated through Monte Carlo simulation: $\{(X_{1,1},\ldots,X_{1,m},Y_1), (X_{2,1}, \ldots, X_{2,m},Y_2),\ldots, (X_{n,1},\ldots,X_{n,m},Y_n)\}$. 
\begin{description}	
	\item[Step 1.]  For each $j=1,2,\ldots,m$, we arrange $X_{1,j},X_{2,j},\ldots,X_{n,j}$ in ascending order as 
	$$
	X_{(1),j}\leq X_{(2),j}\leq \cdots \leq X_{(n),j},
	$$ 
	where $X_{(i),j}$ denotes the $i$-th smallest value, $i=1,2,\ldots,n$. We denote $\hat{q}_{j,\alpha_j} = X_{(\ceil{\alpha_j n}),j}$ for $j=1,2,\ldots,m$, and $ \hat{q}_\alpha  = (\hat{q}_{1,\alpha_1},\hat{q}_{2,\alpha_2},\ldots,\hat{q}_{m,\alpha_m})^\top$. 
	
	\item[Step 2.] Let $w_i = w_i(  \hat{q}_\alpha )$ for $i=1,2,\ldots,n$. Then, we obtain $n$ pairs $\{(Y_1,w_1), (Y_2,w_2), \ldots, (Y_n,w_n)\}$. We arrange $ Y_1,Y_2,\ldots, Y_n$ in ascending order as 
	$$ 
	Y_{[1]}\le Y_{[2]}\le\cdots\le Y_{[n]}, 
	$$ 
	where $Y_{[i]}$ denotes the $i$-th smallest value, $i=1,2,\ldots,n$. Here, the subscript $(i),j$ denotes the order of $\{X_{1,j},X_{2,j}\ldots,X_{n,j}\}$, and the subscript $[i]$ denotes the order of $\{Y_{1},Y_{2},\ldots,Y_{n}\}$. Furthermore, let $\pi_i =\sum_{j=1}^{i}  w_{[j]}$ for $i=1, \ldots, n$. Then, there exists $m \leq n$ such that $\pi_{m-1} \leq \beta$ and $\pi_{m}>\beta$. We define $$\hat{Y}= Y_{[m]}$$ which is the two-step nonparametric estimator of $\cvv$. 
\end{description}

According to Equation~\eqref{ConQuan}, $\cvv$ is the inverse of the distribution function $F_{Y|X}(y, x)$ at $x=q_\alpha$ and $y=\beta$, whereas the two-step estimator $\hat{Y}$ is the inverse of the sample distribution function $\hat F_n(y, x)$ at $x=\hat{q}_\alpha$ and $y=\beta$. Step~1 involves the utilization of order statistics $\hat{q}_\alpha$ to estimate $q_\alpha$, a topic has been extensively studied in the classical theory of order statistics \citep{Serfling1980}. On the other hand, Step~2 employs the kernel method to estimate the conditional quantile $\cvv$. It is important to note that the kernel method for estimating a conditional quantile with a fixed and known value condition, i.e., $\{X=x\}$ where $x\in\R^m$ is fixed and known, has been thoroughly studied in the classical theory of nonparametric statistics \citep{li2007nonparametric}. Our two-step approach integrates these two research areas. However, in Step 2, the conditional quantile $\cvv$ deviates from the conventional definition, as $q_\alpha$ in the condition $\{x=q_\alpha\}$ is unobservable (unknown). This distinction introduces new challenges to the statistical analysis. 

A major challenge in estimating $\cvv$ is that the quantile $q_\alpha$ cannot be directly observed. To address this issue, we approximate $q_\alpha$ using the sample quantile $\hat{q}_\alpha$, which is a strongly consistent estimator of $q_\alpha$ as $n\to\infty$, as discussed in Sections 2.3 and 2.4 of \cite{Serfling1980}. Subsequently, we incorporate this approximation into the kernel estimation of the conditional quantile. However, when analyzing the asymptotic properties of the two-step approach, it becomes necessary to simultaneously evaluate the errors arising from quantile estimation (Step 1) and the estimation of the conditional quantile (Step 2).

It is also essential to emphasize the rationale behind employing the kernel method to estimate $\cvv$. As highlighted by \cite{huanghong23}, a significant challenge in estimating $\cvv$ is the unobservability of the event $\{X=q_\alpha\}$, which has a probability of zero. A natural approach is to utilize the data within a small $\varepsilon$-neighborhood of $q_\alpha$, i.e., $\{q_\alpha -\varepsilon \leq X \leq q_\alpha +\varepsilon\}$, and then sending $\varepsilon$ to zero. The kernel method serves as one such technique.  It assigns weights, denoted as $w_i$ for $i=1,\ldots,n$, to each data point of $Y$, such that a higher (resp. lower) weight is assigned when the corresponding data point of $X$ is closer to (resp. farther from) $\hat{q}_\alpha$. By employing these weighted data points of $Y$ and sending the bandwidths to zero, we estimate the conditional distribution function $F_{Y|X}(y|x)$ and subsequently estimate $\cvv$ by taking its inverse.

In the remainder of this section, we analyze the asymptotic properties of the two-step estimator as the sample size $n$ tends to infinity and the bandwidths $\delta_{n,j}$, $j=1,\ldots,m$, tend to zero. In Section~\ref{consistency}, we prove the consistency of the two-step estimator. In Section~\ref{asymptotic_normality}, we prove the asymptotic normality of the two-step estimator and then provide guidelines for selecting the appropriate bandwidths.

\subsection{Consistency}\label{consistency}

Now, we establish the consistency of the two-step estimator $\hat{Y}$. To analyze the estimator, we begin with the the conditional distribution function $F_{Y|  X }(y\,|\,  q_\alpha )= \pr\{ Y\leq y\,|\,  X =   q_\alpha  \}$ and its estimator $\hat F_{n}(y,  \hat{q}_\alpha )$ as given by Equation \eqref{eqn:emp:kernel}. To simplify Equation \eqref{eqn:emp:kernel}, we introduce the following definitions:  
$$
\bar{R}_n(  x ,y) = {1\over n\Delta_n} \sum_{i=1}^{n} W_{n,i}(  x ) \mathbb{I}\{Y_i\leq y\} \quad \text{and} \quad \bar{Q}_n(  x )= {1\over n\Delta_n} \sum_{i=1}^{n} W_{n,i}(  x ). 
$$
Consequently, we have $\hat F_n(y,  x )= \bar{R}_n(  x ,y)/\bar{Q}_n(  x )$. Therefore, to establish the convergence of $\hat {F}_{n}(y,  \hat{q}_\alpha )$ to $F_{Y|  X }(y|  q_\alpha )$ as $n\to\infty$, it is necessary to demonstrate the convergence of $\bar{R}_n(   \hat{q}_\alpha  ,y)$ to $g(  q_\alpha ,y) $ and $\bar{Q}_n(   \hat{q}_\alpha  )$ to $ f_{  X }(  q_\alpha ) $, respectively, as $n\to\infty$. We decompose the estimation errors of $\bar{R}_n(   \hat{q}_\alpha  ,y)$ and $\bar{Q}_n(   \hat{q}_\alpha  )$ into three parts respectively: 
\begin{align}
\bar{R}_n(   \hat{q}_\alpha  ,y) -  g(  q_\alpha ,y) 
	\ &=\ \underbrace{ \bar{R}_n(   \hat{q}_\alpha  ,y) -  \bar{R}_n(   q_\alpha  ,y) }_{\text{Error~I--1}} 
	 + \underbrace{ \bar{R}_n(   q_\alpha  ,y)  -  \E[\bar{R}_n(   q_\alpha  ,y)] }_{\text{Error~II--1}} + \underbrace{ \E[\bar{R}_n(   q_\alpha  ,y)]  -  g(  q_\alpha ,y) }_{\text{Error~III--1}}, \label{num} \\
\bar{Q}_n(   \hat{q}_\alpha  ) -  f_{  X }(  q_\alpha ) 
\ &=\ \underbrace{ \bar{Q}_n(   \hat{q}_\alpha ) - \bar{Q}_n(   q_\alpha )  }_{\text{Error~I--2}} 
 + \underbrace{ \bar{Q}_n(   q_\alpha )  -  \E[\bar{Q}_n(   q_\alpha )] }_{\text{Error~II--2}} + \underbrace{ \E[\bar{Q}_n(   q_\alpha )]  -  f_{  X }(  q_\alpha ) }_{\text{Error~III--2}}, \label{num01} 
\end{align}
Notice that, Errors I--1 and I--2 are caused by the error of the quantile estimator, i.e., $(\hat{q}_\alpha -q_\alpha)$, in Step~1, whereas Errors II--1 and II--2 are caused by the variations of the empirical distribution estimators $\bar{R}_n(   q_\alpha  ,y)$ and $\bar{Q}_n(   q_\alpha )$ respectively, and Errors III--1 and III--2 are caused by the bias of the empirical distribution estimators $\bar{R}_n(   q_\alpha  ,y)$ and $\bar{Q}_n(   q_\alpha )$ respectively. 

In the following lemma, we examine the mean and variance of $\bar{R}_n(  q_\alpha ,y)$ and $\bar{Q}_n(q_\alpha)$, which establishes the convergence of Errors III--1 and III--2 to zero as $n\to\infty$. Furthermore, it yields even stronger results than the convergence of Errors III--1 and III--2, which are essential for establishing the consistency and asymptotic normality of the latter. 
The detailed proof is provided in the appendix~\ref{Lemma1}.

\begin{lemma}\label{lemmaofMV}
Suppose that Assumptions \ref{assu:dist}--\ref{assu:kernel3} hold. Then, for any $\ell\in \N^m$ such that $|\ell|\leq 1$, we have 
\begin{align}
	\lim _{n \rightarrow \infty} \E \left[ \frac{1}{n \Delta_n  } \sum_{i=1}^n \D^\ell W_{n,i}(  q_\alpha ) \right]   \ &= \ \D^\ell f_X(  q_\alpha ), \label{lem1:eqn:10} \\
	\lim _{n \rightarrow \infty} n \delta_n^{2 \ell} \Delta_n \cdot  \Var \left(\frac{1}{n \Delta_n  } \sum_{i=1}^n \D^\ell W_{n,i}(  q_\alpha ) \right)  \ &= \ f_X(  q_\alpha ) \int_{\R^m} \Big[\D^\ell \prod_{j=1}^m  K ( x_j )  \Big]^2 \dd  x, \label{lem1:eqn:20}
\end{align}
and for any $\ell\in \N^m$ such that $|\ell|\leq 1$ and any sequence $y_n\in\mathcal{Y}$ such that $y_n\to y\in\mathcal{Y}$ as $n \to \infty$, we have 
\begin{align}
	 \lim _{n \rightarrow \infty} \E \left[ \frac{1}{n \Delta_n  } \sum_{i=1}^n \D^\ell W_{n,i}(  q_\alpha ) \cdot \mathbb{I}\{Y_i\leq y_n\} \right]   \ &= \ \D^\ell g(  q_\alpha ,y ), \label{lem1:eqn:1} \\
	 \lim _{n \rightarrow \infty} n \delta_n^{2 \ell} \Delta_n \cdot  \Var \left(\frac{1}{n \Delta_n  } \sum_{i=1}^n \D^\ell W_{n,i}(  q_\alpha ) \cdot \mathbb{I}\{Y_i\leq y_n\}  \right)  \ &= \ g(  q_\alpha , y) \int_{\R^m} \Big[\D^\ell \prod_{j=1}^m  K ( x_j )  \Big]^2 \dd  x. \label{lem1:eqn:2} 
\end{align}
Moreover, for any $\gamma>1$, we have 
\begin{equation}\label{lem1:eqn:3}
\lim _{n \rightarrow \infty} \E \left[\frac{1}{n\Delta_n}\sum_{i=1}^n [W_{n,i}(  q_\alpha )]^\gamma \cdot \mathbb{I}\{Y_i\leq y_n\} \right] \ =\ g(  q_\alpha ,y) \int_{\R^m} \Big[ \prod_{j=1}^m  K ( x_j )  \Big]^\gamma \dd x. 
\end{equation}
\end{lemma}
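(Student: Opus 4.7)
The plan is to reduce all five limits to a single template: write each expectation as one integral against the joint density $f(x,y)$, substitute $u_j=(q_{j,\alpha_j}-x_j)/\delta_{n,j}$ so the kernel acts at unit scale, and then pass to the limit using continuity of $f_X$ and $g$ together with bounded/dominated convergence on the (essentially compact) support of the kernel. Throughout I write $\tilde K(u)=\prod_{j=1}^m K(u_j)$ and observe that $W_{n,i}(x)=\tilde K\big((x-X_i)/\delta_n\big)$ and $\D^\ell W_{n,i}(x)=\delta_n^{-\ell}(\D^\ell \tilde K)((x-X_i)/\delta_n)$, with componentwise division by $\delta_n$.

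By the i.i.d.\ property, \eqref{lem1:eqn:10} and \eqref{lem1:eqn:1} reduce to computing
\begin{align*}
\frac{1}{\Delta_n}\E\!\left[\D^\ell W_{n,1}(q_\alpha)\mathbb{I}\{Y_1\le y_n\}\right]
 &= \frac{1}{\Delta_n}\delta_n^{-\ell}\int_{\R^m}(\D^\ell\tilde K)\!\left(\frac{q_\alpha-x}{\delta_n}\right)g(x,y_n)\,dx\\
 &= \delta_n^{-\ell}\int_{\R^m}(\D^\ell\tilde K)(u)\,g(q_\alpha-\delta_n\cdot u,\,y_n)\,du,
\end{align*}
where in the first equality I conditioned on $X_1=x$ and used $\E[\mathbb{I}\{Y\le y\}\mid X=x]f_X(x)=g(x,y)$, and in the second the Jacobian $\Delta_n$ cancels. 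For $|\ell|=0$ this limits to $g(q_\alpha,y)$ by continuity (Assumption~\ref{assu:dist}(i)) and bounded convergence on the compact support of $\tilde K$. For $|\ell|=1$, I would integrate by parts in the coordinate $u_j$ along which the derivative acts, transferring $\D^\ell$ from $\tilde K$ onto $g$; the multiplicative $\delta_{n,j}$ produced by the chain rule exactly cancels the prefactor $\delta_n^{-\ell}$, and the boundary terms vanish by the compact support / decay condition of Assumption~\ref{assu:kernel1}(vi). The resulting integral $\int\tilde K(u)\D^\ell g(q_\alpha-\delta_n\cdot u,y_n)\,du$ then converges to $\D^\ell g(q_\alpha,y)$ by continuity in $x$ (Assumption~\ref{assu:dist}(ii)) and uniform continuity in $y$ (Assumption~\ref{assu:dist}(iii)). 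Setting the indicator identically one recovers \eqref{lem1:eqn:10} with $g$ replaced by $f_X$.

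For \eqref{lem1:eqn:20} and \eqref{lem1:eqn:2}, the i.i.d.\ assumption gives
\[
\Var\!\left(\tfrac{1}{n\Delta_n}\textstyle\sum_{i=1}^n\D^\ell W_{n,i}(q_\alpha)\mathbb{I}\{Y_i\le y_n\}\right)=\tfrac{1}{n\Delta_n^2}\,\Var\!\left(\D^\ell W_{n,1}(q_\alpha)\mathbb{I}\{Y_1\le y_n\}\right).
\]
The same substitution makes the second-moment term $\delta_n^{-2\ell}\Delta_n\int[\D^\ell\tilde K(u)]^2 g(q_\alpha-\delta_n\cdot u,y_n)\,du$, which is asymptotic to $\delta_n^{-2\ell}\Delta_n\,g(q_\alpha,y)\int[\D^\ell\tilde K]^2\,du$. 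The previous paragraph shows the mean is $\Theta(\Delta_n)$, so its square is $\mathcal{O}(\Delta_n^2)$ and is dominated by the second-moment contribution after multiplying by $n\delta_n^{2\ell}\Delta_n$ (using $n\Delta_n\to\infty$). This yields \eqref{lem1:eqn:2}, and specializing to the indicator identically one gives \eqref{lem1:eqn:20}. Finally, \eqref{lem1:eqn:3} uses the identical substitution with $\tilde K$ replaced by $\tilde K^\gamma$; no integration by parts is required, and the limit $g(q_\alpha,y)\int\tilde K^\gamma\,du$ follows from pointwise convergence together with domination by $(\sup_{x,y}g)\,\tilde K^\gamma$ on the compact support of $\tilde K$.

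The main obstacle I expect is the simultaneous passage $\delta_n\to 0$ and $y_n\to y$: both arguments of $g$ move with $n$, so I need an integrable dominator that is uniform in $n$. Assumption~\ref{assu:dist}(iii) supplies the uniformity in the $y$-argument on $\mathcal Y$, Assumption~\ref{assu:dist}(iv) supplies the $L^1$ envelope needed after integration by parts, and the Parzen-Rosenblatt decay / compact support in Assumption~\ref{assu:kernel1}(vi) furnishes the envelope on the rescaled spatial domain while simultaneously ensuring that the integration-by-parts boundary terms vanish coordinate by coordinate. Once these three ingredients are in place, every interchange of limit and integral is routine.
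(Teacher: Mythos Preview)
Your proposal is correct and follows essentially the same route as the paper: substitute to put the kernel at unit scale, integrate by parts when $|\ell|=1$ to transfer the derivative onto $g$, then pass to the limit using continuity in $x$ (Assumption~\ref{assu:dist}(ii)) together with uniform continuity in $y$ (Assumption~\ref{assu:dist}(iii)). The only cosmetic difference is that the paper packages the limit step as a standalone multivariate Bochner-type lemma (phrased under the decay condition $\|x\|\,|H(x)|\to 0$) rather than invoking bounded convergence on the compact support of $K$ directly; one small slip in your write-up is that the squared-mean contribution to the variance vanishes because $\delta_n^{2\ell}\Delta_n\to 0$, not because $n\Delta_n\to\infty$.
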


As demonstrated in Equations~\eqref{lem1:eqn:10} and \eqref{lem1:eqn:20}, in the case where $|\ell|=0$, the mean of $\bar{Q}_n(q_\alpha)$ converges to $f_X(q_\alpha)$, and the variance of $\bar{Q}_n(q_\alpha)$ is of the order $(n\Delta_n)^{-1}$. Similarly, as shown in Equations~\eqref{lem1:eqn:1} and~\eqref{lem1:eqn:2}, when $|\ell|=0$, the mean of $\bar{R}_n(q_\alpha,y)$ tends to $g(q_\alpha,y)$, and the variance of $\bar{R}_n(q_\alpha,y)$ is of the order $(n\Delta_n)^{-1}$. Additionally, Lemma~\ref{lemmaofMV} provides a more comprehensive result that encompasses cases where $|\ell|=1$ and any converging sequence $y_n\to y$. 
In the existing literature, the kernel estimation of $\D^\ell f_X(q_\alpha)$ often employs $\frac{1}{n \Delta_n } \sum_{i=1}^n \D^\ell W_{n,i}(q_\alpha)$, while the kernel estimation of $\D^\ell g(q_\alpha,y)$ often employs $\frac{1}{n \Delta_n } \sum_{i=1}^n \D^\ell W_{n,i}(q_\alpha) \mathbb{I}\{Y_i\leq y\}$, as discussed in \cite{Pagan1999nonparametric, bhattacharya1967estimation, schuster1969estimation}. Consequently, Lemma \ref{lemmaofMV} furnishes the means and variances of the kernel estimators for $\D^\ell f_X(q_\alpha)$ and $\D^\ell g(q_\alpha,y)$ respectively, aligning with this established intuition in the literature. 

As a direct consequence of Equations \eqref{lem1:eqn:10} and \eqref{lem1:eqn:1}, Errors III--1 and III--2 converge to zero as $n$ approaches infinity. This implies that $\bar{R}_n(q_\alpha,y)$ and $\bar{Q}_n(q_\alpha)$ serve as asymptotically unbiased estimators of $g(q_\alpha,y)$ and $f_X(q_\alpha)$, respectively. In the subsequent lemma, we establish that, utilizing Chebyshev's inequality (Theorem 1.6.4 of \citealt{durrett2019probability}) and Lemma~\ref{lemmaofMV}, Errors II--1 and II--2 tend to zero in probability as $n$ tends to infinity. The comprehensive proof can be found in Appendix~\ref{Lemma2}. 

\begin{lemma}\label{2error0}
	Suppose that Assumptions \ref{assu:dist}--\ref{assu:kernel3} hold. Then, for $y\in\mathcal{Y}$, we have 
	$\bar{R}_n(   q_\alpha  ,y)  -  \E[\bar{R}_n(   q_\alpha  ,y)]\to 0$ in probability and $ \bar{Q}_n(   q_\alpha )  -  \E[\bar{Q}_n(   q_\alpha )] \to 0$ in probability as $n \rightarrow\infty$. 
\end{lemma}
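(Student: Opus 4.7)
The plan is to deduce convergence in probability directly from the variance estimates supplied by Lemma~\ref{lemmaofMV}, via a straightforward application of Chebyshev's inequality. The key observation is that both $\bar{R}_n(q_\alpha,y)$ and $\bar{Q}_n(q_\alpha)$ are of the form $\frac{1}{n\Delta_n}\sum_{i=1}^n \D^\ell W_{n,i}(q_\alpha)[\mathbb{I}\{Y_i \leq y\}]$ with $\ell = 0$, i.e., the $|\ell|=0$ case of the statements already proved.

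First, applying Equation~\eqref{lem1:eqn:20} with $\ell=0$ gives
\[
n\Delta_n \cdot \Var\bigl(\bar{Q}_n(q_\alpha)\bigr) \;\longrightarrow\; f_X(q_\alpha)\int_{\R^m}\Bigl[\prod_{j=1}^m K(x_j)\Bigr]^2 \dd x,
\]
and similarly Equation~\eqref{lem1:eqn:2} with $\ell=0$ and the constant sequence $y_n \equiv y$ gives
\[
n\Delta_n \cdot \Var\bigl(\bar{R}_n(q_\alpha,y)\bigr) \;\longrightarrow\; g(q_\alpha,y)\int_{\R^m}\Bigl[\prod_{j=1}^m K(x_j)\Bigr]^2 \dd x.
\]
Since the right-hand sides are finite constants (the integrals are finite by Assumption~\ref{assu:kernel1}(vi), and $f_X(q_\alpha)$, $g(q_\alpha,y)$ are finite by Assumption~\ref{assu:dist}), and since $n\Delta_n \to \infty$ by Assumption~\ref{assu:kernel3}, we conclude that $\Var(\bar{Q}_n(q_\alpha)) = \mathcal{O}((n\Delta_n)^{-1}) \to 0$ and $\Var(\bar{R}_n(q_\alpha,y)) = \mathcal{O}((n\Delta_n)^{-1}) \to 0$.

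Finally, I would invoke Chebyshev's inequality: for any $\varepsilon>0$,
\[
\pr\bigl\{\bigl|\bar{Q}_n(q_\alpha) - \E[\bar{Q}_n(q_\alpha)]\bigr| > \varepsilon\bigr\} \;\leq\; \frac{\Var(\bar{Q}_n(q_\alpha))}{\varepsilon^2} \;\longrightarrow\; 0,
\]
and analogously for $\bar{R}_n(q_\alpha,y)$. This yields convergence in probability of Errors II--1 and II--2 to zero, completing the proof. There is no real obstacle here beyond verifying that Lemma~\ref{lemmaofMV} applies in the $\ell=0$ form we need; the lemma was stated precisely to cover this situation, so the argument reduces to plugging in and citing Assumption~\ref{assu:kernel3} to ensure $n\Delta_n\to\infty$ kills the variance.
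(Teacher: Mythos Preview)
Your proposal is correct and matches the paper's own proof essentially line for line: the paper applies Chebyshev's inequality, cites Lemma~\ref{lemmaofMV} to conclude $\Var(\bar{R}_n(q_\alpha,y))\to 0$ (and likewise for $\bar{Q}_n$) via $n\Delta_n\to\infty$, and draws the same conclusion. Your write-up is slightly more explicit about invoking the $\ell=0$ case of Equations~\eqref{lem1:eqn:20} and~\eqref{lem1:eqn:2}, but the argument is identical.
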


In next lemma, we prove that Errors I--1 and I--2 converge to zero in probability as $n\to\infty$. The more detailed proofs are included in the appendix~\ref{Lemma3}. 

\begin{lemma}\label{2error}
	Suppose that Assumptions \ref{assu:dist}--\ref{assu:kernel3} hold, and for all $\ell \in \N^m$ such that $|\ell|=1$ we have $n\delta_{n}^{2\ell} \Delta_n \to \infty$ as $n\to\infty$. 
	Then, for $y\in\mathcal{Y}$, we have $\bar{R}_n(   \hat{q}_\alpha  ,y) -  \bar{R}_n(   q_\alpha  ,y)\to 0$ in probability and $\bar{Q}_n(   \hat{q}_\alpha ) - \bar{Q}_n(   q_\alpha )\to 0$ in probability as $n \rightarrow\infty$. 
\end{lemma}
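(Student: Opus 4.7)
The plan is a first-order expansion of $\bar{R}_n(\cdot,y)$ and $\bar{Q}_n(\cdot)$ around $q_\alpha$. Under Assumption~\ref{assu:kernel1}(vi), the product kernel $\prod_j K(x_j)$ is continuously differentiable, hence so are $\bar R_n(\cdot,y)$ and $\bar Q_n(\cdot)$ in $x$. The mean-value theorem then yields, for some random $\tilde q_n$ on the segment joining $q_\alpha$ and $\hat q_\alpha$,
\begin{equation*}
\bar{R}_n(\hat q_\alpha, y) - \bar{R}_n(q_\alpha, y) = (\hat q_\alpha - q_\alpha)^\top \nabla_x \bar{R}_n(\tilde q_n, y),
\end{equation*}
and analogously for $\bar Q_n$. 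It then suffices to control the two factors on the right.

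The first factor is handled by the classical theory of sample quantiles (Sections 2.3--2.4 of \cite{Serfling1980}): since Assumption~\ref{assu:dist}(i) makes the marginal densities $f_{X_j}$ positive and continuous at $q_{j,\alpha_j}$, one obtains $\hat q_{j,\alpha_j} - q_{j,\alpha_j} = \mathcal{O}_p(n^{-1/2})$ for each $j$, hence $\|\hat q_\alpha - q_\alpha\| = \mathcal{O}_p(n^{-1/2}) = o_p(1)$ and in particular $\tilde q_n \to q_\alpha$ in probability. For the second factor, Lemma~\ref{lemmaofMV} applied with $|\ell|=1$ (taking $\ell = e_k$, the $k$-th coordinate vector in $\N^m$) gives $\E[\D^{e_k}\bar R_n(q_\alpha,y)] \to \D^{e_k} g(q_\alpha,y)$ and $\Var\bigl(\D^{e_k}\bar R_n(q_\alpha,y)\bigr) = \mathcal{O}\bigl((n\delta_{n,k}^2\Delta_n)^{-1}\bigr)$. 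Precisely because of the new hypothesis $n\delta_n^{2\ell}\Delta_n \to \infty$ for $|\ell|=1$, this variance vanishes, so Chebyshev's inequality (exactly as deployed in Lemma~\ref{2error0}) gives $\D^{e_k}\bar R_n(q_\alpha,y) = \D^{e_k} g(q_\alpha,y) + o_p(1) = \mathcal{O}_p(1)$, and likewise for $\bar Q_n$.

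The main obstacle is that $\tilde q_n$ is a random point, so the pointwise bound at $q_\alpha$ does not transfer automatically to $\tilde q_n$. I would bridge this by establishing local uniform consistency, namely $\sup_{x\in N}|\D^{e_k}\bar R_n(x,y) - \D^{e_k} g(x,y)| \to 0$ in probability on a fixed compact neighborhood $N$ of $q_\alpha$. Such a uniform statement can be reached by combining pointwise consistency on a sufficiently dense finite grid (Chebyshev plus Lemma~\ref{lemmaofMV}) with equicontinuity of the family $\{\D^{e_k}\bar R_n(\cdot,y)\}_n$, which in turn uses the boundedness and integrability of $\D^\ell \prod_j K(x_j)$ from Assumption~\ref{assu:kernel1}(vi) and the continuity of $\D^\ell g$ from Assumption~\ref{assu:dist}(ii). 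Since $\tilde q_n \in N$ with probability tending to one and $\D^{e_k} g$ is continuous at $q_\alpha$, we then conclude $\nabla_x \bar R_n(\tilde q_n, y) = \mathcal{O}_p(1)$, so that
\begin{equation*}
\bigl|\bar{R}_n(\hat q_\alpha, y) - \bar{R}_n(q_\alpha, y)\bigr| \le \|\hat q_\alpha - q_\alpha\|\cdot\|\nabla_x \bar R_n(\tilde q_n, y)\| = \mathcal{O}_p(n^{-1/2})\cdot\mathcal{O}_p(1) = o_p(1),
\end{equation*}
which yields the claim for $\bar R_n$. The proof of $\bar Q_n(\hat q_\alpha) - \bar Q_n(q_\alpha) = o_p(1)$ is identical, using the $f_X$-parts of Lemma~\ref{lemmaofMV} (Equations~\eqref{lem1:eqn:10}--\eqref{lem1:eqn:20}) in place of the $g$-parts.
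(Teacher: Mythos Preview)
Your strategy is the paper's: a first-order expansion of $\bar R_n(\cdot,y)$ around $q_\alpha$, Lemma~\ref{lemmaofMV} with $|\ell|=1$ plus Chebyshev (exactly as in Lemma~\ref{2error0}) to control the gradient, and consistency of $\hat q_\alpha$ from \cite{Serfling1980} to kill the increment. The paper Taylor-expands at the \emph{fixed} point $q_\alpha$, so Lemma~\ref{lemmaofMV} applies directly to the leading term, and simply labels the remainder $o_p$ of that term; you instead take the exact mean-value form and correctly note that this forces you to control $\nabla_x\bar R_n$ at a random $\tilde q_n$.

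Your proposed bridge, however, has a gap. Equicontinuity of $\{\D^{e_k}\bar R_n(\cdot,y)\}_n$ does not follow from boundedness and integrability of $\D^\ell\prod_j K$: differentiating $\D^{e_k}W_{n,i}(x)$ once more in $x_k$ picks up an extra factor $\delta_{n,k}^{-1}$ from the chain rule, so the Lipschitz constant of $x\mapsto\D^{e_k}\bar R_n(x,y)$ is of order $\delta_{n,k}^{-2}$ and blows up as $\delta_n\to 0$. A fixed grid plus equicontinuity therefore cannot deliver the uniform statement. Local uniform consistency of kernel derivative estimators \emph{is} true under the paper's assumptions, but it needs a covering whose mesh shrinks with $\delta_n$ (or an empirical-process maximal inequality), not the argument you sketch. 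The paper avoids the issue by never leaving $q_\alpha$ in the main term, at the price of an informal $o_p$ remainder; your route is more honest about the difficulty but would require a sharper tool than equicontinuity to close.
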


Lemma \ref{2error} shows that the error caused by the quantile estimation (i.e., $\hat{q}_\alpha-q_\alpha$ in Step 1) will vanishes as $n\to\infty$. The following remark discusses the assumption $n\delta_{n}^{2\ell} \Delta_n \to \infty$. 

\begin{remark}\label{remark1}
As demonstrated in Lemma~\ref{lemmaofMV}, the assumption $n\delta_{n}^{2\ell} \Delta_n \to \infty$ guarantees the variance of both $\frac{1}{n \Delta_n } \sum_{i=1}^n \D^\ell W_{n,i}(q_\alpha)$ and $\frac{1}{n \Delta_n } \sum_{i=1}^n \D^\ell W_{n,i}(q_\alpha) \mathbb{I}\{Y_i\leq y\}$ converging to zero. This requirement implies that the bandwidths $\delta_{n,j}$, $j=1,\ldots,m$, should not converge to zero too rapidly. For instance, when $\delta_{n,j}=b_n$ for $j=1,\ldots,m$, the assumption $n\delta_{n}^{2\ell} \Delta_n \to \infty$ as $n\to\infty$ can be expressed as $n b_{n}^{2+m} \to \infty$ as $n\to\infty$. This implies that if $b_n$ converges to zero at a slower rate than $n^{-1/(2+m)}$, both Errors II--1 and II--2 converge to zero in probability as $n\to\infty$.
\end{remark}

By combining the aforementioned lemmas and utilizing Slutsky's theorem (Section 1.5.4 of \citealt{Serfling1980}), we can directly establish the following theorem regarding the consistency of $\hat{F}_{n}(y, \hat{q}_\alpha)$. 

\begin{theorem}\label{consistencyofFbar}
	Suppose that Assumptions \ref{assu:dist}--\ref{assu:kernel3} hold, and for all $\ell \in\N^m$ such that $|\ell|=1$ we have $n\delta_{n}^{2\ell} \Delta_n \to \infty$ as $n\to\infty$. 
	Then, for $y\in\mathcal{Y}$, we have $\hat{F}_{n}(y,  \hat{q}_\alpha ) \rightarrow F_{Y|X}(y|  q_\alpha )$ in probability as $n\rightarrow\infty$.
\end{theorem}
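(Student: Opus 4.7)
The plan is to exploit the ratio structure $\hat F_n(y,x) = \bar R_n(x,y)/\bar Q_n(x)$ and show that numerator and denominator converge in probability to $g(q_\alpha,y)$ and $f_X(q_\alpha)$ respectively, then invoke Slutsky's theorem together with the continuous mapping theorem to pass the convergence through the quotient. Since $F_{Y|X}(y|q_\alpha)=g(q_\alpha,y)/f_X(q_\alpha)$ by the definition of the conditional distribution, this will give precisely the desired limit.

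First I would fix $y\in\mathcal{Y}$ and write out the telescoping decompositions \eqref{num} and \eqref{num01} for $\bar R_n(\hat q_\alpha,y)-g(q_\alpha,y)$ and $\bar Q_n(\hat q_\alpha)-f_X(q_\alpha)$. The strategy is then to dispatch each of the six error terms by citing the lemmas already proved. For Error~III--1 and Error~III--2 (the bias terms), I would apply Lemma~\ref{lemmaofMV} with $|\ell|=0$ and the constant sequence $y_n\equiv y$: Equation~\eqref{lem1:eqn:10} gives $\E[\bar Q_n(q_\alpha)]\to f_X(q_\alpha)$, while Equation~\eqref{lem1:eqn:1} gives $\E[\bar R_n(q_\alpha,y)]\to g(q_\alpha,y)$, so both Error~III--1 and Error~III--2 are deterministic sequences tending to zero. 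For Error~II--1 and Error~II--2 (the variance terms around their means), Lemma~\ref{2error0} directly gives convergence to zero in probability. For Error~I--1 and Error~I--2 (the terms arising from substituting the sample quantile $\hat q_\alpha$ for $q_\alpha$), Lemma~\ref{2error} applies because the hypothesis $n\delta_n^{2\ell}\Delta_n\to\infty$ for all $|\ell|=1$ is exactly the condition imposed in that lemma.

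Summing the three error terms and applying the triangle inequality (using that the sum of quantities converging to zero in probability converges to zero in probability), I obtain
\begin{equation*}
\bar R_n(\hat q_\alpha,y) \xrightarrow{p} g(q_\alpha,y), \qquad \bar Q_n(\hat q_\alpha) \xrightarrow{p} f_X(q_\alpha),
\end{equation*}
as $n\to\infty$. Because Assumption~\ref{assu:dist}(i) guarantees $f_X(q_\alpha)>0$, the map $(a,b)\mapsto a/b$ is continuous at $(g(q_\alpha,y),f_X(q_\alpha))$, and so Slutsky's theorem (jointly applied, or equivalently the continuous mapping theorem for convergence in probability) yields
\begin{equation*}
\hat F_n(y,\hat q_\alpha) \;=\; \frac{\bar R_n(\hat q_\alpha,y)}{\bar Q_n(\hat q_\alpha)} \;\xrightarrow{p}\; \frac{g(q_\alpha,y)}{f_X(q_\alpha)} \;=\; F_{Y|X}(y\,|\,q_\alpha),
\end{equation*}
completing the argument.

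In truth, the theorem is largely a bookkeeping exercise once the three lemmas are in hand, so I do not expect any genuine obstacle at this stage. The only place requiring a little care is ensuring that one is permitted to divide, which is where Assumption~\ref{assu:dist}(i) (positivity of $f_X$ on $\mathcal{X}$) is essential; without it, Slutsky's ratio step would be invalid. The real technical difficulty of the consistency argument has already been absorbed into Lemma~\ref{2error}, whose bandwidth condition $n\delta_n^{2\ell}\Delta_n\to\infty$ is the extra hypothesis imported into the theorem statement.
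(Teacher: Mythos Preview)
Your proposal is correct and matches the paper's own approach exactly: the paper simply states that the theorem follows by combining Lemmas~\ref{lemmaofMV}--\ref{2error} with Slutsky's theorem, and you have spelled out precisely how each of the six error terms in the decompositions~\eqref{num}--\eqref{num01} is handled by the respective lemma before passing to the ratio. There is nothing to add.
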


Theorem \ref{consistencyofFbar} establishes the consistency of $\hat{F}_{n}(y, \hat{q}_\alpha)$ as an estimator of $F_{Y|X}(y|q_\alpha)$. Based on this theorem, we can demonstrate the consistency of the two-step estimator $\hat{Y}$. In fact, it is evident that $\hat{Y}$ satisfies the following equation:
\begin{equation*}
	\hat{Y}= \inf\{y\in\R : \hat{F}_{n}(y,  \hat{q}_\alpha )\geq \beta\}.
\end{equation*}
Let $\hat{F}^{-1}_{n}(z, \hat{q}_\alpha) = \inf\{y\in\R : \hat{F}_{n}(y, \hat{q}_\alpha) \geq z\}$. Consequently, we have $\hat{Y} = \hat{F}^{-1}_{n}(\beta, \hat{q}_\alpha)$. As demonstrated in Equation~\eqref{ConQuan}, we have $\cvv = F^{-1}_{Y|X}(\beta|q_\alpha)$. Hence, by taking the inverse of both $\hat{F}_{n}(y, \hat{q}_\alpha)$ and $F_{Y|X}(y|q_\alpha)$ in Theorem \ref{consistencyofFbar}, we can establish the convergence of $\hat{Y}$ to $\cvv$. The detailed proof is provided in Appendix~\ref{Theorem2}.

\begin{theorem}\label{thm34}
	Suppose that Assumptions \ref{assu:dist}--\ref{assu:kernel3} hold, and for all $\ell \in\N^m$ such that $|\ell|=1$ we have $n\delta_{n}^{2\ell} \Delta_n \to \infty$ as $n\to\infty$. 
	Then, we have $\hat{Y} \rightarrow \cvv$ in probability as $n\rightarrow\infty$. 
\end{theorem}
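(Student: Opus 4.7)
The plan is to deduce convergence of the inverse $\hat{Y} = \hat{F}_n^{-1}(\beta, \hat{q}_\alpha)$ from the pointwise-in-$y$ convergence of $\hat{F}_n(y,\hat{q}_\alpha)$ already proved in Theorem~\ref{consistencyofFbar}, exploiting strict monotonicity of the target CDF $F_{Y|X}(\cdot\,|\,q_\alpha)$ at $\cvv$. The key structural fact I will use is that, under Assumption~\ref{assu:dist}, the conditional density $f_{Y|X}(y\,|\,q_\alpha) = f(q_\alpha,y)/f_X(q_\alpha)$ is strictly positive throughout $\mathcal{Y}$, so $F_{Y|X}(\cdot\,|\,q_\alpha)$ is strictly increasing on $\mathcal{Y}$ and $\cvv = F_{Y|X}^{-1}(\beta\,|\,q_\alpha)$ is the unique solution of Equation~\eqref{defCoVaRGeneral}.

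Fix an arbitrary $\varepsilon>0$, small enough that $[\cvv-\varepsilon,\cvv+\varepsilon]\subset\mathcal{Y}$; this is possible because $\mathcal{Y}$ is a neighborhood of $\cvv$. By strict monotonicity, set $a_\varepsilon = F_{Y|X}(\cvv-\varepsilon\,|\,q_\alpha)$ and $b_\varepsilon = F_{Y|X}(\cvv+\varepsilon\,|\,q_\alpha)$, so $a_\varepsilon < \beta < b_\varepsilon$. Applying Theorem~\ref{consistencyofFbar} at the two fixed points $y=\cvv-\varepsilon$ and $y=\cvv+\varepsilon$ yields $\hat{F}_n(\cvv-\varepsilon,\hat{q}_\alpha)\to a_\varepsilon$ and $\hat{F}_n(\cvv+\varepsilon,\hat{q}_\alpha)\to b_\varepsilon$ in probability. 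Consequently the event
\[
A_n \ =\ \Big\{\hat{F}_n(\cvv-\varepsilon,\hat{q}_\alpha) < \beta < \hat{F}_n(\cvv+\varepsilon,\hat{q}_\alpha)\Big\}
\]
has probability tending to one.

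On $A_n$ I will bracket $\hat{Y}$ using the definition $\hat{Y} = \inf\{y\in\R : \hat{F}_n(y,\hat{q}_\alpha)\ge \beta\}$ and the fact that $\hat{F}_n(\cdot,\hat{q}_\alpha)$ is nondecreasing in $y$ (since each indicator $\mathbb{I}\{Y_i\le y\}$ is nondecreasing in $y$ and the weights $w_i(\hat{q}_\alpha)$ are nonnegative). Indeed, $\hat{F}_n(\cvv-\varepsilon,\hat{q}_\alpha) < \beta$ forces $\hat{Y} \ge \cvv-\varepsilon$, while $\hat{F}_n(\cvv+\varepsilon,\hat{q}_\alpha) > \beta$ forces $\hat{Y} \le \cvv+\varepsilon$. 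Hence $\{|\hat{Y}-\cvv|\le\varepsilon\}\supset A_n$, so $\pr\{|\hat{Y}-\cvv|>\varepsilon\} \le 1 - \pr(A_n) \to 0$. Since $\varepsilon$ was arbitrary, $\hat{Y}\to\cvv$ in probability.

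The routine but slightly delicate point is the monotonicity-based bracketing in the last step: because $\hat{F}_n(\cdot,\hat{q}_\alpha)$ is a right-continuous step function, I should verify the strict/nonstrict inequalities carefully when inverting (which is exactly why choosing strict inequalities $a_\varepsilon<\beta<b_\varepsilon$ in the event $A_n$ is convenient). The only genuine ingredient beyond bookkeeping is the strict positivity of the conditional density, which gives uniqueness of $\cvv$ and makes the two levels $a_\varepsilon,b_\varepsilon$ strictly separated from $\beta$; everything else is a direct application of Theorem~\ref{consistencyofFbar} at two fixed points together with a union bound.
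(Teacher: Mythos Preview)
Your proposal is correct and follows essentially the same argument as the paper's proof: fix $\varepsilon>0$, use strict monotonicity of $F_{Y|X}(\cdot\,|\,q_\alpha)$ at $\cvv$ to get two strictly separated levels, apply Theorem~\ref{consistencyofFbar} at the two points $\cvv\pm\varepsilon$, and combine via a union bound to show the bracketing event has probability tending to one. The only cosmetic difference is that the paper invokes Section~1.1.4 of \cite{Serfling1980} for the equivalence between the bracketing of $\hat{F}_n$ and the bracketing of $\hat{Y}$, whereas you argue it directly from monotonicity of $\hat{F}_n(\cdot,\hat{q}_\alpha)$ (valid since $K\ge 0$ under Assumption~\ref{assu:kernel1}).
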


As pointed out in the Introduction, \cite{huanghong23} employ a batching estimator to estimate $\cv$ with a univariate-quantile condition. Their approach only utilizes the data conditional on $\{X=  \hat{q}_\alpha \}$, whereas our kernel method incorporates the entire dataset and assigns higher weights to data points in close proximity to $  \hat{q}_\alpha $. The consistency established in Theorem \ref{thm34} does not demonstrate the advantages of our two-step estimator nor provide guidelines for selecting the bandwidths $\delta_{n,j}$, $j=1,\ldots,m$. To address these issues, we need to analyze the rate of convergence of the two-step estimator and investigate its asymptotic distribution. 

\subsection{Asymptotic Normality}\label{asymptotic_normality}

To further examine the asymptotic normality of the two-step estimator, it is necessary to appropriately scale (normalize) the error term $(\hat{Y}-\cvv)$. Drawing inspiration from the conventional kernel estimation of conditional quantiles and empirical distributions, the scaling parameter should be chosen as $(n\Delta_n)^{1/2}$. In the subsequent analysis, we scale the errors of both $(\hat{Y}-\cvv)$ and $(\hat{F}_n-\beta)$ by $(n\Delta_n)^{1/2}$ and establish the asymptotic normality, thereby validating the above heuristic argument. 

To establish the asymptotic normality of $(n\Delta_n)^{1/2}(\hat{Y}-\cvv)$, we employ a transformation that allows us to analyze the empirical distribution $\hat{F}_n$. Let $y_n=\cvv + z \sigma_Y(n \Delta_n)^{-1/2}$, for $z \in\R $ and $\sigma_Y>0$, such that $y_n\to \cvv$ as $n\to\infty$ by Assumption~\ref{assu:kernel3}. Consequently, we obtain
\begin{eqnarray}
\lefteqn{ \pr \left\{ \frac{\sqrt{n\Delta_n}}{\sigma_Y} \big( \hat{Y}-\cvv \big) \leq z \right\} } \label{CLT:results} \\
&=&  \pr\Big\{ \hat{Y} \leq y_n \Big\} \ = \ \pr\Big\{ \hat{F}_n(y_n,   \hat{q}_\alpha )\geq \beta \Big\} \ =\ \pr \left\{ \sqrt{n\Delta_n} \Big[\hat{F}_n(y_n,   \hat{q}_\alpha )- \beta  \Big]\geq 0 \right\}, ~~~~ \label{CLT:eqn10}
\end{eqnarray}
where $\sigma_Y$ is the standard deviation of $\hat{Y}$. Our objective is to prove that the final term in Equation~\eqref{CLT:eqn10} towards the distribution function of a normally distributed random variable as $n$ approaches infinity, which enables us to infer that Equation~\eqref{CLT:results} follows an asymptotic normal distribution as $n$ tends to infinity.

Recalling that $\hat F_n(y_n,  \hat{q}_\alpha )= \bar{R}_n(  \hat{q}_\alpha ,y_n)/\bar{Q}_n(  \hat{q}_\alpha )$ and $\beta=F_{Y|X}(\cv,  q_\alpha )=g(  q_\alpha ,\cv)/f_X(  q_\alpha )$, we can express the following: 
\begin{eqnarray}
	\lefteqn{ \hat F_n(y_n,  \hat{q}_\alpha ) - \beta} \nonumber \\
	&=& \frac{1}{\bar{Q}_n(  \hat{q}_\alpha) f_X(  q_\alpha )}  \Big[ f_X (  q_\alpha )\left(\bar{R}_n(  \hat{q}_\alpha , y_n)-g(  q_\alpha , \cv)\right)-g(  q_\alpha , \cv)\left(\bar{Q}_n(  \hat{q}_\alpha )-f_X(  q_\alpha )\right) \Big] \nonumber \\
	&=& \frac{1}{\bar{Q}_n(  \hat{q}_\alpha) f_X(  q_\alpha )}  \Big[ f_X (  q_\alpha )\big( \mbox{Errors I--1 $+$ II--1 $+$ III--1}\big)-g(  q_\alpha , \cv)\big( \mbox{Errors I--2 $+$ II--2 $+$ III--2}\big) \Big], \nonumber
\end{eqnarray}
where Error I--1 $=\bar{R}_n(  \hat{q}_\alpha ,y_n) -\bar{R}_n(  q_\alpha ,y_n)$, Error II--1 $=\bar{R}_n(  q_\alpha ,y_n)- \E[\bar{R}_n(  q_\alpha ,y_n)] $, Error III--1 $=\E[ \bar{R}_n(  q_\alpha ,y_n) ] - g(  q_\alpha ,\cv)$, and Errors I--2, II--2, and III--2 are given by Equation~\eqref{num01}. It is important to note that Errors I--1, I--2, and I--3 in this subsection slightly differ from those in Equation~\eqref{num} due to the introduction of the sequence $y_n$ here. However, to avoid introducing excessive notation, we will continue to use the terms Error I--1, Error I--2, and Error I--3 in this subsection.
By substituting the above expression into Equation~\eqref{CLT:eqn10} and obtaining the asymptotic results for the following three components:
\begin{enumerate}[label={\it Component~(\Roman*)}, align=left, leftmargin=*]
	\item $\sqrt{n\Delta_n} \Big[ f_X (  q_\alpha )\big( \mbox{Error I--1}\big)-g(  q_\alpha , \cv)\big( \mbox{Error I--2}\big) \Big]$, 
	\item $\sqrt{n\Delta_n} \Big[ f_X (  q_\alpha )\big( \mbox{Error II--1}\big)-g(  q_\alpha , \cv)\big( \mbox{Error II--2}\big) \Big]$,  
	\item $\sqrt{n\Delta_n} \Big[ f_X (  q_\alpha )\big( \mbox{Error III--1}\big)-g(  q_\alpha , \cv)\big( \mbox{Error III--2}\big) \Big]$, 
\end{enumerate}
we can analyze the asymptotic result of the last term in Equation~\eqref{CLT:eqn10}. Subsequently, we develop the following three lemmas to examine Components~(I)--(III) individually. 

The following lemma establishes the convergence rates of Errors III--1 and III--2, and then obtains the asymptotic result of Component~(III). The detailed proof is provided in Appendix~\ref{Lemma4}.

\begin{lemma}\label{CRerrorIII}
	Suppose that Assumptions \ref{assu:dist}--\ref{assu:kernel3} hold. Then, we have 
	\begin{equation}\label{proofofCLT30}
		\E\Big[\bar{Q}_n(  q_\alpha ) \Big] - f_X(  q_\alpha ) \ = \ \frac12 \sum_{j=1}^m  \delta_{n,j}^2 \frac{\partial^2 f_X}{\partial x_j^2}(q_\alpha)   \int_{\R} t^2K( t)\dd t  + \sum_{j=1}^m o( \delta_{n,j}^2 ) \quad \mbox{as $n\to\infty$},
	\end{equation}
	and for any sequence $y_n\in\mathcal{Y}$ such that $y_n\to y\in\mathcal{Y}$ as $n \to \infty$, we have 
	\begin{eqnarray}
\lefteqn{ \E\Big[ \bar{R}_n(  q_\alpha ,y_n) \Big] - g(  q_\alpha ,y) }\nonumber \\
&=&  \frac12 \sum_{j=1}^m  \delta_{n,j}^2 \frac{\partial^2 g}{\partial x_j^2}(q_\alpha,y)   \int_{\R} t^2K( t)\dd t  +  f(  q_\alpha ,y)  (y_n-y) +  \sum_{j=1}^m o( \delta_{n,j}^2 ) + o(y_n-y) \quad \mbox{as $n\to\infty$}.  ~~~~~\label{proofofCLT3}
	\end{eqnarray}
\end{lemma}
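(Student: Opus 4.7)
The plan is to reduce both bias expressions to integrals against the product kernel, substitute to match the bandwidth scales, and then Taylor-expand the smooth factor at $q_\alpha$. For Equation~\eqref{proofofCLT30}, I would start from
\[
\E\bigl[\bar{Q}_n(q_\alpha)\bigr]=\frac{1}{\Delta_n}\int_{\R^m}\prod_{j=1}^m K\!\left(\frac{q_{j,\alpha_j}-x_j}{\delta_{n,j}}\right) f_X(x)\,\dd x,
\]
apply the change of variables $t_j=(q_{j,\alpha_j}-x_j)/\delta_{n,j}$, whose Jacobian cancels the $1/\Delta_n$ prefactor, and obtain
\[
\E\bigl[\bar{Q}_n(q_\alpha)\bigr]=\int_{\R^m}\Bigl[\textstyle\prod_j K(t_j)\Bigr]\, f_X\!\bigl(q_\alpha-\delta_n\odot t\bigr)\,\dd t,
\]
where $\delta_n\odot t=(\delta_{n,1}t_1,\ldots,\delta_{n,m}t_m)$. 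A second-order Taylor expansion of $f_X$ at $q_\alpha$ yields a zeroth-order, first-order, and quadratic term plus a remainder. Integrating against the product kernel, Assumption~\ref{assu:kernel1}(ii)(iv)(v) collapse the constant term to $f_X(q_\alpha)$, kill the linear term because $\int tK(t)\,\dd t=0$, and eliminate every off-diagonal quadratic by the same zero-first-moment property; only the diagonal quadratic terms survive, giving $\tfrac12\sum_{j=1}^m \delta_{n,j}^2\,\tfrac{\partial^2 f_X}{\partial x_j^2}(q_\alpha)\int t^2K(t)\,\dd t$. The compact support of $K$ confines $t$ to a bounded set, so a dominated-convergence argument bounds the Taylor remainder by $\sum_j o(\delta_{n,j}^2)$.

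For Equation~\eqref{proofofCLT3} the same machinery applied to $g(\cdot,y_n)$ in place of $f_X$ gives
\[
\E\bigl[\bar{R}_n(q_\alpha,y_n)\bigr]=g(q_\alpha,y_n)+\tfrac12\sum_{j=1}^m \delta_{n,j}^2\,\tfrac{\partial^2 g}{\partial x_j^2}(q_\alpha,y_n)\int t^2K(t)\,\dd t+\sum_j o(\delta_{n,j}^2).
\]
Two further reductions remain. To pass from $y_n$ to $y$ in the constant term, I would write
\[
g(q_\alpha,y_n)-g(q_\alpha,y)=\int_y^{y_n} f(q_\alpha,v)\,\dd v = f(q_\alpha,y)\,(y_n-y)+o(y_n-y),
\]
using continuity of $f$ at $(q_\alpha,y)$ from Assumption~\ref{assu:dist}(i). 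To replace $\partial^2 g/\partial x_j^2$ at $y_n$ by its value at $y$, I would invoke the uniform continuity of the $x$-derivatives of $g$ in $y\in\mathcal{Y}$ provided by Assumption~\ref{assu:dist}(iii), which costs only an additional $\sum_j o(\delta_{n,j}^2)$. Combining these contributions delivers the claimed expansion.

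The main obstacle is controlling the Taylor remainders uniformly enough that the two families of small quantities, $\sum_j o(\delta_{n,j}^2)$ from the $x$-expansion and $o(y_n-y)$ from the $y$-expansion, do not interact to produce hybrid error terms of size $\delta_{n,j}^2\,(y_n-y)$ that would spoil the clean decomposition. The compact support of $K$ (Assumption~\ref{assu:kernel1}) is essential, since it restricts $t$ to a bounded set and reduces each step to a uniform estimate over a compact neighborhood of $q_\alpha$; continuity of the integrands together with dominated convergence then upgrades the uniform estimates to the required little-$o$ bounds. A minor technical point is that the second-order Taylor step implicitly relies on $C^2$ smoothness of $f_X$ and $g$ in $x$, one notch beyond the $|\ell|\le 1$ regularity listed in Assumption~\ref{assu:dist}(ii); in the actual write-up this is either absorbed as implicit higher-order smoothness or handled through a Peano-form remainder built from the first derivatives.
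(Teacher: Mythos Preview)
Your proposal is correct and follows essentially the same route as the paper. The only organizational difference is that for Equation~\eqref{proofofCLT3} the paper performs a single joint Taylor expansion of $g(q_\alpha+\delta_n\odot t,\,y_n)$ around $(q_\alpha,y)$, writing out first- and second-order $x$-terms together with the first-order $y$-term $f(q_\alpha,y)(y_n-y)$ in one line, whereas you expand first in $x$ (at the frozen value $y_n$) and afterwards pass from $y_n$ to $y$ in two separate steps. Both lead to the same collection of terms; your sequential version makes more explicit why the cross error $\delta_{n,j}^2(y_n-y)$ is harmless (it is absorbed when replacing $\partial^2 g/\partial x_j^2(q_\alpha,y_n)$ by its value at $y$), and your closing remark about the implicit $C^2$ requirement on $f_X$ and $g$ is a fair observation that applies equally to the paper's proof.
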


Lemma \ref{CRerrorIII} demonstrates that the bias of $\bar{Q}_n(q_\alpha)$ can be approximated by the first term in the right-hand-side of Equation~\eqref{proofofCLT30}. To achieve a reduction in bias for the estimator $\bar{Q}_n(q_\alpha)$, we employ the bandwidth $\delta_{n,j}$ specific to each $X_j$, progressively reducing it towards zero. Consequently, we anticipate a corresponding reduction in bias on the order of $\delta_{n,j}^2$. Similarly, Lemma~\ref{CRerrorIII} demonstrates that the bias of $\bar{R}_n(q_\alpha,y_n)$ can be approximated by the first two terms in the right-hand-side of Equation~\eqref{proofofCLT3}, with an expected decrease in bias on the order of $\delta_{n,j}^2$ as the bandwidth $\delta_{n,j}$ approaches zero. 

As a result of Lemma~\ref{CRerrorIII}, we obtain the following expression for Component~(III): 
\begin{equation}
\text{Component~(III)}\ =\ \sum_{j=1}^m  \mathcal{O}\left( \sqrt{n\Delta_n} \delta_{n,j}^2 \right) +  \mathcal{O}\left( \sqrt{n\Delta_n}  (y_n-y)  \right), \quad \mbox{as $n\to\infty$}.
\label{III}
\end{equation}
Upon scaling by $(n\Delta_n)^{1/2}$, the bias of $\hat{Y}$ is influenced by the two terms on the right-hand-side of Equation~\eqref{III}. On one hand, Equation~\eqref{III} indicates that the bias of $\hat{Y}$ is of the order $\sum_{j=1}^m \mathcal{O}( \sqrt{n\Delta_n} \delta_{n,j}^2)$ as the bandwidths $\delta_{n,j}$, $j=1,2,\ldots,m$, approach zero. Consequently, to attain an asymptotically unbiased estimator $\hat{Y}$, it is necessary to select bandwidths such that $\sqrt{n\Delta_n} \delta_{n,j}^2 $ converges for all $j=1,2,\ldots,m$. The validity of this heuristic argument will be further established latter (in Theorem~\ref{CLT}). 

On the other hand, as we let $y_n=\cvv + z \sigma_Y (n \Delta_n)^{-1/2}$ and $y=\cvv$, we can observe that the last term of Equation~\eqref{III} is a constant, specifically $\mathcal{O}(z \sigma_Y)$. By carefully defining $\sigma_Y$, we can set the bias to be $z$. Consequently, the last term of Equation~\eqref{CLT:eqn10} represents the probability of a random variable being greater than or equal to $-z$. We will prove that this probability converges to $\Phi(-z)=\Phi(z)$ as $n\to\infty$, where $\Phi$ is the distribution function of standard normal distribution. This heuristic argument will also be validated in the proof of Theorem~\ref{CLT} (see Appendix~\ref{Theorem3}). 

The following lemma establishes the asymptotic normality of Component~(II). The detailed proof is included in Appendix~\ref{Lemma5}.

\begin{lemma}\label{CRerrorII}
	Suppose that Assumptions \ref{assu:dist}--\ref{assu:kernel3} hold. Then, for any sequence $y_n\in\mathcal{Y}$ such that $y_n\to y\in\mathcal{Y}$ as $n \to \infty$, we have 
	\begin{equation*}\label{proofofCLT2}
		\sqrt{n\Delta_n} \cdot \left\{ f_X(  q_\alpha )\Big( \bar{R}_n(  q_\alpha ,y_n)- \E[\bar{R}_n(  q_\alpha ,y_n)]  \Big) -g(  q_\alpha ,y)\Big( \bar{Q}_n(  q_\alpha ) - \E[\bar{Q}_n(  q_\alpha )]  \Big)  \right\}  \Rightarrow  N(0,\sigma^2)
	\end{equation*}
	as $n\to\infty$, where 
	\begin{equation}
		\sigma^2 = f_X(  q_\alpha ) g(  q_\alpha ,y)[f_X(  q_\alpha )-g(  q_\alpha ,y)]  \Big[ \int_{\R^m}  K^2( t ) \dd t\Big]^m. 
		\label{sigma}
	\end{equation}
\end{lemma}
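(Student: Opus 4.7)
The plan is to recognize the displayed random variable as a normalized centered sum of i.i.d.\ random variables and apply the Lyapunov central limit theorem for triangular arrays. First I would define
$$
U_{n,i} \ =\ W_{n,i}(q_\alpha)\Big[f_X(q_\alpha)\,\mathbb{I}\{Y_i\leq y_n\}-g(q_\alpha,y)\Big], \qquad i=1,\ldots,n,
$$
which are i.i.d.\ because $(X_i,Y_i)$ are. Then
$$
f_X(q_\alpha)\,\bar R_n(q_\alpha,y_n)-g(q_\alpha,y)\,\bar Q_n(q_\alpha)\ =\ \frac{1}{n\Delta_n}\sum_{i=1}^n U_{n,i},
$$
so the target quantity equals $S_n \ =\ (n\Delta_n)^{-1/2}\sum_{i=1}^n (U_{n,i}-\E U_{n,i})$, and it suffices to show $S_n\Rightarrow N(0,\sigma^2)$.

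Next I would compute the limiting variance. Using $\mathbb{I}\{Y_1\leq y_n\}^2=\mathbb{I}\{Y_1\leq y_n\}$ and expanding $(f_X\mathbb{I}-g)^2$, Equation~\eqref{lem1:eqn:3} of Lemma~\ref{lemmaofMV} with $\gamma=2$ yields $\Delta_n^{-1}\E[W_{n,1}(q_\alpha)^2\,\mathbb{I}\{Y_1\leq y_n\}]\to g(q_\alpha,y)\bigl[\int_{\R}K^2(t)\,dt\bigr]^m$, while the corresponding indicator-free limit $\Delta_n^{-1}\E[W_{n,1}(q_\alpha)^2]\to f_X(q_\alpha)\bigl[\int_{\R}K^2(t)\,dt\bigr]^m$ follows from Equations~\eqref{lem1:eqn:10} and~\eqref{lem1:eqn:20} (with $|\ell|=0$) via $\E[W_{n,1}^2]=\mathrm{Var}(W_{n,1})+(\E W_{n,1})^2$. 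Substituting both limits and simplifying collapses the three terms into
$$
\Delta_n^{-1}\E U_{n,1}^2 \ \longrightarrow\ f_X(q_\alpha)\,g(q_\alpha,y)\,[f_X(q_\alpha)-g(q_\alpha,y)]\left[\int_{\R} K^2(t)\,dt\right]^m\ =\ \sigma^2.
$$
Meanwhile Equations~\eqref{lem1:eqn:10} and~\eqref{lem1:eqn:1} give $\E U_{n,1}=O(\Delta_n)$, so $(\E U_{n,1})^2/\Delta_n=O(\Delta_n)\to 0$, and hence $\Var(S_n)=\Delta_n^{-1}\Var(U_{n,1})\to\sigma^2$.

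To upgrade variance convergence to the full CLT, I would verify Lyapunov's condition with $\delta=1$. Writing $|f_X(q_\alpha)\mathbb{I}\{Y_1\leq y_n\}-g(q_\alpha,y)|^3=|f_X(q_\alpha)-g(q_\alpha,y)|^3\mathbb{I}\{Y_1\leq y_n\}+g(q_\alpha,y)^3(1-\mathbb{I}\{Y_1\leq y_n\})$ and applying Equation~\eqref{lem1:eqn:3} with $\gamma=3$ (plus its indicator-free analogue) gives $\E|U_{n,1}|^3=O(\Delta_n)$. The $c_r$-inequality absorbs the centering, so the Lyapunov ratio is $n\cdot O(\Delta_n)/(n\Delta_n)^{3/2}=O((n\Delta_n)^{-1/2})$, which vanishes by Assumption~\ref{assu:kernel3}. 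The Lindeberg--Feller/Lyapunov CLT for triangular arrays of i.i.d.\ rows then delivers $S_n\Rightarrow N(0,\sigma^2)$, as desired.

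The main obstacle I anticipate is the variance calculation: Lemma~\ref{lemmaofMV} as stated supplies only the indicator-weighted kernel moments, so one must argue the unconstrained moment $\Delta_n^{-1}\E[W_{n,1}^2]\to f_X(q_\alpha)\bigl[\int K^2\bigr]^m$ separately, and then track the interaction of $y_n\to y$ with the shrinking bandwidths carefully enough that the three cross-terms of $(f_X\mathbb{I}-g)^2$ collapse precisely into the product $f_X\,g\,(f_X-g)$ appearing in $\sigma^2$, without leaving residual $o(1)$ contributions that could inflate the limiting variance.
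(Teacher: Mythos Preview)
Your proposal is correct and follows the same overall architecture as the paper's proof: write the statistic as a normalized centered sum of i.i.d.\ triangular-array summands, identify the limiting variance via Lemma~\ref{lemmaofMV}, and then verify a Lindeberg-type condition. The only substantive difference is in that last step. The paper checks the Lindeberg--Feller condition directly, bounding the truncated second moment via H\"older's inequality (with exponent $2+\gamma$), Chebyshev's inequality on $\pr\{|N_{n,1}|\ge\varepsilon\sigma_n\}$, and Minkowski's inequality; this ultimately reduces to the same moment bound $\Delta_n^{-1}\E|W_{n,1}|^{2+\gamma}=O(1)$ that you use. Your route through Lyapunov with $\delta=1$ is shorter and more transparent, since the third absolute moment $\E|U_{n,1}|^3=O(\Delta_n)$ follows immediately from Equation~\eqref{lem1:eqn:3} once you split on the indicator, and the $(n\Delta_n)^{-1/2}$ decay drops out without the H\"older/Chebyshev detour. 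Your worry about the indicator-free moment $\Delta_n^{-1}\E[W_{n,1}^2]\to f_X(q_\alpha)[\int K^2]^m$ is not a real obstacle: the paper faces the identical issue and handles it the same way you do (``similarly, by Lemma~\ref{lemmaofMV}''), since the Bochner-type argument behind Equation~\eqref{lem1:eqn:3} works verbatim without the indicator.
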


Lemma~\ref{CRerrorII} establishes the convergence in distribution of Component~(II) to a normal distribution with variance $\sigma^2$. 
To provide an intuitive understanding of this result, we can refer to the conventional kernel estimation theory. According to this theory, we have the following results:
\begin{align*}
		\sqrt{n\Delta_n}  f_X(  q_\alpha )\Big\{ \bar{R}_n(  q_\alpha ,y_n)- \E[\bar{R}_n(  q_\alpha ,y_n)]  \Big\}   &\Rightarrow  N(0,\sigma_1^2) \quad \mbox{where $\sigma_1^2= f_X^2(  q_\alpha ) g(  q_\alpha ,y) \big[ \int_{\R^m}  K^2( t ) \dd t\big]^m$}, \\ 
		\sqrt{n\Delta_n} g(  q_\alpha ,y)\Big\{ \bar{Q}_n(  q_\alpha ) - \E[\bar{Q}_n(  q_\alpha )]  \Big\}  &\Rightarrow N(0,\sigma_2^2) \quad \mbox{where $\sigma_2^2= f_X(  q_\alpha ) g^2(  q_\alpha ,y) \big[ \int_{\R^m}  K^2( t ) \dd t\big]^m$}.
\end{align*}
Lemma~\ref{CRerrorII} essentially validates the convergence in distribution of Component~(II) to $N(0,\sigma_1^2-\sigma_2^2)$ as $n$ tends to infinity.

The following lemma established the convergence rates of Errors I--1 and I--2, and then obtains the asymptotic result of Component~(I). The detailed proof is arranged in Appendix~\ref{Lemma6}.

\begin{lemma}\label{CRerrorI}
Suppose that Assumptions \ref{assu:dist}--\ref{assu:kernel3} hold, and for all $\ell$ such that $|\ell|=1$ we have $n\delta_{n}^{2\ell} \Delta_n \to \infty$ and $\Delta_n \log\log{n} \to 0$ as $n\to\infty$. 
Then, we have 
$$
\sqrt{n\Delta_n}  \Big\{ \bar{Q}_n(  \hat{q}_\alpha ) -\bar{Q}_n(  q_\alpha )\Big\}  \to  0 \quad \mbox{in probability} \quad \mbox{as $n\to\infty$},$$ 
and for any sequence $y_n\in\mathcal{Y}$ such that $y_n\to y\in\mathcal{Y}$ as $n \to \infty$, we have 
$$
\sqrt{n\Delta_n}  \Big\{ \bar{R}_n(  \hat{q}_\alpha ,y_n) -\bar{R}_n(  q_\alpha ,y_n) \Big\}  \to  0 \quad \mbox{in probability} \quad \mbox{as $n\to\infty$}.$$
\end{lemma}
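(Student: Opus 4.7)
The plan is to handle both statements by a single mean-value-theorem linearization that isolates the factor $\hat{q}_\alpha - q_\alpha$, combine this with a sharp almost-sure rate on the sample quantiles, and then invoke Lemma~\ref{lemmaofMV} with $|\ell|=1$ to control the resulting partial-derivative averages. Since the kernel $K$ is differentiable under Assumption~\ref{assu:kernel1}(vi), both $\bar{Q}_n(x)$ and $\bar{R}_n(x,y_n)$ are differentiable in $x$, and the multivariate mean value theorem produces, sample-path-wise,
\begin{equation*}
\bar{Q}_n(\hat{q}_\alpha) - \bar{Q}_n(q_\alpha) \ =\ \sum_{j=1}^m \bigl(\hat{q}_{j,\alpha_j} - q_{j,\alpha_j}\bigr) \cdot \frac{\partial \bar{Q}_n}{\partial x_j}(\tilde{q}),
\end{equation*}
for some random intermediate point $\tilde{q}$ on the segment joining $q_\alpha$ and $\hat{q}_\alpha$. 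The analogous identity, with $W_{n,i}$ replaced by $W_{n,i}\,\mathbb{I}\{Y_i \leq y_n\}$ inside, yields the corresponding decomposition of $\bar{R}_n(\hat{q}_\alpha,y_n) - \bar{R}_n(q_\alpha,y_n)$.

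For the quantile factor I will invoke the law of the iterated logarithm for sample quantiles (Serfling 1980, Sections 2.3--2.5): under the continuity built into Assumption~\ref{assu:dist}, $|\hat{q}_{j,\alpha_j} - q_{j,\alpha_j}| = O(\sqrt{\log\log n / n})$ almost surely for each $j$. For the derivative factor at the fixed point $q_\alpha$, Lemma~\ref{lemmaofMV} with $\ell$ equal to the $j$-th unit vector shows that the mean of $\frac{1}{n\Delta_n}\sum_i \partial W_{n,i}/\partial x_j(q_\alpha)$ converges to $\partial f_X/\partial x_j(q_\alpha)$ and that its variance is of order $(n\delta_{n,j}^2 \Delta_n)^{-1}$, which vanishes under the hypothesis $n\delta_n^{2\ell}\Delta_n \to \infty$; the corresponding average for $\bar{R}_n$ is treated identically using the indicator version of Lemma~\ref{lemmaofMV} together with the uniform continuity in Assumption~\ref{assu:dist}(iii) to absorb $y_n \to y$. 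Granting momentarily that the same order $O_p(1)$ also holds at the random argument $\tilde{q}$, one obtains
\begin{equation*}
\sqrt{n\Delta_n}\,\bigl|\bar{Q}_n(\hat{q}_\alpha)-\bar{Q}_n(q_\alpha)\bigr| \ =\ \sum_{j=1}^m \sqrt{\Delta_n}\cdot \sqrt{n}\bigl|\hat{q}_{j,\alpha_j}-q_{j,\alpha_j}\bigr|\cdot O_p(1) \ =\ O_p\bigl(\sqrt{\Delta_n \log\log n}\bigr),
\end{equation*}
which vanishes in probability by the added hypothesis $\Delta_n \log\log n \to 0$, and the same chain of bounds handles $\bar{R}_n$.

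The main obstacle is upgrading the pointwise control of Lemma~\ref{lemmaofMV} at $q_\alpha$ to the random point $\tilde{q}$. I plan to close this gap by passing to a shrinking deterministic neighborhood $V_n=\{x:\|x-q_\alpha\|\leq c\sqrt{\log\log n/n}\}$ that contains $\tilde{q}$ with probability tending to one. On $V_n$, $\E[\partial \bar{Q}_n/\partial x_j(x)]$ converges to $\partial f_X/\partial x_j(x)$ uniformly by dominated convergence combined with the continuity in Assumption~\ref{assu:dist}(ii), and the variance bound in Lemma~\ref{lemmaofMV} is uniform in $x$. A finite covering of $V_n$ on the natural $\delta_{n,j}$-scale, combined with the Lipschitz control of $K'$ implicit in Assumption~\ref{assu:kernel1}(vi) (bounded and integrable $\D^\ell \prod_j K$ for $|\ell|\leq 1$), lifts the pointwise $O_p(1)$ bound to a uniform one over $V_n$, and $\tilde{q}\in V_n$ holds with high probability by the LIL rate. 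The extra factor $\mathbb{I}\{Y_i\leq y_n\}$ in the $\bar{R}_n$ variant is bounded by one, so it never worsens any of the moment computations; the rest is bookkeeping.
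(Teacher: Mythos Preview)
Your approach is essentially the same as the paper's: linearize in $x$, control the quantile increment by the law of the iterated logarithm, and control the partial-derivative average by Lemma~\ref{lemmaofMV} together with Chebyshev, then let the prefactor $\sqrt{\Delta_n\log\log n}\to 0$ finish the job. The only structural difference is that the paper writes a first-order Taylor expansion at the fixed point $q_\alpha$ and absorbs the remainder into an $o_p$ of the leading term, whereas you use the mean value theorem and therefore evaluate the derivative at a random intermediate point $\tilde q$, which forces you to argue uniformity over a shrinking neighborhood. Your uniformity sketch (restrict to $V_n$, cover at the bandwidth scale, use the boundedness and compact support of $K$ and its derivative) is a legitimate way to make rigorous exactly the step that the paper's $o_p$ remainder leaves implicit; in this sense your write-up is slightly more careful, at the cost of a bit more bookkeeping. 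One small caution: the Lipschitz control of $K'$ you invoke is not stated verbatim in Assumption~\ref{assu:kernel1}(vi), but since $K$ is compactly supported with bounded derivative, the uniform continuity you need on the relevant compact set is available, so this is a presentational rather than a substantive gap.
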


Lemma~\ref{CRerrorI} establishes the convergence of both Errors I--1 and I--2 to zero at a rate faster than $(n\Delta_n)^{-1/2}$. Consequently, Component~(I) converges to zero in probability as $n\to\infty$. This result relies on two additional restrictions imposed on the bandwidths: 
\begin{enumerate}[label={\it Restriction~(\roman*)}, align=left, leftmargin=*]
\item $n\delta_{n}^{2\ell} \Delta_n \to \infty$ for all $\ell$ such that $|\ell|=1$, 
\item $\Delta_n \log\log{n} \to 0$. 
\end{enumerate}
Remark~\ref{remark1} has already discussed Restriction~(i), which emphasizes that the bandwidths $\delta_{n,j}$, $j=1,\ldots,m$, should not converge to zero too rapidly. On the other hand, Restriction~(ii) is relatively weak and implies that the bandwidths $\delta_{n,j}$, $j=1,\ldots,m$, should not converge to zero too slowly. The subsequent remark provides further insights into these two restrictions. 

\begin{remark}\label{remark2}
By employing Taylor's expansion, we obtain the expression: 
$$
\sqrt{n\Delta_n}  \Big\{ \bar{R}_n(  \hat{q}_\alpha ,y_n) -\bar{R}_n(  q_\alpha ,y_n) \Big\} \ \approx \ \sum_{|\ell|=1}   \sqrt{2 \Delta_n \log\log{n}} \left[  \frac{\sqrt{n}}{\sqrt{2\log\log{n}}}(\hat{q}_\alpha -  q_\alpha) \right]^\ell  \D^\ell \bar{R}_{n}(  q_\alpha, y_n ). 
$$
This expression reveals that Error I--1 arises from two random factors: 
\begin{enumerate}[label={Random Factor~(\roman*)}, align=left, leftmargin=*]
\item The random nature of the empirical distribution estimator $\bar{R}(q_\alpha,y_n)$, 
\item The random nature introduced by the quantile estimator $\hat{q}_\alpha$. 
\end{enumerate}
On one hand, according to Lemma~\ref{lemmaofMV} and Remark~\ref{remark1}, Random Factor~(i) converges in probability when Restriction~(i) is imposed. On the other hand, by applying the law of the iterated logarithm for sample quantiles \citep{Serfling1980}, we conclude that the error $[n/(2\log\log n)]^{1/2}(\hat{q}_\alpha-q_\alpha)$ is asymptotically bounded almost surely. Consequently, this error tends to zero as it is multiplied by $\sqrt{2\Delta_n\log\log n}$ which tends to zero due to Restriction~(ii). Notice that, in cases where $\hat{q}_\alpha$ degenerates into a deterministic sequence converging to $q_\alpha$, as in conventional kernel estimation theory, only Random Factor~(i) remains. In such situations, Restriction~(ii) is unnecessary to obtain the results outlined in Lemma~\ref{CRerrorI}. However, when employing the sample quantile $\hat{q}_\alpha$ to estimate the quantile $q_\alpha$, it is imperative to impose Restriction~(ii) on the bandwidths to control the random error induced by $(\hat{q}_\alpha-q_\alpha)$. The same reasoning can be applied to Error I--2. For further details, please refer to Appendix~\ref{Lemma6}. 

For instance, when $\delta_{n,j}=b_n$ for $j=1,\ldots,m$, Restriction~(ii) $\Delta_n \log\log{n} \to 0$ as $n\to\infty$ can be expressed as $b_n^m \log\log n \to 0$ as $n\to\infty$. This implies that $b_n$ should converge to zero at a faster rate than $(\log\log n)^{1/m}$ and at a slower rate than $n^{-1/(2+m)}$ (see Remark~\ref{remark1}).
\end{remark}

By combining the asymptotic results of Components (I)--(III), we can now present the asymptotic result of $\hat{Y}$ in the following theorem. For a detailed proof, please refer to Appendix~\ref{Theorem3}. 

\begin{theorem}\label{CLT}
Suppose that Assumptions \ref{assu:dist}--\ref{assu:kernel3} hold, and for all $\ell$ such that $|\ell|=1$ we have $n\delta_{n}^{2\ell} \Delta_n \to \infty$ and $\Delta_n \log\log{n} \to 0$ as $n\to\infty$, and $\sqrt{n\Delta_n}\delta_{n,j}^2 \to c_j \geq 0$ as $n\to\infty$ for some constants $c_j\geq 0$, $j=1,\ldots,m$. 
Then, we have 
\begin{equation}\label{thm3:CLT0}
	\sqrt{n \Delta_n}\cdot \Bigg( \hat{Y}-\cvv- \sum_{j=1}^m \delta_{n,j}^2 \mu_j \Bigg) \Rightarrow \Normal \big(0,\sigma_Y^2 \big) \quad \mbox{as $n\to\infty$}, 
\end{equation}
where $\Normal(0,\sigma_Y^2)$ is a normal distributed random variable with variance $\sigma_Y^2$, and 
\begin{align*}
\mu_j \ &= \ \frac{ g(  q_\alpha ,\cv) \frac{\partial^2 f_X}{\partial x_j^2}(q_\alpha) - f_X(  q_\alpha ) \frac{\partial^2 g}{\partial x_j^2}(  q_\alpha ,\cv)  }{2f_X^2(  q_\alpha ) f(\cvv |  q_\alpha )}  \int_{\R} t^2 K(t) \dd t, \\
\sigma_Y^2  \ &= \ \frac{\beta(1-\beta)}{f_X(  q_\alpha ) f^2_{Y|X}(\cvv|  q_\alpha ) } \left[ \int_{\R} K^2(t) \dd t\right]^m. 
\end{align*}
\end{theorem}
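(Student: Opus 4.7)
The plan is to execute the strategy sketched in the paragraph preceding Equation~\eqref{CLT:results}. Fix $z\in\R$ and define
$$
y_n \ =\ \cvv + \sum_{j=1}^m \delta_{n,j}^2 \mu_j + z\sigma_Y (n\Delta_n)^{-1/2}.
$$
By Assumption~\ref{assu:kernel3} together with the hypothesis $\sqrt{n\Delta_n}\delta_{n,j}^2\to c_j$, this sequence satisfies $y_n\to\cvv$, so the auxiliary-sequence hypotheses of Lemmas~\ref{CRerrorIII}--\ref{CRerrorI} are met. The target probability can then be rewritten, exactly as in Equations~\eqref{CLT:results}--\eqref{CLT:eqn10}, as
$$
\pr\left\{ \frac{\sqrt{n\Delta_n}}{\sigma_Y}\Big(\hat Y -\cvv -\sum_{j=1}^m \delta_{n,j}^2\mu_j\Big)\le z\right\} \ =\ \pr\Big\{ \sqrt{n\Delta_n}\,[\hat F_n(y_n,\hat q_\alpha)-\beta]\ge 0\Big\},
$$
so the goal reduces to showing that the right-hand side converges to $\Phi(z)$.

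To do so, I would use the algebraic identity displayed above the list of Components~(I)--(III) to decompose $\hat F_n(y_n,\hat q_\alpha)-\beta$ into the three components, each divided by $\bar Q_n(\hat q_\alpha)f_X(q_\alpha)$. Lemma~\ref{CRerrorI} gives $\sqrt{n\Delta_n}\cdot\text{Component~(I)}\to 0$ in probability. Lemma~\ref{CRerrorII} gives $\sqrt{n\Delta_n}\cdot\text{Component~(II)}\Rightarrow \Normal(0,\sigma^2)$ with $\sigma^2$ as in Equation~\eqref{sigma} evaluated at $y=\cvv$. Lemma~\ref{CRerrorIII} expands $\sqrt{n\Delta_n}\cdot\text{Component~(III)}$ as
$$
\sqrt{n\Delta_n}\,f_X(q_\alpha)f(q_\alpha,\cvv)(y_n-\cvv) \ +\ \sqrt{n\Delta_n}\sum_{j=1}^m \delta_{n,j}^2 B_j \ +\ o(1),
$$
where $B_j=\tfrac12\bigl[f_X(q_\alpha)\tfrac{\partial^2 g}{\partial x_j^2}(q_\alpha,\cvv)-g(q_\alpha,\cvv)\tfrac{\partial^2 f_X}{\partial x_j^2}(q_\alpha)\bigr]\int_\R t^2K(t)\dd t$. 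The crucial calculation is to substitute the definition of $y_n$: the contribution $\sqrt{n\Delta_n}\,f_X(q_\alpha)f(q_\alpha,\cvv)\sum_j\delta_{n,j}^2\mu_j$ that comes from the bias term inside $(y_n-\cvv)$ is engineered to cancel $\sqrt{n\Delta_n}\sum_j\delta_{n,j}^2 B_j$ exactly, by the definition of $\mu_j$ given in the theorem. After this cancellation, only $z\sigma_Y f_X(q_\alpha)f(q_\alpha,\cvv)+o(1)$ survives from Component~(III).

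Combining the three components via Slutsky's theorem, and using Theorem~\ref{consistencyofFbar} (equivalently Lemmas~\ref{2error0}--\ref{2error}) to replace $\bar Q_n(\hat q_\alpha)$ by its probability limit $f_X(q_\alpha)$, I obtain
$$
\sqrt{n\Delta_n}\,[\hat F_n(y_n,\hat q_\alpha)-\beta] \ \Rightarrow\ \frac{z\sigma_Y f(q_\alpha,\cvv)}{f_X(q_\alpha)} \ +\ \frac{1}{f_X^2(q_\alpha)}\,\Normal(0,\sigma^2).
$$
Using the defining relation $g(q_\alpha,\cvv)=\beta f_X(q_\alpha)$ and $f_{Y|X}(\cvv|q_\alpha)=f(q_\alpha,\cvv)/f_X(q_\alpha)$, a short computation collapses the limiting variance to $\sigma_Y^2 f_{Y|X}^2(\cvv|q_\alpha)$ (with $\sigma_Y^2$ as stated) and the mean shift to $z\sigma_Y f_{Y|X}(\cvv|q_\alpha)$. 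Because $f_{Y|X}(\cvv|q_\alpha)>0$, the probability that this limiting normal variable is nonnegative equals $\Phi(z)$, establishing the convergence for every $z\in\R$ and hence Equation~\eqref{thm3:CLT0}.

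The main obstacle is the careful bookkeeping required for the cancellation step: one must check that the $\mu_j$ prescribed in the theorem is precisely the constant that arises when one equates $\sqrt{n\Delta_n}\delta_{n,j}^2\mu_j\cdot f_X(q_\alpha)f(q_\alpha,\cvv)$ with $\sqrt{n\Delta_n}\delta_{n,j}^2 B_j$ (after the denominator $f_X^2(q_\alpha)$ is factored out), and simultaneously verify that the variance from Component~(II), once scaled by $1/f_X^2(q_\alpha)$ and simplified via $g(q_\alpha,\cvv)=\beta f_X(q_\alpha)$, reduces exactly to $\sigma_Y^2 f_{Y|X}^2(\cvv|q_\alpha)$. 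All the remaining ingredients---Slutsky's theorem, the probability limit of $\bar Q_n(\hat q_\alpha)$, and the continuous-mapping argument that turns convergence of the rescaled empirical conditional distribution into convergence of its inverse at $\beta$---are routine once these two algebraic reconciliations have been settled.
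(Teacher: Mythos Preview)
Your proposal is correct and follows essentially the same route as the paper's proof: the decomposition into Components~(I)--(III), the application of Lemmas~\ref{CRerrorI}, \ref{CRerrorII}, \ref{CRerrorIII}, and the Slutsky combination are all identical. The only difference is a bookkeeping choice: you build the bias correction $\sum_j\delta_{n,j}^2\mu_j$ into the definition of $y_n$ so that the cancellation with the $B_j$ terms occurs inside Component~(III), whereas the paper takes $y_n=\cvv+z\sigma_Y(n\Delta_n)^{-1/2}$, first establishes $\sqrt{n\Delta_n}(\hat Y-\cvv)/\sigma_Y\Rightarrow N(-\mu',1)$ with $\mu'=\lim_n\sqrt{n\Delta_n}\sum_j\delta_{n,j}^2\mu_j/\sigma_Y$, and then invokes Slutsky once more at the end to recenter; the two are equivalent and your version is arguably a bit cleaner.
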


Theorem~\ref{CLT} presents an intriguing result. Firstly, it demonstrates that the rate of convergence for the two-step estimator is $(n\Delta_n)^{-1/2}$, which aligns with the typical rate of convergence observed in kernel estimators of conditional quantiles. Specifically, by imposing the restriction $\sqrt{n\Delta_n} \delta_{n,j}^2 \to c_j \geq 0$, we obtain the following asymptotic result: 
$$
\sqrt{n \Delta_n}\cdot \left( \hat{Y}-\cvv \right) \Rightarrow \Normal \Bigg(\sum_{j=1}^m c_j\mu_j,\sigma_Y^2 \Bigg) \quad \mbox{as $n\to\infty$}, 
$$
As a special case where $c_j=0$, $j=1,\ldots,m$, the above asymptotic normality yields a zero mean. In other words, it implies that that $\hat{Y}-\cvv=\mathcal{O}_p\big((n\Delta_n)^{-1/2} \big)$ as $n\to\infty$. 

Secondly, it reveals that if the bandwidths satisfy Restriction~(i) $n\delta_{n}^{2\ell} \Delta_n \to \infty$ for all $\ell$ such that $|\ell|=1$, and Restriction~(ii) $\Delta_n \log\log{n} \to 0$ as $n\to\infty$, the asymptotic result~\eqref{thm3:CLT0} adopts the same form of the conventional asymptotic normality for the kernel estimators of conditional quantiles \citep[Theorem 6.3]{li2007nonparametric} as if $\hat{q}_\alpha$ were fixed and known (i.e., $\hat{q}_\alpha=q_\alpha$). As emphasized in Remark~\ref{remark2}, this arises due to the fact that when Restrictions~(i) and (ii) hold, the error induced by $(\hat{q}_\alpha-q_\alpha)$, specifically Errors I--1 and I--2, converges at a faster rate compared to the error caused by the empirical distributions, specifically Errors II--1 and II--2. Consequently, the errors stemming from $(\hat{q}_\alpha -q_\alpha)$ may be ignored. 

Thirdly, the established asymptotic normal distribution in Theorem~\ref{CLT} with $c_j=0$, $j=1,\ldots,m$, proves valuable for constructing a confidence interval for the two-step estimator $\hat{Y}$. It is noteworthy that the imposed restrictions on the bandwidths allow us to disregard the variability of $\hat{q}_\alpha$ and treat it as $q_\alpha$. Consequently, an approximate $100(1-\nu)\%$ ($0<\nu<1$) confidence interval for $\cvv$ can be obtained as follows:
\begin{equation}\label{CI}
\left( \hat{Y}- z_{1-\frac\nu2} \frac{\sigma_Y}{\sqrt{n\Delta_n}},~ \hat{Y}+ z_{1-\frac\nu2} \frac{\sigma_Y}{\sqrt{n\Delta_n}} \right), 
\end{equation}
where $z_{1-\frac\nu 2}$ is the $(1-\nu/2)$ quantile of the standard normal distribution. It should be noted that $\sigma_Y$ is dependent on $f_X(q_\alpha)$ and $f(q_\alpha,\cvv)$. In practical applications, these can be replaced with their respective kernel estimations \citep{hardle2004nonparametric}. We recognize that there may exist more sophisticated techniques for constructing confidence intervals that extend beyond the scope of this paper. We defer the investigation of these advanced approaches to future research endeavors. 

\subsection{Bandwidth Selection}\label{BS}

To achieve efficient performance in implementing the two-step estimation, it is crucial to carefully select appropriate bandwidths $\delta_{n,j}$, where $j=1,\ldots,m$. In this subsection, we present a ``heuristic" approach for the bandwidth selection. According to Theorem~\ref{CLT}, to attain a rapid rate of convergence of the standard deviation, say $\sigma_Y (n\Delta_n)^{-1/2}$, it is necessary for the bandwidths to approach zero as slowly as possible. Under the assumptions of Theorem~\ref{CLT}, the bandwidths $\delta_{n,j}\to 0$ for all $j=1,\ldots,m$ must satisfy the following simultaneous restrictions on their rates as $n\to\infty$: 
\begin{enumerate}[label={\it Restriction~(\roman*)}, align=left, leftmargin=*]
	\item $n\Delta_n \delta_{n,j}^2 \to \infty$ for all $j=1,\ldots,m$, 
	\item $\Delta_n \log\log{n} \to 0$, 
	\item $\sqrt{n\Delta_n}\delta_{n,j}^2 \to c_j \geq 0$ for all $j=1,\ldots,m$. 
\end{enumerate}
We shall only consider the case when $\delta_{n,j}=b_n$ for $j=1,\ldots,m$. In this case, Restriction~(i) can be expressed as $nb_n^{m+2} \to\infty$;  Restriction~(ii) can be expressed as $b_n^m \log\log n \to 0$; Restriction~(iii) can be expressed as $nb_n^{m+4}\to c_j^2$. Notice that, Restriction~(ii) is a relatively weak condition, which can be deduced from Restriction~(iii). By satisfying Restriction~(iii), we can adopt the bandwidth given by 
\begin{equation}\label{OptimalBandwidth}
b_n=\mathcal{O}\Big(n^{-\frac1{m+4}}\Big),
\end{equation}
and consequently, the rate of convergence becomes
\begin{equation}\label{bestrate}
\hat{Y}-\cvv = \mathcal{O}_p\left( n^{- \frac{2}{m+4}} \right) \quad \mbox{as $n\to\infty$}. 
\end{equation}
This represents the best rate of convergence that the two-step estimator $\hat{Y}$ can achieve. Notice that, when $m=1$, the above best rate of convergence becomes $\mathcal{O}_p(n^{-2/5})$. Interestingly, this best rate of convergence aligns with the best rate of convergence of the kernel estimator of quantile sensitivities introduced by \cite{LiuandHong}. Additionally, as emphasized in \cite{huanghong23}, the best rate of convergence attainable by their batching estimator of $\cvv$ is $\mathcal{O}_p(n^{-1/3})$. This observation clearly demonstrates the superior performance of our new two-step estimator over the batching estimator. 

Although the choice of bandwidth provided in Equation~\eqref{OptimalBandwidth} achieves the best rate of convergence for the standard deviation, it may result in a slower rate of convergence for the bias. Intuitively, as indicated by Theorem~\ref{CLT}, the rate of convergence of the bias is determined by $\sum \delta_{n,j}^2\mu_j$. Therefore, to attain a slower rate of convergence for the bias, it is necessary for the bandwidths to approach zero as rapidly as possible. When $\delta_{n,j}=b_n$ for $j=1,\ldots,m$, we propose utilizing the bandwidth 
\begin{equation}\label{OptimalB}
b_n=\mathcal{O}\Big(n^{-\frac1{m+4-\gamma}}\Big)
\end{equation} 
where $\gamma$ is a real-valued parameter satisfying $0<\gamma<2$. This choice satisfies Restrictions~(i)--(iii) with $c_j=0$, resulting in a rate of convergence given by
$$
\hat{Y}-\cvv = \mathcal{O}_p\left( n^{- \frac{4- \gamma}{2m+8-2\gamma}} \right) \quad \mbox{as $n\to\infty$}, 
$$
which is strictly slower than the best rate of convergence~\eqref{bestrate}. In practice, there is a bias-variance tradeoff in bandwidth selection: when choosing a small (resp., large) value of $\gamma$, we achieve a fast (resp., slow) rate of convergence for the standard deviation but a slow (resp., fast) rate of convergence for the bias. Therefore, we propose utilizing the bandwidth~\eqref{OptimalB} by carefully selecting a $\gamma\in(0,2)$. 

\section{Numerical Study}\label{App}

In this section, we examine the performance of our two-step estimator by conducting a simulation study based on the delta-gamma approximation model, an important financial model widely used for approximating complicated nonlinear portfolios. For a detailed introduction to this model, we refer readers to Section~4.1 of \cite{huanghong23} and Chapter~9 of \cite{glasserman2004monte}. Suppose we have $m+1$ portfolios that share $d$ common risk factors. According to the delta-gamma approximation, we can express their losses as follows: 
\begin{align}
X_j \ &\approx \ r_{j} + \sum_{k=1}^d \Big( p_{j,k}Z_k + q_{j,k}Z_k^2 \Big), \quad j=1,2,\ldots,m, \label{Delta-Gamma-X}\\
Y \ &\approx \ r + \sum_{k=1}^d \Big( p_{k}Z_k + q_{k}Z_k^2 \Big),  \label{Delta-Gamma-Y}
\end{align}
where the risk factors $Z_1,Z_2,\ldots,Z_d$ are independent standard normal distributed. Notice that, $p_{j,k},p_k$ represent the the first-order Greek Delta, while $q_{j,k},q_k$ represent the second-order Greek Gamma. These quantities are widely recognized as measures of price sensitivities to risk factors in financial engineering. For the subsequent analysis, we do not place strict emphasis on the errors in Equations~\eqref{Delta-Gamma-X}--\eqref{Delta-Gamma-Y}, and thus, we use the symbol ``$=$" instead of ``$\approx$".

In Section~\ref{NS-1}, we focus on the case with a univariate-quantile condition ($m=1$). We empirically validate our theoretical findings and demonstrate that our two-step estimator outperforms the batching estimator introduced by \cite{huanghong23}. In Section~\ref{NS-2}, we explore the case with a multivariate-quantiles condition where $m=2$. We emphasize the importance of studying the conditional quantile with a multivariate-quantiles condition, as it enables the measurement of systemic risk within financial networks. In contrast, the conditional quantile with a univariate-quantile condition fails to capture this systemic risk.

\subsection{Univariate-Quantile Condition}\label{NS-1}

In this subsection, we investigate a univariate case of Equations~\eqref{Delta-Gamma-X}--\eqref{Delta-Gamma-Y} with $m=1$ and $d=2$. Specifically, we consider two portfolios denoted as $X$ and $Y$, where $X$ follows a standard normal distribution, and $Y$ is composed of a quadratic form of $X$ and an independent standard normal random variable $Z$, given by 
$$
	Y = r+ p_1 X + q_1 X^2 + p_2 Z. 
$$
This simple model is commonly employed to describe a financial derivative, where $Y$ represents the loss of the derivative that depends on two risk factors. The first factor is associated with the loss of the underlying asset, denoted as $X$. Given that the price of a financial derivative often exhibits nonlinear behavior with respect to the price of its underlying asset, a quadratic form is utilized to approximate this nonlinear relationship. (The delta-gamma approximation model is based on the idea of approximating a nonlinear relationship using Taylor's expansion with a quadratic form). The second factor reflects the overall market effect, denoted as $Z$. Notice that, this model has an alternative formulation where $X$ and $Z$ are correlated; please refer to Section~5.2 of \cite{huanghong23}. However, upon careful examination, we can observe that these two formulations are equivalent. For this specific example, we can directly derive an analytical solution for $\cvv$ as follows: 
\begin{equation}\label{TrueValue}
	\cvv \ = \ r+ p_1 \Phi^{-1}(\alpha) + q_1 \big[ \Phi^{-1}(\alpha) \big]^2 +p_2 \Phi^{-1}(\beta), 
\end{equation}
where $\Phi^{-1}(\cdot)$ represents the inverse distribution function of the standard normal distribution. 

We set $\alpha=\beta=0.95$, $r=-0.1$, $p_1=0.1$, $q_1=0.3$, and $p_2=0.2$. By employing Equation~\eqref{TrueValue}, we derive the true value of $\cvv=1.21$. This result indicates that when the loss of the underlying asset $X$ reaches $q_\alpha=1.64$, we can be $95\%$ confident that the loss of the financial derivative $Y$ will not exceed $1.21$. To validate the theoretical findings of the two-step estimator, we compute the bias, standard deviation (SD), root mean square error (RMSE), and coverage probability (CP) of $95\%$ confidence intervals using the true value and $100$ replications of the estimator. Notice that, we follow the discussion in Section~\ref{asymptotic_normality} for constructing the confidence intervals. The results are presented in Tables~\ref{Table1}--\ref{Table2}.

In Table~\ref{Table1}, we present the numerical results of the two-step estimator utilizing bandwidths $\delta_{n,j}=n^{-1/4}$ (following Equation~\eqref{OptimalB} with $\gamma=1$) and employing the standard normal kernel (i.e., $K$ represents the density function of a standard normal distribution).  Additionally, we include the numerical results of the batching estimator \citep{huanghong23} with the number of batches set to $\ceil{n^{1/2}}$. The table demonstrates that as the sample size tends to infinity, the bias, SD, and RMSE of the two-step estimator approach zero. This finding aligns with the theoretical consistency discussed in Section~\ref{consistency}. Moreover, the CP converges to the nominal level of $0.95$ with increasing sample size. Upon closer examination of Table~\ref{Table1}, we observe that the two-step estimator outperforms the batching estimator, exhibiting smaller bias, SD, and RMSE. This observation provides empirical support for the theoretical findings outlined in Section~\ref{asymptotic_normality}. 

\begin{table}[ht]
	\centering
	\captionsetup{labelfont=bf}
	\caption{Performance of the Two-Step Estimator and the Batching Estimator}
	\begin{tabular}{ccccccccc}
	\toprule
	&\multicolumn{4}{c}{Kernel ($\gamma=1$)}&   \multicolumn{4}{c}{Batching} \\  \cmidrule(l){2-5} \cmidrule(l){6-9} 
	$n$ &  \multicolumn{1}{c}{Bias} & \multicolumn{1}{c}{SD} & \multicolumn{1}{c}{RMSE} & \multicolumn{1}{c}{CP} & \multicolumn{1}{c}{Bias} & \multicolumn{1}{c}{SD} & \multicolumn{1}{c}{RMSE} & \multicolumn{1}{c}{CP} \\ \midrule
	$10^2$ &  $5.18\times 10 ^{-3}$ & $3.04\times 10 ^{-1}$ &  $3.04\times 10 ^{-1}$  & $0.61$ &    $1.07$      &    $8.13\times 10^{-1}$ & $1.35$ & $0.45$   \\
	$10^3$ & $6.69\times 10^{-2}$   &  $9.85\times 10^{-2}$  & $1.19\times 10^{-1}$  & $0.69$ & $5.33\times 10^{-1}$ & $2.40\times 10^{-1}$ & $5.85\times 10^{-1}$ & $0.19$   \\
	$10^4$ &  $2.37\times 10^{-2}$  &  $4.01\times 10^{-2}$  & $4.66\times 10^{-2}$  & $0.61$ & $9.92\times 10^{-2}$ & $6.80\times 10^{-2}$ & $1.20\times 10^{-1}$ & $0.50$    \\
	$10^5$ & $9.68\times 10^{-3}$   &  $1.28\times 10^{-2}$  & $1.60\times 10^{-2}$ & $0.73$ & $7.80\times 10^{-2}$ & $3.20\times 10^{-2}$ & $8.61\times 10^{-2}$ & $0.23$    \\
	$10^6$ &  $2.58\times 10^{-3}$  &  $4.31\times 10^{-3}$  & $5.03\times 10^{-3}$  & $0.86$ & $1.52\times 10^{-2}$ & $1.37\times 10^{-2}$ & $2.05\times 10^{-2}$ & $0.82$    \\
	\bottomrule
\end{tabular}
	\label{Table1}
\end{table}

In Table~\ref{Table2}, we present the numerical results of the two-step estimator using different bandwidth selections: one is $\delta_{n,j}=n^{-1/4.9}$ (following Equation~\eqref{OptimalB} with $\gamma=0.1$), and the other is $\delta_{n,j}=n^{-1/3.1}$ (following Equation~\eqref{OptimalB} with $\gamma=1.9$). The former choice of bandwidth exhibits a slower rate of convergence towards zero. Consequently, it also demonstrates a slower rate of convergence for the bias but a faster rate of convergence for the SD compared to the latter bandwidth choice. On the other hand, the latter bandwidth selection demonstrates a faster rate of convergence for the bias but a slower rate of convergence for the SD. Given that the latter bandwidth choice yields a smaller bias, its confidence intervals will converge to the asymptotic regime~\eqref{CI} more rapidly, resulting in improved CP. This observation provides empirical support for the discussion in Section~\ref{BS}. 

\begin{table}[ht]
	\centering
	\captionsetup{labelfont=bf}
	\caption{Performance of the Two-Step Estimator for Different Bandwidth Selections}
	\begin{tabular}{ccccccccc}
		\toprule
		&\multicolumn{4}{c}{Kernel ($\gamma=0.1$)}&   \multicolumn{4}{c}{Kernel ($\gamma=1.9$)} \\  \cmidrule(l){2-5} \cmidrule(l){6-9} 
		$n$ &  \multicolumn{1}{c}{Bias} & \multicolumn{1}{c}{SD} & \multicolumn{1}{c}{RMSE} & \multicolumn{1}{c}{CP} & \multicolumn{1}{c}{Bias} & \multicolumn{1}{c}{SD} & \multicolumn{1}{c}{RMSE} & \multicolumn{1}{c}{CP} \\ \midrule
		$10^2$ &  $2.87\times 10 ^{-2}$ & $2.89\times 10 ^{-1}$ &  $2.90\times 10 ^{-1}$  & $0.67$ &    $6.70\times 10 ^{-2}$      &    $2.92\times 10 ^{-1}$ & $2.99\times 10 ^{-1}$ & $0.59$   \\
		$10^3$ &  $1.11\times 10 ^{-1}$  &  $1.01\times 10 ^{-1}$  & $1.50\times 10 ^{-1}$ & $0.53$ & $2.14\times 10 ^{-2}$  & $1.10 \times 10 ^{-1}$ & $1.12\times 10 ^{-1}$ & $0.76$  \\
		$10^4$ & $5.72\times 10 ^{-2}$   &  $3.68\times 10 ^{-2}$  & $6.80\times 10 ^{-2}$ & $0.31$ & $4.10\times 10^{-3}$ & $4.27\times 10^{-2}$ & $4.29\times 10^{-2}$ & $0.82$  \\
		$10^5$ & $2.80\times 10 ^{-2}$   & $1.17\times 10 ^{-2}$   & $3.04\times 10 ^{-2}$ & $0.18$ & $3.46\times 10^{-4}$ & $1.65\times 10^{-2}$ & $1.65\times 10^{-2}$ & $0.84$  \\
		$10^6$ & $1.12\times 10 ^{-2}$   &  $3.65\times 10 ^{-3}$  & $1.17\times 10 ^{-2}$ & $0.14$ & $3.41\times 10^{-5}$ & $6.26\times 10^{-3}$ & $6.26\times 10^{-3}$ &  $0.95$  \\
		\bottomrule
	\end{tabular}
	\label{Table2}
\end{table}

Figure~\ref{fig1} illustrates the rate of convergence of the two-step estimator employing the bandwidth $\delta_{n,j}=n^{-1/5}$ (as indicated in Equation~\eqref{OptimalBandwidth}). Through simulating data for various sample sizes $n$, we capture different RMSEs. Subsequently, by applying logarithmic transformation and conducting linear regression, we derive the rate of convergence of RMSE. Figure~\ref{fig1} demonstrates that the empirical rate of convergence of RMSE is approximately $n^{-0.3983}$, closely aligning with the theoretical rate of $\mathcal{O}_p(n^{-2/5})$ discussed in Section~\ref{BS}. Additionally, it establishes that the two-step estimator achieves a superior rate of convergence compared to the batching estimator developed by \cite{huanghong23}, which exhibits the best rate of $\mathcal{O}_p(n^{-1/3})$. 

\begin{figure}[ht]
	\captionsetup{labelfont=bf}
	\caption{The Rate of Convergence of the Two-Step Estimator with Bandwidth~\eqref{OptimalBandwidth}}
	\centering    
	\includegraphics[scale = 0.36]{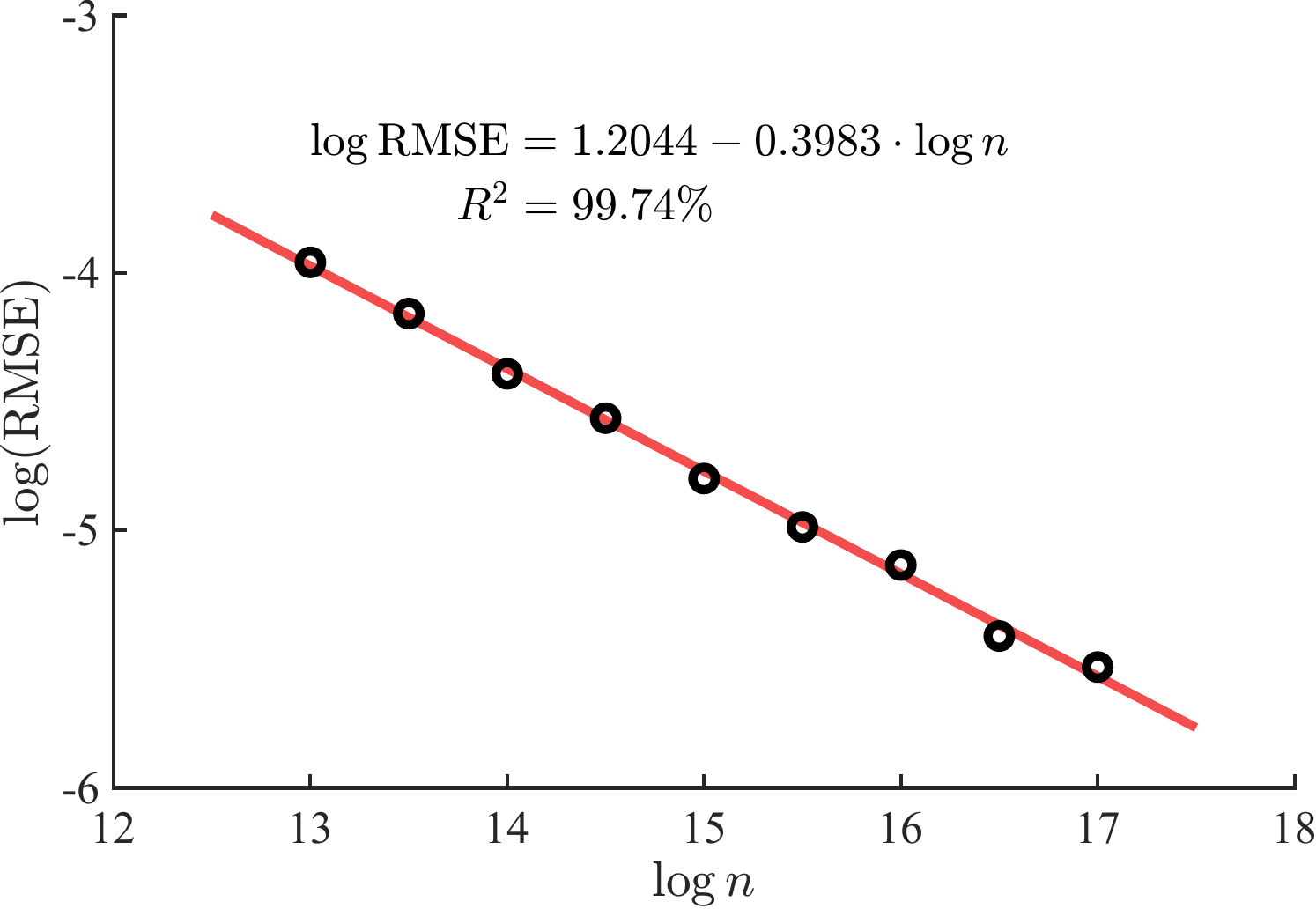}
	\captionsetup{labelfont=bf}
	\caption*{\footnotesize \it Note: The figure illustrates the rate of convergence of the two-step estimator, which can achieve $\mathcal{O}_p(n^{-2/5})$, in alignment with the theoretical result.}
	\label{fig1}
\end{figure}

\subsection{Multivariate-Quantiles Condition}\label{NS-2}

In this subsection, we investigate a multivariate case of Equations~\eqref{Delta-Gamma-X}--\eqref{Delta-Gamma-Y} with $m=2$ and $d=2$. This serves as a simple example of a financial market comprising three financial institutions and two risk factors. Let us denote the losses of these institutions as $X_1, X_2, Y$, which are contingent upon two independent risk factors denoted as $Z_1, Z_2$ and following a standard normal distribution. More precisely, we define the model as follows: 
\begin{align*}
	X_1 \ &= \ -0.15 +  0.6\cdot Z_1 + 0.8\cdot Z_1^2  -0.2\cdot Z_2 -0.2\cdot Z_2^2 , \\
	X_2 \ &= \ -0.12 -0.2\cdot Z_1 -0.2 \cdot Z_1^2 +0.8\cdot Z_2 + 0.6\cdot Z_2^2, \\
	Y \ &= \ -0.10 + 0.2\cdot Z_1 +  0.2\cdot Z_1^2 +  0.1\cdot Z_2 +  0.3\cdot Z_2^2. 
\end{align*}
We interpret the model in the following manner: $X_1, X_2, Y$ represent the loss variables for three financial institutions, while $Z_1, Z_2$ denote the loss variables for two securities held by these institutions. The correlation of risk arises since they are holding common securities. 

An essential task of the risk manager at institution $Y$ is to evaluate the magnitude of the risk associated with $Y$, taking into consideration the correlation with two other institutions $X_1$ and $X_2$. The risk manager can measure the risk of $Y$ with the following different manners:
\begin{enumerate}[label={\it Measure~(\roman*)}, align=left, leftmargin=*]
\item $\cvv$ defined in this paper:  $\pr\big\{ Y\leq \cvv \big| X_1=q_{1,\alpha_1}, X_2=q_{2,\alpha_2} \big\}=\beta$, 
\item ${\rm VaR}_\beta(Y)$ defined without conditions:  $\pr\big\{ Y\leq {\rm VaR}_\beta(Y) \big\}=\beta$, 
\item $\cvv$ with a univariate condition:  $\pr\big\{ Y\leq \cvv \big| X_1=q_{1,\alpha_1} \big\}=\beta$, 
\item $\cvv$ with a univariate condition:  $\pr\big\{ Y\leq \cvv \big| X_2=q_{2,\alpha_2} \big\}=\beta$, 
\item $\cvv$ with an aggregate condition: $\pr\big\{ Y\leq \cvv \big| X_1+X_2=q_{1,\alpha_1}+q_{2,\alpha_2} \big\}=\beta$. 
\end{enumerate}
It is important to emphasize that an analytical form for Measures~(i)--(v) is \textit{not} available within this model. Existing simulation-based methods \citep{huanghong23} can estimate Measures~(iii)--(v), but they are incapable of estimating Measure~(i).

To obtain estimates for Measure~(i) and Measures~(iii)--(v), we employ the two-step estimation approach. Specifically, we generate a large sample of $X_1,X_2,Y$ with $n=10^6$ observations to obtain point estimates. Subsequently, we conduct $100$ replications and compute the mean. For Measure~(i), we adopt a bandwidth of $\delta_{n,j}= n^{-1/5}$ (following Equation~\eqref{OptimalB} with $\gamma=1$), while for Measures~(iii)--(v), we use a bandwidth of $\delta_{n,j}= n^{-1/4}$ (following Equation~\eqref{OptimalB} with $\gamma=1$). Note that, we employ the standard normal kernel for the above kernel estimations. The estimation of Measure~(ii) is based on order statistics, as outlined in \cite{Serfling1980}. 

Table~\ref{Table3} showcases the estimation results for Measures~(i)--(v) at various $\alpha,\beta$ values. It can be regarded as a report submitted to the CEO of institution $Y$ by its risk manager, providing a comprehensive list of different scenarios.  In order to evaluate diversity, the risk manager systematically considers a range of possibilities involving different values of $\alpha$ and $\beta$. 

Table~\ref{Table3} (the fourth column) presents the results of $\cvv$ defined in this paper. Utilizing these findings, we can evaluate the increase in value-at-risk for institution $Y$ during crises that impact $X_1$ and $X_2$, as expressed by
$$
\Delta \cvv = \cv_{(\alpha_1,\alpha_2,\beta)}-\cv_{(0.5,0.5,\beta)}
$$
For instance, we observe that $\Delta \cv_{(0.95,0.95,0.95)}= 2.485-0.321=2.164$, indicating that the value-at-risk of institution $Y$ would experience a rise of $2.164$ when the losses of institutions $X_1$ and $X_2$ shift from the median state ($X_1=q_{1,0.5}$ and $X_2=q_{2,0.5}$) to the crisis state ($X_1=q_{1,0.95}$ and $X_2=q_{2,0.95}$). Considering different values of $\alpha_1,\alpha_2,\beta$ yields varying outcomes. For example, the observation
$$
\cv_{(0.8,0.95,\beta)}>\cv_{(0.95,0.8,\beta)}
$$ 
implies that the value-at-risk of $Y$ may exhibit a greater sensitivity to $X_2$ than to $X_1$. Consequently, we could infer that $X_2$ poses a higher risk contagion to $Y$ than $X_1$. 

Table~\ref{Table3} also highlights a significant disparity between Measure~(i) and Measures~(ii)--(v). In situations where the risk manager acknowledges the potential propagation of risks from both institutions $X_1$ and $X_2$ to $Y$, Measure~(i) quantifies the extent of financial contagion originating from both $X_1$ and $X_2$. However, when considering only a partial examination of either $X_1$ or $X_2$, Measures~(iii) and (iv) respectively measure the financial contagion from either $X_1$ or $X_2$. Consequently, Measures~(iii)--(iv) can be regarded as node-to-node measures of systemic risk, unable to capture the simultaneous impact of risks associated with both $X_1$ and $X_2$ within the financial networks.

In contrast, when employing an aggregate approach of $X_1+X_2$, Measure~(v) quantifies the overall financial contagion of $X_1+X_2$. It becomes indistinguishable whether the financial contagion originates from $X_1$ or $X_2$. For instance, within the aggregate condition, we are unable to differentiate among the following three events: $\{X_1=q_{1,\alpha_1}, X_2=q_{2,\alpha_2}\}$, $\{X_1=q_{2,\alpha_2}, X_2=q_{1,\alpha_1}\}$, and $\{X_1=(q_{1,\alpha_1}+q_{2,\alpha_2})/2, X_2=(q_{1,\alpha_1}+q_{2,\alpha_2})/2\}$, even though $q_{1,\alpha_1}$ and $q_{2,\alpha_2}$ may differ considerably. 

Conversely, if the risk manager disregards any consideration related to $X_1$ and $X_2$, that is, focusing solely on assessing the risk of $Y$ in isolation, Measure~(ii) quantifies the value-at-risk of $Y$. The outcomes obtained from Measures~(ii)--(v) in Table~\ref{Table3}  significantly deviate from the value of Measure~(i), thus emphasizing the potential for drawing erroneous conclusions regarding systemic risk when crucial considerations of correlated institutions are omitted.

\begin{table}[ht]
	\centering
    \captionsetup{labelfont=bf}
	\caption{The Estimation Results for Different Measures of Risk Given in Measures~(i)--(v)}
	\label{tab:three-line-chart}
	\begin{tabular}{cccccccc}
		\toprule
		~$\alpha_1$~ & ~$\alpha_2$~ & ~$\beta$~ & ~Measure~(i)~ & ~Measure~(ii)~ & ~Measure~(iii)~ & ~Measure~(iv)~ & ~Measure~(v)~ \\
		\midrule
		$0.50$ & $0.50$ &  & $0.047$ &  & $0.210$ & $0.189$ & $0.088$ \\
		$0.80$ & $0.80$ &  & $0.816$ &  & $0.505$ & $0.711$ & $1.083$ \\
		$0.80$ & $0.95$ & $0.80$ & $1.734$ & $0.608$ & $0.505$ & $1.316$ & $2.053$ \\
		~$0.95$~ & $0.80$ &  & $1.413$ &  & $0.878$ & $0.711$ & $2.157$ \\
		$0.95$ & $0.95$ &  & $2.337$ &  & $0.878$ & $1.316$ & $2.999$ \\ \midrule
		$0.50$ & $0.50$ &  & $0.321$ &  & $0.738$ & $0.561$ & $0.422$ \\
		$0.80$ & $0.80$ &  & $1.347$ &  & $1.101$ & $1.080$ & $1.170$ \\
		$0.80$ & $0.95$ & $0.95$ & $1.829$ & $1.336$ & $1.101$ & $1.878$ & $2.290$ \\
		$0.95$ & $0.80$ &  & $1.761$ &  & $1.488$ & $1.080$ & $2.418$ \\
		$0.95$ & $0.95$ &  & $2.485$ &  & $1.488$ & $1.878$ & $3.569$ \\
		\bottomrule
	\end{tabular}
\label{Table3}
\end{table}

\section{Conclusions}\label{conclusion}

In this paper, we have investigated the estimation problem of the systemic risk measure $\cv$ under a multivariate-quantiles condition. We have highlighted the limitations of existing model-based and simulation-based methods within this context. To address these limitations, we have introduced a two-step nonparametric estimation method that effectively incorporates the flexibility of Monte-Carlo simulation. Our proposed approach is capable of handling the multivariate-quantiles condition and is based on flexible model specifications. We have established the consistency and asymptotic normality of our two-step estimator. The derived asymptotic results provide valuable insights into bandwidth selection and the rate of convergence. Additionally, we have validated our theoretical findings through numerical experiments, which have confirmed the strong performance of our two-step estimator.

\appendix

\section{Proof of Lemma \ref{lemmaofMV}}\label{Lemma1}

We first define some notations that will be used in the proof. Recall that $x=(x_1,x_2,\ldots,x_m)^\top$ and $\delta_n=(\delta_{n,1},\delta_{n,2},\ldots, \delta_{n,m})^\top$. We denote 
\begin{equation*}
	x \odot \delta_{n} \ =\ \big( x_1 \delta_{n,1},~ x_2\delta_{n,2}, ~\ldots, ~x_m\delta_{n,m} \big)^\top \quad \text{and} \quad 
	x \oslash \delta_{n} \ =\ \left( \frac{x_1}{\delta_{n,1}},~ \frac{x_2}{\delta_{n,2}}, ~\ldots, ~\frac{x_m}{\delta_{n,m}} \right)^\top, 
\end{equation*}
where the symbols $\odot$ and $\oslash$ denote the element-wise product and element-wise division operations, respectively. In calculus, these operations are commonly referred to as the Hadamard product and Hadamard division, respectively. We also denote 
$$
\dd x  \ =\ \dd x_1 \dd x_2 \cdots \dd x_m \quad \text{and} \quad 
\dd \left( x \oslash \delta_{n} \right) \ =\  \dd \left(\frac{x_1}{\delta_{n,1}} \right) \dd \left( \frac{x_2}{\delta_{n,2}} \right) \cdots \dd \left( \frac{x_m}{\delta_{n,m}} \right). 
$$
To prove Lemma~\ref{lemmaofMV}, we first prove the following Lemma~\ref{Bochner}. Recall that $\Delta_n=\prod_{j=1}^m \delta_{n,j}$. 

\begin{lemma}\label{Bochner}
Suppose that $ H(x):\R^m\to\R$ is a multivariate function satisfying $\sup_{x \in\R^m} |  H(x) | <\infty$, $\int_{\R^m} | H(x) | \dd x <\infty$, and $\lim_{\|x\| \to\infty} \| x\| \big| H(x) \big| = 0$. Let $G(\cdot):\R^m\to\R$ be a multivariate function satisfy $\int_{\R^m} |G(x)| \dd x<\infty$. Let $\delta_{n,j}$ be sequences of positive constants satisfying $\lim_{n\to\infty}\delta_{n,j} =0 $ for all $j=1,\ldots,m$. Then, at every point $z $ of continuity of $G$, we have
$$
\lim_{n\to\infty}  \frac{1}{\Delta_n} \int_{\R^m}  H\left( x \oslash \delta_{n} \right) G(z -x) \dd x \ = \ G(z )\int_{\R^m} H(x) \dd x. 
$$
\end{lemma}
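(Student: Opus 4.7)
My plan is to reduce Lemma~\ref{Bochner} to the classical approximate-identity statement and then carry out the standard two-region Bochner argument, adapted to the product bandwidths. First, I change variables: substituting $u = x \oslash \delta_n$ (so $x = u \odot \delta_n$ and $\dd x = \Delta_n\dd u$) transforms the left-hand side of the claim into $\int_{\R^m} H(u) G(z - u\odot\delta_n) \dd u$, and the conclusion becomes $A_n := \int_{\R^m} H(u)\bigl[G(z - u\odot\delta_n) - G(z)\bigr]\dd u \to 0$. Equivalently, the rescaled family $H_n(t) := \Delta_n^{-1} H(t\oslash\delta_n)$ has constant mass $\int H$ and concentrates near the origin in a product sense, so the statement says $(H_n \ast G)(z) \to G(z)\int H$ at the continuity point $z$.

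Next, I would apply the standard split. Since $G$ is continuous at $z$, for any $\epsilon>0$ there is $\eta>0$ with $\|w\|<\eta$ implying $|G(z-w) - G(z)|<\epsilon$; as a free byproduct $G$ is bounded by some $M'$ on $\{w:\|w-z\|<\eta\}$. Partition $\R^m$ into the near region $N_n=\{u:\|u\odot\delta_n\|<\eta\}$ and the far region $F_n=\{u:\|u\odot\delta_n\|\geq\eta\}$. On $N_n$ the integrand is pointwise at most $\epsilon|H(u)|$, so this piece is $\leq\epsilon\|H\|_1$ uniformly in $n$. For the far region I use the elementary bound $\|u\odot\delta_n\|\leq\|u\|\max_j\delta_{n,j}$ to conclude $F_n\subseteq\{\|u\|\geq\eta/\max_j\delta_{n,j}\}$; since that lower bound diverges as $n\to\infty$ and $H\in L^1$, the contribution $|G(z)|\int_{F_n}|H(u)|\dd u$ vanishes.

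The main obstacle is the remaining term $J_n := \int_{F_n}|H(u)|\,|G(z - u\odot\delta_n)|\dd u$. A naive use of the decay $\|u\||H(u)|\to 0$ followed by the substitution $t = u\odot\delta_n$ generates a bound proportional to $\max_j\delta_{n,j}/\Delta_n$, which diverges for $m\geq 2$; this is the delicate point that must be handled correctly. To control $J_n$ under the stated hypotheses I would refine $F_n$ further into a bounded-$t$ subregion $\{\eta\leq\|u\odot\delta_n\|\leq\eta_1\}$, where local boundedness of $G$ inherited from continuity at $z$ gives a bound of the form $M'\int_{F_n}|H(u)|\dd u \to 0$, and a tail subregion $\{\|u\odot\delta_n\|>\eta_1\}$, where the $L^1$ integrability of $G$ combines with $\|u\||H(u)|<\epsilon$ in a coordinate-by-coordinate fashion. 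When $H$ inherits product structure $H(u)=\prod_j K(u_j)$ from Assumption~\ref{assu:kernel1} (as is the case in all the later uses of this lemma in the paper), the cleanest route is an iterated application of the one-dimensional Bochner lemma over the $m$ coordinates via Fubini: at each stage one reduces $m$ to $m-1$ while keeping the continuity point, thereby respecting the product bandwidths and completely bypassing the divergent $\max_j\delta_{n,j}/\Delta_n$ factor. Collecting the near, middle, and tail contributions and letting $n\to\infty$ before $\epsilon\to 0$ yields $A_n\to 0$.
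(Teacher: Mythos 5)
Your overall skeleton (substitute $u=x\oslash\delta_n$, split at $\|u\odot\delta_n\|\lessgtr\eta$ using continuity of $G$ at $z$, bound the near region by $\epsilon\|H\|_1$, and kill $|G(z)|\int_{F_n}|H|$ by integrability of $H$) is exactly the paper's argument, and your diagnosis of the cross term $J_n$ is the genuine crux: with only $\|u\|\,|H(u)|\to 0$ and $G\in L^1$, undoing the substitution produces a factor of order $\max_j\delta_{n,j}/\Delta_n$, which diverges for $m\geq 2$. For comparison, the paper disposes of this term in one line with the bound $\frac{\|\delta_n\|}{\delta}\int_{\R^m}|G|\,\dd x\cdot\sup_{\|x\|\geq\delta/\|\delta_n\|}\{\|x\||H(x)|\}$; this is precisely the ``naive'' estimate you flag, written without the Jacobian factor $1/\Delta_n$ that the measure $\dd(x\oslash\delta_n)$ forces when the integral of $G(z-x)$ is converted to $\|G\|_1$ (the bound is the correct one only when $m=1$). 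So the delicate point you isolate is real, and it is not actually resolved by the paper's displayed inequality either.

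However, your proposed repair does not close the gap. First, on the middle region $\{\eta\leq\|u\odot\delta_n\|\leq\eta_1\}$ you invoke a bound $M'$ on $G$, but continuity of $G$ at the single point $z$ gives boundedness only on a small ball around $z$; elsewhere $G$ is merely integrable and may be unbounded arbitrarily close to the annulus $\{\eta\leq\|z-w\|\leq\eta_1\}$, so ``local boundedness inherited from continuity at $z$'' is not available there. Second, the tail step (``$L^1$ integrability of $G$ combines with $\|u\||H(u)|<\epsilon$ coordinate-by-coordinate'') is not an argument, and no argument can exist under the stated hypotheses: for $m\geq 2$ one can build an integrable $G$ out of thin, tall bumps of summable mass located at distance one from $z$, and an $H$ made of matching bumps of height about $b_k/k$ on annuli of radius about $1/b_k$, so that $H$ is bounded and integrable with $\|u\||H(u)|\to 0$, yet the far-region contribution stays bounded away from zero along the bandwidth sequence $b_k\to 0$; this is why the classical multivariate Bochner lemma assumes $\|u\|^m|H(u)|\to 0$ (or a compactly supported / radially dominated kernel). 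Third, your fallback via the product structure $H(u)=\prod_jK(u_j)$ proves a different, weaker statement than the lemma as written --- it covers the kernels actually used later (where Assumption~\ref{assu:kernel1} makes $K$ compactly supported, so the far term vanishes identically for large $n$) --- and even the iterated one-dimensional route would require continuity of the intermediate partial convolutions and a domination step to interchange the limit with the outer integrals, which you do not supply. So, as a proof of the lemma as stated, the proposal is incomplete at precisely the step you yourself identified as delicate.
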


\begin{proof}
For any $\delta>0$, we have 
\begin{eqnarray}
\lefteqn{
	\left| \frac{1}{\Delta_n} \int_{\R^m}  H\left( x \oslash \delta_{n} \right) G(z -x) \dd x - G(z )\int_{\R^m} H(x) \dd x \right| } \nonumber \\
&=& \left| \frac{1}{\Delta_n} \int_{\R^m} \big[G(z -x) - G(z )\big] H\left( x \oslash \delta_{n} \right)  \dd x \right| \nonumber \\
&\leq&   \max_{\|x\|< \delta} \big| G(z -x)-G(z ) \big| \int_{\|x\|< \delta} \big| H\left( x \oslash \delta_{n} \right) \big| \dd \left(  x \oslash \delta_{n} \right) \nonumber \\
&&
~~~~~~ +\ \int_{\|x\|\geq \delta} \big| G(z -x) \big|   \big| H\left( x \oslash \delta_{n} \right)\big| \dd \left( x \oslash \delta_{n} \right) \ + \  \int_{\|x\|\geq \delta}  \big| G(z ) \big|  \big| H\left( x \oslash \delta_{n} \right) \big| \dd \left(  x \oslash \delta_{n} \right)   \nonumber \\
&\leq& \max_{\|x\|<\delta } \big| G(z -x)-G(z ) \big| \int_{ \R^m} \left| H\left( x \right) \right| \dd x \nonumber \\
&&
~~~~~~ +\ \frac{\|\delta_n\|}{\delta} \int_{\R^m} |G(x)| \dd x  \sup_{\| x \| \geq \delta/\|\delta_n\| } \Big\{  \|x\| \big| H( x ) \big| \Big\} \ + \ \big| G(z ) \big| \int_{\|x\|\geq \delta/\|\delta_n\|}  \left| H\left( x \right) \right| \dd x  \nonumber
\end{eqnarray}
which goes to zero by sending $n\to\infty$ and then sending $\delta\to 0$. Therefore, we conclude the proof. 
\end{proof}

Notice that, when $m=1$, Lemma \ref{Bochner} degenerates to the Bochner's lemma (see, e.g., \citealt{parzen1962estimation}, \citealt{bochner2005harmonic}, \citealt{LiuandHong}). However, Lemma \ref{Bochner} provides a stronger result for the multivariate case (i.e., $m\geq 1$). In light of Lemma \ref{Bochner}, we prove Lemma \ref{lemmaofMV} as follows. 

\begin{proof}{(Proof of Lemma \ref{lemmaofMV})} Due to the similar nature of the proof for Equations~\eqref{lem1:eqn:10} and \eqref{lem1:eqn:3} to that of Equation~\eqref{lem1:eqn:1}, and the similar nature of the proof for Equations~\eqref{lem1:eqn:20} to that of Equation~\eqref{lem1:eqn:2}, we focus solely on establishing the validity of Equations~\eqref{lem1:eqn:1} and \eqref{lem1:eqn:2} in the subsequent analysis.

Let us define $K_{n,j}(t) = K\big((t-x_j)/\delta_{n,j}\big)$ and $K_{n,j}^{(\ell_j)}(t) =   K^{(\ell_j)}\big((t-x_j)/\delta_{n,j}\big)/\delta_{n,j}^{\ell_j}$ for $t\in\R$. 
Then, we have 
\begin{eqnarray}
\lefteqn{\E \left[\frac{1}{n \Delta_n } \sum_{i=1}^n \D^\ell W_{n,i}(  q_\alpha ) \cdot \mathbb{I}\{Y_i\leq y_n\} \right] \ = \ \frac{1}{\Delta_n } \E \left[ \D^\ell W_{n,1}(  q_\alpha ) \cdot \mathbb{I}\{Y_1 \leq y_n\} \right] } \nonumber \\
&=& \frac{1}{\Delta_n  } \int_{\R} \cdots \int_{\R} ~ \prod_{j=1}^m  K^{(\ell_j)}_{n,j} ( q_{j,\alpha_j} ) \cdot g(x,y_n) ~ \dd x_1 \cdots \dd x_m \nonumber \\
&=& \int_{\R} ~ \frac{1}{\delta_{n,m}} K^{(\ell_m)}_{n,m} ( q_{m,\alpha_m} ) \cdots \int_{\R}~ \frac{1}{\delta_{n,1}} K^{(\ell_1)}_{n,1} ( q_{1,\alpha_1} ) \cdot g(x,y_n) ~ \dd x_1 \cdots \dd x_m. ~~ \label{lem1:eqn1}
\end{eqnarray}

Based on the assumption that $|\ell|\leq 1$, we can identify two possible scenarios: (i) $\ell_1=\ell_2=\cdots=\ell_m=0$; (ii) there exists a value of $j=1,2,\ldots,m$ such that $\ell_j=1$ and $\ell_i=0$ for $i\neq j$. In the case of scenario (i), we have 
\begin{eqnarray}
	\lefteqn{\E \left[ \frac{1}{n \Delta_n } \sum_{i=1}^n \D^\ell W_{n,i}(  q_\alpha ) \cdot \mathbb{I}\{Y_i\leq y_n\} \right]} \nonumber \\
	&=&  \frac{1}{\Delta_{n}} \int_{\R^m}  \prod_{j=1}^m K_{n,j}(q_{j,\alpha_j}) \cdot  \left[ g(x,y_n) - g(x,y) \right] \dd x + \frac{1}{\Delta_{n}} \int_{\R^m}  \prod_{j=1}^m K_{n,j}(q_{j,\alpha_j}) \cdot  g(x,y)  \dd x.  \label{lem1:eqn30}
\end{eqnarray}
By (iii) of Assumption~\ref{assu:dist} and Assumption~\ref{assu:kernel1}, we have the first term of Equation \eqref{lem1:eqn30} vanishes as $n\to\infty$. By (ii) and (iv) of Assumption~\ref{assu:dist}, we have $ g(x,y)$ is continuous at $x=  q_\alpha $ and $\int_{\R^m}| g(x,y)| \dd  x<\infty$. By Assumption~\ref{assu:kernel1}, we have $\sup_{t \in\R^m} | \prod_{j=1}^m K_{n,j}(t_j) |< \infty$, $\int_{ \R^m} |\prod_{j=1}^m K_{n,j}(t_j)| \dd t <\infty $, and $\lim_{\|t\| \to\infty} \|t\|$ $| \prod_{j=1}^m K_{n,j}(t_j)| = 0$. Then, by Lemma \ref{Bochner}, the second term of Equation \eqref{lem1:eqn3} converges to $ g(  q_\alpha ,y )$ as $n\to\infty$.  Thus, we obtain Equation \eqref{lem1:eqn:1}.

In the case of scenario (ii), we can proceed without loss of generality by assuming $\ell_1=1$ and $\ell_2=\cdots=\ell_m=0$. By employing the technique of integration by parts, we obtain 
\begin{eqnarray}
\lefteqn{  \int_{\R}~ \frac{1}{\delta_{n,1}} K^{(\ell_1)}_{n,1} ( q_{1,\alpha_1} ) \cdot g(x,y_n) ~ \dd x_1 \ = \ \int_{\R}~ \frac{1}{\delta_{n,1}^{\ell_1+1}} K^{(\ell_1 )}\left( \frac{q_{1,\alpha_1} - x_1}{\delta_{n,1}} \right) g(x,y_n) ~ \dd x_1 } \nonumber \\
	&=& -\left.\frac{1}{\delta_{n,1}^{\ell_1}} K^{(\ell_1 - 1)}\left( \frac{q_{1,\alpha_1} - x_1}{\delta_{n,1}} \right) g(x,y_n)\right|_{-\infty} ^{\infty} + \frac{1}{\delta_{n,1}^{\ell_1}} \int_{\R} K^{(\ell_1 -1)}\left( \frac{q_{1,\alpha_1} - x_1}{\delta_{n,1}} \right) \frac{\partial g}{\partial x_1}(x,y_n) ~ \dd  x_1 \nonumber \\
	&=& \frac{1}{\delta_{n,1}^{\ell_1}} \int_{\R} K^{(\ell_1 -1)}\left( \frac{q_{1,\alpha_1} - x_1}{\delta_{n,1}} \right) \frac{\partial g}{\partial x_1}(x,y_n) ~ \dd  x_1,  \label{lem1:eqn2} 
\end{eqnarray}
where the last equation holds by Assumption~\ref{assu:kernel1}. Substituting Equation \eqref{lem1:eqn2} into Equation \eqref{lem1:eqn1}, we have 
\begin{eqnarray}
\lefteqn{\E \left[ \frac{1}{n \Delta_n } \sum_{i=1}^n \D^\ell W_{n,i}(  q_\alpha ) \cdot \mathbb{I}\{Y_i\leq y_n\} \right]} \nonumber \\
&=&  \frac{1}{\Delta_{n}} \int_{\R^m}  \prod_{j=1}^m K_{n,j}(q_{j,\alpha_j}) \cdot  \D^\ell g(x,y_n)   \dd x \nonumber \\
&=& \frac{1}{\Delta_{n}} \int_{\R^m}  \prod_{j=1}^m K_{n,j}(q_{j,\alpha_j}) \cdot  \left[ \D^\ell g(x,y_n) -\D^\ell g(x,y) \right] \dd x + \frac{1}{\Delta_{n}} \int_{\R^m}  \prod_{j=1}^m K_{n,j}(q_{j,\alpha_j}) \cdot   \D^\ell g(x,y)  \dd x.~~~~~~ \label{lem1:eqn3}
\end{eqnarray}
By (iii) of Assumption~\ref{assu:dist} and Assumption~\ref{assu:kernel1}, we have the first term of Equation \eqref{lem1:eqn3} vanishes as $n\to\infty$. By (ii) and (iv) of Assumption~\ref{assu:dist}, we have $\D^\ell g(x,y)$ is continuous at $x=  q_\alpha $ and $\int_{\R^m}|\D^\ell g(x,y)| \dd  x<\infty$. By Assumption~\ref{assu:kernel1}, we have $\sup_{t \in\R^m} | \prod_{j=1}^m K_{n,j}(t_j) |< \infty$, $\int_{ \R^m} |\prod_{j=1}^m K_{n,j}(t_j)| \dd t <\infty $, and $\lim_{\|t\| \to\infty} \|t\|$ $| \prod_{j=1}^m K_{n,j}(t_j)| = 0$. Then, by Lemma \ref{Bochner}, the second term of Equation \eqref{lem1:eqn3} converges to $\D^\ell g(  q_\alpha ,y )$ as $n\to\infty$.  Thus, we obtain Equation \eqref{lem1:eqn:1}. 

Now, we prove Equation \eqref{lem1:eqn:2}. 
By a direct consideration, we have 
\begin{eqnarray}
\lefteqn{ n \delta_{n}^{2\ell } \Delta_n \cdot \Var \left( \frac{1}{n \Delta_n  } \sum_{i=1}^n \D^\ell W_{n,i}(  q_\alpha ) \cdot \mathbb{I}\{Y_i\leq y_n\} \right)} \nonumber \\ 
	&=&   \frac{\delta_{n}^{2\ell }}{ \Delta_n } \cdot \E \left[ [ \D^\ell W_{n,1}(  q_\alpha )]^2 \cdot \mathbb{I}\{Y_i\leq y_n\} \right] \ -\ \delta_{n}^{2\ell } \Delta_n \cdot \left( \E \left[ \frac{1}{\Delta_n  } \D^\ell W_{n,1}(  q_\alpha ) \cdot \mathbb{I}\{Y_i\leq y_n\} \right]\right)^2. \label{lem1:eqn:0}
\end{eqnarray}
By Equation \eqref{lem1:eqn:1}, we have the second term of Equation \eqref{lem1:eqn:0} vanishes as $n\to\infty$. 
By (ii) and (iv) of Assumption~\ref{assu:dist}, we have $g(x,y)$ is continuous at $x=  q_\alpha $ and $\int_{\R^m}|g(x,y)| \dd  x<\infty$. By Assumption~\ref{assu:kernel1}, we have $\sup_{t \in\R^m} | \D^\ell \prod_{j=1}^m K_{n,j}(t_j) |^2 < \infty$, $\int_{ \R^m} |\D^\ell \prod_{j=1}^m  K_{n,j}(t_j)|^2 \dd t <\infty $, and $\lim_{\|t\| \to\infty} \|t\| | \D^\ell \prod_{j=1}^m K_{n,j}(t_j)|^2$ $= 0$. 
Then, it follows that  
\begin{eqnarray}
\lefteqn{ \frac{\delta_{n}^{2\ell} }{\Delta_n} \E \left[ [ \D^\ell W_{n,1}(  q_\alpha )]^2 \cdot \mathbb{I}\{Y_i\leq y_n\} \right] } \nonumber \\
&=&  \frac{1 }{\Delta_n } \int_{\R^m} ~ \Bigg[ \prod_{j=1}^m  K^{(\ell_j)} \left( \frac{q_{j,\alpha_j} -x_j}{\delta_{n,j}} \right) \Bigg]^2  g(x,y_n) ~ \dd x \nonumber \\
&=&  \frac{1 }{\Delta_n } \int_{\R^m} ~ \Bigg[ \prod_{j=1}^m  K^{(\ell_j)} \left( \frac{q_{j,\alpha_j} -x_j}{\delta_{n,j}} \right) \Bigg]^2  [g(x,y_n)-g(x,y)] ~ \dd x \label{lem7:eqn24} \\
&&  ~~~~~~~~~~~~~~~~~~~~~~~~~~~~~~~~~~~~~~ +\frac{1 }{\Delta_n } \int_{\R^m} ~ \Bigg[ \prod_{j=1}^m  K^{(\ell_j)} \left( \frac{q_{j,\alpha_j} -x_j}{\delta_{n,j}} \right) \Bigg]^2  g(x,y) ~ \dd x  \label{lem7:eqn245} \\
&\to&  g(  q_\alpha , y) \int_{\R^m} \Bigg[ \D^\ell \prod_{j=1}^m  K ( x_j )  \Bigg]^2 \dd  x \quad \mbox{as $n\to\infty$,} \label{lem7:eqn25}
\end{eqnarray}
where the limit of Equation~\eqref{lem7:eqn24} tends to zero as $n$ approaches infinity due to the fulfillment of condition~(iii) in Assumption~\ref{assu:dist} and Assumption~\ref{assu:kernel1}, and Equation~\eqref{lem7:eqn245} converges to Equation~\eqref{lem7:eqn25} as $n$ tends to infinity based on the application of Lemma~\ref{Bochner}. By substituting Equation~\eqref{lem7:eqn25} into Equation~\eqref{lem1:eqn:0}, we derive Equation~\eqref{lem1:eqn:2}. Therefore, we conclude the proof.
\end{proof}

\section{Proof of Lemma \ref{2error0}}\label{Lemma2}

\begin{proof}
	By applying Chebyshev's inequality (Theorem 1.6.4 of \citealt{durrett2019probability}), for any $\varepsilon>0$, we have 
	$$
	\pr \left\{ | \bar{R}_n(   q_\alpha  ,y) - \E[ \bar{R}_n(   q_\alpha  ,y) ] | \geq \varepsilon \right\} \leq \frac{ \Var\big( \bar{R}_n(   q_\alpha  ,y)\big)}{\varepsilon^2}. 
	$$
	Considering Lemma \ref{lemmaofMV} and the assumption that $n \Delta_n \to \infty$ as $n$ tends to infinity, it follows that $\Var\big( \bar{R}_n(   q_\alpha  ,y)\big) \to 0$ as $n$ approaches infinity. Consequently, we can conclude that Error II--1 converges to zero as $n$ tends to infinity. By a similar line of reasoning, we can demonstrate that Error II--2 also converges to zero as $n$ tends to infinity. Therefore, we conclude the proof.
\end{proof}

\section{Proof of Lemma \ref{2error}}\label{Lemma3}

\begin{proof}
By Taylor's expansion, we have
\begin{eqnarray}
	\lefteqn{ \bar{R}_n(   \hat{q}_\alpha  ,y) -  \bar{R}_n(   q_\alpha  ,y) } \label{lem2:eqn1} \\
	&=& {1\over n\Delta_n} \sum_{i=1}^{n} W_{n,i}(   \hat{q}_\alpha  ) \mathbb{I}\{Y_i\leq y\} - {1\over n\Delta_n} \sum_{i=1}^{n} W_{n,i}(  q_\alpha  ) \mathbb{I}\{Y_i\leq y\}  \nonumber\\
	& =& \sum_{|\ell|=1} \frac{1}{\ell !}  \left(  \hat{q}_\alpha -  q_\alpha \right)^\ell \frac1{n \Delta_n } \sum_{i=1}^n \D^\ell W_{n, i}(  q_\alpha )  \mathbb{I}\{Y_i\leq y\} \label{lem2:1term}\\
	 && ~~~~~~~~~~~~~~~~~~~~~~~~~~~~~~~~~~~~~~~ + \sum_{|\ell|=1} o_p\left( \left(  \hat{q}_\alpha -  q_\alpha \right)^\ell \frac1{n  \Delta_n } \sum_{i=1}^n \D^\ell W_{n, i}(  q_\alpha )  \mathbb{I}\{Y_i\leq y\} \right). ~~~~ \label{lem2:2term}
\end{eqnarray}
By Lemma \ref{lemmaofMV}, for any $\ell$ such that $|\ell|=1$, we have
\begin{align}
	 \E\left[ \frac{1}{n \Delta_n } \sum_{i=1}^n \D^\ell W_{n, i}(  q_\alpha )  \mathbb{I}\{Y_i\leq y\}\right] \ &\to \  \D^\ell g(q_\alpha,y), \label{lem2:var0} \\
	  \Var\left(  \frac{1}{n  \Delta_n } \sum_{i=1}^n \D^\ell W_{n, i}(  q_\alpha )  \mathbb{I}\{Y_i\leq y\} \right) \ &=\  \mathcal{O} \left( \frac{1}{n\delta_{n}^{2\ell} \Delta_n} \right), \label{lem2:var} 
\end{align}
as $n\to\infty$. By the assumption $n\delta_{n}^{2\ell} \Delta_n \to \infty$ as $n\to\infty$, we have Equation \eqref{lem2:var} vanishes as $n\to\infty$. 
Then, by Chebyshev's inequality (Theorem 1.6.4 of \citealt{durrett2019probability}), for $|\ell|=1$, we have 
\begin{equation}
 \frac{1}{n \Delta_n } \sum_{i=1}^n \D^\ell W_{n, i}(  q_\alpha )  \mathbb{I}\{Y_i\leq y\} \ \to \  \D^\ell g(q_\alpha,y), 
\label{LLN1}
\end{equation}
in probability as $n\to\infty$. By continuous mapping theorem (Section 1.7 of \citealt{Serfling1980}) and Slutsky's theorem (Section 1.5.4 of \citealt{Serfling1980}), we have $\left(  \hat{q}_\alpha -  q_\alpha \right)^\ell $ converges to zero almost surely as $n\to\infty$. Thus, by Slutsky's theorem, we have Equation \eqref{lem2:1term} converges to zero in probability as $n\to\infty$. Similarly, we have Equation \eqref{lem2:2term} converges to zero in probability as $n\to\infty$. Thus, we conclude the proof of Error~I--1. Similarly, we can prove that for Error~I--2. Therefore, we conclude the proof. 
\end{proof}

\section{Proof of Theorem \ref{thm34}}\label{Theorem2}

\begin{proof}
The proof is based on the proof presented in Theorem~2 of \cite{huanghong23}. For any sufficiently small $\tilde{\varepsilon}>0$, it holds that both ${\rm CoVaR}_{\alpha,\beta} -\tilde{\varepsilon}$ and ${\rm CoVaR}_{\alpha,\beta} +\tilde{\varepsilon}$ belong to $\mathcal{Y}$. According to the definition~\eqref{defCoVaRGeneral} of ${\rm CoVaR}_{\alpha,\beta}$, we can express this as follows: 
	\begin{equation}\label{eq16}
		F_{Y|X}({\rm CoVaR}_{\alpha,\beta} -\tilde{\varepsilon}\,|\, q_\alpha )< \beta <   F_{Y|X}({\rm CoVaR}_{\alpha,\beta} +\tilde{\varepsilon}\,|\, q_\alpha ). 
	\end{equation}
	Let $\varepsilon_1 = \beta-F_{Y|X}({\rm CoVaR}_{\alpha,\beta} -\tilde{\varepsilon}\,|\, q_\alpha )$, $\varepsilon_2 = F_{Y|X}({\rm CoVaR}_{\alpha,\beta} +\tilde{\varepsilon}\,|\, q_\alpha )-\beta$, and $\varepsilon = \min\{\varepsilon_1, \varepsilon_2\}$.
	By Theorem \ref{consistencyofFbar}, there exists $M>0$ such that for $m>M$, the following holds:
	\begin{equation}\label{thm1eqn}
	\pr \left\{ \left| \hat{F}_{n}(y, \hat{q}_\alpha) - F_{Y|X}(y\,|\, q_\alpha ) \right|\geq \varepsilon \right\} \ \to \ 0 \quad \mbox{as $n\to\infty$}, 
	\end{equation}
	 where $y\in \{ {\rm CoVaR}_{\alpha,\beta} -\tilde{\varepsilon},~{\rm CoVaR}_{\alpha,\beta} +\tilde{\varepsilon} \}$. 
	Moreover, if 
	$$
	\big| \hat{F}_{n}({\rm CoVaR}_{\alpha,\beta} - \tilde{\varepsilon}, \hat{q}_\alpha) - F_{Y|X}({\rm CoVaR}_{\alpha,\beta} - \tilde{\varepsilon}\,|\, q_\alpha ) \big|<\varepsilon,
	$$ 
	it implies $\hat{F}_{n}({\rm CoVaR}_{\alpha,\beta} -\tilde{\varepsilon}, \hat{q}_\alpha)< \beta$. Similarly, if
	$$
	\big| \hat{F}_{n}({\rm CoVaR}_{\alpha,\beta} + \tilde{\varepsilon}, \hat{q}_\alpha) - F_{Y|X}({\rm CoVaR}_{\alpha,\beta} + \tilde{\varepsilon}\,|\, q_\alpha ) \big|<\varepsilon,
	$$ 
	it implies $\hat{F}_{n}({\rm CoVaR}_{\alpha,\beta} +\tilde{\varepsilon}, \hat{q}_\alpha)>\beta$. Therefore, when $m>M$, we obtain:  
	\begin{eqnarray}
		\lefteqn{\pr  \left\{ \hat{F}_{n}({\rm CoVaR}_{\alpha,\beta} -\tilde{\varepsilon}, \hat{q}_\alpha)< \beta <   \hat{F}_{n}({\rm CoVaR}_{\alpha,\beta} +\tilde{\varepsilon}, \hat{q}_\alpha) \right\}}\nonumber \\
		&\geq& \pr  \left\{ \left\{ \left| \hat{F}_{n}({\rm CoVaR}_{\alpha,\beta} - \tilde{\varepsilon}, \hat{q}_\alpha) - F_{Y|X}({\rm CoVaR}_{\alpha,\beta} - \tilde{\varepsilon}\,|\, q_\alpha ) \right|<\varepsilon \right\}\right. \nonumber \\
		&&~~~~ \left. \cap \
		\left\{ \left| \hat{F}_{n}({\rm CoVaR}_{\alpha,\beta} + \tilde{\varepsilon}, \hat{q}_\alpha) - F_{Y|X}({\rm CoVaR}_{\alpha,\beta} + \tilde{\varepsilon}\,|\, q_\alpha ) \right|<\varepsilon \right\} \right\} \nonumber \\
		&\geq & 1-\pr \left\{ \left| \hat{F}_{n}({\rm CoVaR}_{\alpha,\beta} - \tilde{\varepsilon}, \hat{q}_\alpha) - F_{Y|X}({\rm CoVaR}_{\alpha,\beta} - \tilde{\varepsilon}\,|\, q_\alpha ) \right|\geq \varepsilon \right\} \nonumber \\
		& &~~~~ -\ \pr \left\{ \left| \hat{F}_{n}({\rm CoVaR}_{\alpha,\beta} + \tilde{\varepsilon}, \hat{q}_\alpha) - F_{Y|X}({\rm CoVaR}_{\alpha,\beta} + \tilde{\varepsilon}\,|\, q_\alpha ) \right|\geq \varepsilon \right\} \label{Bonferroni}\\
		&\to& 1 \quad \mbox{as $n\to\infty$}, \label{limitofEq33}
	\end{eqnarray}
	where Equation \eqref{Bonferroni} follows Bonferroni inequality and Equation \eqref{limitofEq33} follows Equation \eqref{thm1eqn}. Moreover, by Section 1.1.4 of \cite{Serfling1980}, we have
	$$
	\hat{F}_{n}({\rm CoVaR}_{\alpha,\beta} -\tilde{\varepsilon}, \hat{q}_\alpha)< \beta <   \hat{F}_{n}({\rm CoVaR}_{\alpha,\beta} +\tilde{\varepsilon}, \hat{q}_\alpha)
	$$ 
	if and only if 
	$$
	{\rm CoVaR}_{\alpha,\beta} -\tilde{\varepsilon}< \hat{F}^{-1}_{n}(\beta, \hat{q}_\alpha) =\hat{Y} <  {\rm CoVaR}_{\alpha,\beta} +\tilde{\varepsilon}. 
	$$ 
Therefore, we conclude the proof. 
\end{proof}

\section{Proof of Lemma \ref{CRerrorIII}}\label{Lemma4}

\begin{proof}
First, we prove Equation~\eqref{proofofCLT30}. By Taylor's expansion, we have
	\begin{eqnarray}
		\lefteqn{ \E\Big[ \bar{Q}_n(  q_\alpha )-f_X(  q_\alpha ) \Big] } \nonumber \\
		&=& \frac1{\Delta_n} \int_{\R^m} \prod_{j=1}^m K\left( \frac{q_{j,\alpha_j}- x_j}{\delta_{n,j}} \right) f_X(  x )  \dd x  - f_X(  q_\alpha ) \nonumber \\
		&=& \int_{\R^m} \prod_{j=1}^m K(x_j) \Big[ f_X(  q_\alpha + \delta_n \odot x) - f_X(  q_\alpha ) \Big] \dd x \label{Hadamard} \\
		&=& \int_{\R^m} \prod_{j=1}^m K(x_j) \Bigg[ \sum_{|\ell|=1}  \D^\ell f_X(q_\alpha) \cdot (\delta_n \odot x)^\ell + \sum_{|\ell|=2} \frac1{\ell !}  \D^\ell f_X(q_\alpha) \cdot (\delta_n \odot x)^\ell  + \sum_{|\ell|=2} o(1) \delta_n^\ell x^\ell \Bigg] \dd x~~~~ \nonumber \\
		&=& \frac12 \sum_{j=1}^m  \delta_{n,j}^2 \frac{\partial^2 f_X}{\partial x_j^2}(q_\alpha)   \int_{\R} x_j^2K( x_j)\dd x_j  + \sum_{j=1}^m o( \delta_{n,j}^2 ), \label{thm3:eqn4}
	\end{eqnarray}
where $\delta_n \odot x$ in Equation~\eqref{Hadamard} is the element-wise product (also known as Hadamard product), i.e., $\delta_n \odot x=(\delta_{n,1}x_1, \ldots, \delta_{n,m} x_m)^\top$, and Equations~\eqref{Hadamard} and \eqref{thm3:eqn4} hold by Assumption~\ref{assu:kernel1}. 

Similarly, we can prove Equation~\eqref{proofofCLT3}. By Taylor's expansion, we have 
	\begin{eqnarray}
		\lefteqn{ \E\Big[ \bar{R}_n(  q_\alpha ,y_n)-g(  q_\alpha ,y) \Big] } \nonumber \\
		&=& \frac1{\Delta_n} \int_{\R^m} \prod_{j=1}^m K\left( \frac{q_{j,\alpha_j}- x_j}{\delta_{n,j}} \right) g(  x,y_n )  \dd x  - g(q_\alpha,y) \nonumber \\
		&=& \int_{\R^m} \prod_{j=1}^m K(x_j) \Big[ g(  q_\alpha + \delta_n \odot x, y_n) - g(  q_\alpha , y) \Big] \dd x \nonumber \\
		&=& \int_{\R^m} \prod_{j=1}^m K(x_j) \Bigg[ \sum_{|\ell|=1}  \D^\ell g(q_\alpha,y) \cdot (\delta_n \odot x)^\ell + \sum_{|\ell|=2} \frac1{\ell !}  \D^\ell g(q_\alpha,y) \cdot (\delta_n \odot x)^\ell  \nonumber\\ 
		&& ~~~~~~~~~~~~~~~~~~~~ +  f(  q_\alpha , y) \cdot (y_n-y)  + \sum_{|\ell|=2} o(1) \delta_n^\ell x^\ell + o(y_n-y) \Bigg] \dd x \nonumber 
		\\
		&=& \frac12 \sum_{j=1}^m  \delta_{n,j}^2 \frac{\partial^2 g}{\partial x_j^2}(q_\alpha,y)   \int_{\R} x_j^2K( x_j)\dd x_j +  f(  q_\alpha ,y) \cdot (y_n-y) + \sum_{j=1}^m o( \delta_{n,j}^2 ) + o(y_n-y), \nonumber
	\end{eqnarray}
Therefore, we conclude the proof. 
\end{proof}

\section{Proof of Lemma \ref{CRerrorII}}\label{Lemma5}

\begin{proof}
Let Component~(II) be denoted as $\sum_{i=1}^n N_{n,i}$, where 
\begin{align*}
	N_{n,i} \ = \  (n\Delta_n)^{-\frac12} & \Bigg\{ f_X(  q_\alpha )\Big( W_{n,i}(  q_\alpha ) \mathbb{I}\{Y_i\leq y_n\}- \E\big[W_{n,i}(  q_\alpha ) \mathbb{I}\{Y_i\leq y_n \}\big]  \Big)  \\
	& ~~~~~~~~~~~~~~~~~~~~~~~~~~~~~~~~~~~~ -g(  q_\alpha ,y)\Big( W_{n,i}(  q_\alpha ) - \E\big[ W_{n,i}(  q_\alpha )\big]  \Big)  \Bigg\}. 
\end{align*}
By applying Lemma~\ref{lemmaofMV}, we have 
\begin{eqnarray}
\lefteqn{ \lim_{n\to\infty} \frac1{\Delta_n}  \Cov\Big( W_{n,1}(  q_\alpha ) \mathbb{I}\{Y_1 \leq y_n\}, W_{n,1}(  q_\alpha )  \Big) } \nonumber \\
&=& \lim_{n\to\infty} \frac1{\Delta_n} \E\Big[ W_{n,1}^2(q_\alpha) \mathbb{I}\{Y_1\leq y_n \} \Big] - \lim_{n\to\infty} \frac1{\Delta_n}  \left\{ \E\Big[ W_{n,1}(q_\alpha) \mathbb{I}\{Y_1\leq y_n \} \Big]\E\Big[ W_{n,1}(q_\alpha) \Big] \right\} \nonumber \\
&=& g(  q_\alpha , y) \int_{\R^m} \Big[ \prod_{j=1}^m  K ( x_j )  \Big]^2 \dd  x. \label{COV}
\end{eqnarray}
Let $\sigma_n^2= \Var(\sum_{i=1}^n N_{n,i}) $. Then, the asymptotic variance of Component~(II) is given by 
\begin{eqnarray}
\lefteqn{\lim_{n\to\infty} \sigma_n^2  \ =\  \lim_{n\to\infty} n \Var(N_{n, 1})} \nonumber \\
&=&  \lim_{n\to\infty} \frac{1}{\Delta_n}\Bigg[ f_X^2(  q_\alpha ) \Var\Big( W_{n,1}(  q_\alpha ) \mathbb{I}\{Y_1 \leq y_n\}  \Big) + g^2(  q_\alpha ,y) \Var\Big( W_{n,1}(  q_\alpha ) \Big)   \nonumber \\
&&~~~~~~~~~~~~~~~~~~~~~~~~~  - 2 f_X(  q_\alpha )g(  q_\alpha ,y) \Cov\Big( W_{n,1}(  q_\alpha ) \mathbb{I}\{Y_1 \leq y_n\}, W_{n,1}(  q_\alpha )  \Big)  \Bigg] \nonumber \\
&=& \Big[ f_X^2(  q_\alpha ) g(  q_\alpha , y) + f_X(  q_\alpha ) g^2(  q_\alpha ,y) -2 f_X(  q_\alpha ) g^2(  q_\alpha ,y) \Big]  \int_{\R^m} \Big[ \prod_{j=1}^m  K ( x_j )  \Big]^2 \dd x  \ = \ \sigma^2, \label{finiteV}
\end{eqnarray}
where Equation~\eqref{finiteV} holds by applying Lemma~\ref{lemmaofMV} and Equation~\eqref{COV}. 

To prove Lemma~\ref{proofofCLT2}, we can check the Lindeberg-Feller's condition, i.e., for any $\varepsilon>0$, we have 
\begin{equation}
\lim _{n \rightarrow \infty} \frac{1}{\sigma_n^2} \sum_{i=1}^n \E\left[N_{n, i}^2 \cdot \mathbb{I}\left\{\left|N_{n, i}\right| \geq \varepsilon \sigma_n\right\}\right]=0. 
\label{L-Fcondition}
\end{equation}
By Equation \eqref{finiteV}, we can simplify the Lindeberg-Feller's condition to the following expression: 
\begin{eqnarray}
	\lefteqn{\lim _{n \rightarrow \infty} \sum_{i=1}^n \E\left[N_{n, i}^2 \cdot \mathbb{I}\left\{\left|N_{n, i}\right| \geq \varepsilon \sigma_n\right\}\right]} \label{eqn24} \\
	&=&  \lim_{n\to\infty} \frac{1}{\Delta_n} \E\Bigg[ \Bigg\{ f_X(  q_\alpha )\Big( W_{n,i}(  q_\alpha ) \mathbb{I}\{Y_i\leq y_n\}- \E[W_{n,i}(  q_\alpha ) \mathbb{I}\{Y_i\leq y_n \}]  \Big)  \label{eqn240} \\
	&&  ~~~~~~~~~~~~~~~~~~~~~~~~~~~ -g(  q_\alpha ,y)\Big( W_{n,i}(  q_\alpha ) - \E[ W_{n,i}(  q_\alpha )]  \Big)  \Bigg\}^2 \mathbb{I}\left\{\left|N_{n, i}\right| \geq \varepsilon \sigma_n\right\} \Bigg]
	\ =\ 0. \label{eqn2400}
\end{eqnarray}
We can further simplify the above condition by introducing some inequalities. Let's define $E_n$ as follows. 
\begin{align*}
	&E_n \ = \\
	& \E\Big[ \left| f_X(  q_\alpha )\Big( W_{n,i}(  q_\alpha ) \mathbb{I}\{Y_i\leq y_n\}- \E[W_{n,i}(  q_\alpha ) \mathbb{I}\{Y_i\leq y_n \}]  \Big) -g(  q_\alpha ,y)\Big( W_{n,i}(  q_\alpha ) - \E[ W_{n,i}(  q_\alpha )]  \Big)  \right|^{2+\gamma} \Big]^{\frac{1}{2+\gamma}}. 
\end{align*}
Using H\"older's inequality, we can bound the expectation term in Equations~\eqref{eqn240}--\eqref{eqn2400} as follows: for any $\gamma>0$, we have 
\begin{eqnarray}
	\lefteqn{ \E\Bigg[ \Bigg\{ f_X(  q_\alpha )\Big( W_{n,i}(  q_\alpha ) \mathbb{I}\{Y_i\leq y_n\}- \E[W_{n,i}(  q_\alpha ) \mathbb{I}\{Y_i\leq y_n \}]  \Big) } \nonumber \\
	&&  ~~~~~~~~~~~~~~~~~~~ -g(  q_\alpha ,y)\Big( W_{n,i}(  q_\alpha ) - \E[ W_{n,i}(  q_\alpha )]  \Big)  \Bigg\}^2 \mathbb{I}\left\{\left|N_{n, i}\right| \geq \varepsilon \sigma_n\right\} \Bigg] \nonumber \\
	&\leq & \E\Bigg[ \left| f_X(  q_\alpha )\Big( W_{n,i}(  q_\alpha ) \mathbb{I}\{Y_i\leq y_n\}- \E[W_{n,i}(  q_\alpha ) \mathbb{I}\{Y_i\leq y_n \}]  \Big) \right. \nonumber \\
	&& \left. ~~~~~~~~~~~~~~~~~~~ -g(  q_\alpha ,y)\Big( W_{n,i}(  q_\alpha ) - \E[ W_{n,i}(  q_\alpha )]  \Big)  \right|^{2+\gamma} \Bigg]^{\frac{2}{2+\gamma}} \cdot \pr\Big\{ |N_{n,1}|\geq \varepsilon \sigma_n \Big\}^{\frac{\gamma}{2+\gamma}} \nonumber \label{eqn25} \\
	&=&  E_n^2 \cdot \pr\Big\{ |N_{n,1}|\geq \varepsilon \sigma_n \Big\}^{\frac{\gamma}{2+\gamma}}.  \label{eqn250}
\end{eqnarray}
Using Chebyshev's inequality and Equation~\eqref{finiteV}, we can bound the probability term in Equation~\eqref{eqn250} as follows:  
$$
\pr\Big\{ |N_{n,1}|\geq \varepsilon \sigma_n \Big\}^{\frac{\gamma}{2+\gamma}} \ \leq \  \left\{ \frac{\E \big[N_{n,1}^2 \big]}{\varepsilon^2 \sigma_n^2} \right\}^{\frac{\gamma}{2+\gamma}} \ =\ \big( n \varepsilon^2 \big)^{-\frac{\gamma}{2+\gamma}}. 
$$
Using Minkowski's inequality, we can bound the term $E_n$ in Equation~\eqref{eqn250} as follows: 
\begin{align*}
E_n \ \leq \ & f_X(  q_\alpha ) \E\Big[ \left| W_{n,i}(  q_\alpha ) \mathbb{I}\{Y_i\leq y_n\}  \right|^{2+\gamma} \Big]^{\frac{1}{2+\gamma}} \ + \ g(  q_\alpha ,y) \E\Big[ \left| W_{n,i}(  q_\alpha )  \right|^{2+\gamma} \Big]^{\frac{1}{2+\gamma}} \\
& + \ f_X(  q_\alpha ) \E\Big[W_{n,i}(  q_\alpha ) \mathbb{I}\{Y_i\leq y_n\} \Big]  \ + \ g(  q_\alpha , y) \E\Big[W_{n,i}(  q_\alpha )\Big].  
\end{align*}
By Equation~\eqref{lem1:eqn:3} in Lemma \ref{lemmaofMV}, we have  
$$
\lim_{n\to\infty} \frac{1 }{\Delta_n}\E\Big[ \left| W_{n,i}(  q_\alpha ) \mathbb{I}\{Y_i\leq y_n\}  \right|^{2+\gamma} \Big] = g(  q_\alpha ,y) \int_{\R^m} \Big[ \prod_{j=1}^m  K ( x_j )  \Big]^{2+\gamma} \dd x. 
$$
From this result, we can deduce that:
$$
\E \Big[ | W_{n,i}(  q_\alpha ) \mathbb{I}\{Y_i\leq y_n\} |^{2+\gamma} \Big]^{\frac{1}{2+\gamma}} = \mathcal{O} \Big( \Delta_n^{\frac1{2+\gamma}}\Big) \quad \mbox{as $n\to\infty$}. 
$$ 
Similarly, by Lemma~\ref{lemmaofMV}, we have 
$$
\E \Big[ | W_{n,i}(  q_\alpha ) |^{2+\gamma} \Big]^{\frac{1}{2+\gamma}} = \mathcal{O} \Big( \Delta_n^{\frac1{2+\gamma}}\Big) \quad \mbox{as $n\to\infty$}. 
$$
Therefore, we can conclude that: $E_n= \mathcal{O} ( \Delta_n^{1/(2+\gamma)})$ as $n\to\infty$. 

Consequently, as $n$ tends to infinity, the following equality holds: 
$$
\text{Equation~\eqref{eqn24}} = \mathcal{O} \Big( \Delta_n^{\frac{-\gamma}{2+\gamma}}\Big) \cdot \Big( n \varepsilon^2 \Big)^{-\frac{\gamma}{2+\gamma}},
$$
which converges to zero by Assumption~\ref{assu:kernel3}. 
Therefore, the Lindeberg-Feller's condition \eqref{L-Fcondition} holds. By Lindeberg-Feller's Theorem \citep{durrett2019probability}, we conclude the proof. 
\end{proof}

\section{Proof of Lemma \ref{CRerrorI}}\label{Lemma6}

\begin{proof}
	Similar to Equation \eqref{lem2:eqn1}, by Taylor's expansion, we have 
	\begin{eqnarray}
		\lefteqn{ \sqrt{n\Delta_n} \cdot \Big\{\bar{R}_n(   \hat{q}_\alpha  ,y_n) -  \bar{R}_n(   q_\alpha  ,y_n) \Big\} } \label{thm3:eqn1} \\
		& =& \sum_{|\ell|=1}   \sqrt{2 \Delta_n \log\log{n}} \left[  \frac{\sqrt{n}}{\sqrt{2\log\log{n}}}(\hat{q}_\alpha -  q_\alpha) \right]^\ell \frac1{n \Delta_n } \sum_{i=1}^n \D^\ell W_{n, i}(  q_\alpha )  \mathbb{I}\{Y_i\leq y_n \} \label{thm3:1term} \\
		&& ~~~~~~~~~~~~~~~~~~~~~~~~~~~~~~~ + \sum_{|\ell|=1} o_p\left( \sqrt{n\Delta_n} \left(  \hat{q}_\alpha -  q_\alpha \right)^\ell \frac1{n  \Delta_n } \sum_{i=1}^n \D^\ell W_{n, i}(  q_\alpha )  \mathbb{I}\{Y_i\leq y_n \} \right), ~~~~ \label{thm3:2term}
	\end{eqnarray}
	By Lemma \ref{lemmaofMV}, for any $\ell$ such that $|\ell|=1$, we have
	\begin{align}
		\E\left[ \frac{1}{n \Delta_n } \sum_{i=1}^n \D^\ell W_{n, i}(  q_\alpha )  \mathbb{I}\{Y_i\leq y_n \}\right] \ &\to \  \D^\ell g(q_\alpha,y), \label{lem2:var011} \\
		\Var\left(  \frac{1}{n  \Delta_n } \sum_{i=1}^n \D^\ell W_{n, i}(  q_\alpha )  \mathbb{I}\{Y_i\leq y_n \} \right) \ &=\  \mathcal{O} \left( \frac{1}{n\delta_{n}^{2\ell} \Delta_n} \right), \label{lem2:var11} 
	\end{align}
	as $n\to\infty$. By the assumption $n\delta_{n}^{2\ell} \Delta_n \to \infty$ as $n\to\infty$, we have Equation \eqref{lem2:var11} vanishes as $n\to\infty$. 
	Then, by Chebyshev's inequality (Theorem 1.6.4 of \citealt{durrett2019probability}), for $|\ell|=1$, we have 
	\begin{equation}
		\frac{1}{n \Delta_n } \sum_{i=1}^n \D^\ell W_{n, i}(  q_\alpha )  \mathbb{I}\{Y_i\leq y_n \} \ \to \  \D^\ell g(q_\alpha,y), 
		\label{LLN111}
	\end{equation}
	in probability as $n\to\infty$.
	By law of the iterated logarithm for sample quantiles (Section 2.5.1 of \citealt{Serfling1980}), we have 
	$
	[\frac{\sqrt{n}}{\sqrt{2\log\log{n}}}(\hat{q}_\alpha -  q_\alpha)]^\ell
	$
	is bounded almost surely. 
	Then, by the assumption $\Delta_n \log\log{n} \to 0$ as $n\to\infty$, we have Equation \eqref{thm3:1term} converge to zero in probability as $n\to\infty$. Because Equation \eqref{thm3:2term} goes to zero faster than Equation \eqref{thm3:1term} as $n\to\infty$, so Equation \eqref{thm3:eqn1} converges to zero in probability as $n\to\infty$. Similarly, we can prove that $\sqrt{n\delta_n}\{ \bar{Q}_n(  \hat{q}_\alpha ) -\bar{Q}_n(  q_\alpha )\} \to 0$ in probability as $n\to\infty$. Therefore, we conclude the proof. 
\end{proof}

\section{Proof of Theorem \ref{CLT}}\label{Theorem3}

\begin{proof}
As indicated by Equations~\eqref{CLT:results}--\eqref{CLT:eqn10}, we have the following relationship: 
\begin{equation}
	 \pr \left\{ \frac{\sqrt{n\Delta_n}}{\sigma_Y} \big( \hat{Y}-\cvv \big) \leq z \right\}   \ =\ \pr \left\{ \sqrt{n\Delta_n} \Big[\hat{F}_n(y_n,   \hat{q}_\alpha )- \beta  \Big] \geq 0 \right\}, 
\label{APPH:eqn1}
\end{equation}
where $y_n=\cvv + z \sigma_Y(n \Delta_n)^{-1/2}$ for $z\in\R $. 
By recalling that $\hat F_n(y_n,  \hat{q}_\alpha )= \bar{R}_n(  \hat{q}_\alpha ,y_n)/\bar{Q}_n(  \hat{q}_\alpha )$ and $\beta=F_{Y|X}(\cv,  \hat{q}_\alpha )=g(  q_\alpha ,\cv)/f_X(  q_\alpha )$, we obtain the following expression:
\begin{eqnarray}
\lefteqn{ \hat F_n(y_n,  \hat{q}_\alpha ) - \beta} \nonumber \\
&=& \frac{1}{\bar{Q}_n(  \hat{q}_\alpha) f_X(  q_\alpha )}  \Big[ f_X (  q_\alpha ) \Big(\bar{R}_n(  \hat{q}_\alpha , y_n)-g(  q_\alpha , \cv) \Big)-g(  q_\alpha , \cv) \Big(\bar{Q}_n(  \hat{q}_\alpha )-f_X(  q_\alpha ) \Big) \Big]. \nonumber \\
&=& \frac{1 }{\bar{Q}_n(  \hat{q}_\alpha ) f_X(  q_\alpha )}  \Big[ f_X (  q_\alpha ) \Big( \text{Errors~I--1}  + \text{II--1} + \text{III--1} \Big) -g(  q_\alpha , \cv)\Big( \text{Errors~I--2}  + \text{II--2}  + \text{III--2} \Big) \Big], \nonumber
\end{eqnarray}
where 
\begin{alignat*}{2}
	\text{Error~I--1} &\ =\ \bar{R}_n(  \hat{q}_\alpha , y_n) - \bar{R}_n(  q_\alpha , y_n), &\qquad \quad
	 \text{Error~I--2} &\ =\ \bar{Q}_n(  \hat{q}_\alpha ) - \bar{Q}_n(  q_\alpha ), \\
	\text{Error~II--1} &\ = \ \bar{R}_n(  q_\alpha , y_n) - \E[\bar{R}_n(  q_\alpha , y_n)],  &\qquad \quad
	 \text{Error~II--2} &\ = \ \bar{Q}_n(  q_\alpha ) - \E[\bar{Q}_n(  q_\alpha ) ], \\
	\text{Error~III--1} &\ = \ \E[\bar{R}_n(  q_\alpha , y_n)] - g(  q_\alpha , \cv), &\qquad \quad
	\text{Error~III--2} &\ = \ \E[\bar{Q}_n(  q_\alpha ) ] - f_X(  q_\alpha ). 
\end{alignat*}
By virtue of Lemma \ref{CRerrorI}, it is established that both $\sqrt{n\Delta_n}$(Error~I--1) and $\sqrt{n\Delta_n}$(Error~I--2) tend to zero in probability as $n\to\infty$. Let $\bar{\sigma} =\sigma|_{y=\cv}$ as given by Equation~\eqref{sigma}. According to Lemma \ref{CRerrorII}, it follows that
\begin{equation*}
	\frac{\sqrt{n\Delta_n}}{\bar{\sigma}}  \Big\{ f_X(  q_\alpha ) \cdot  (\text{Error~II--1}) -g(  q_\alpha ,\cv)\cdot  (\text{Error~II--2})  \Big\}  \Rightarrow N(0,1) \quad \mbox{as $n\to\infty$}.  
\end{equation*}
Furthermore, Lemma \ref{CRerrorIII} reveals that 
\begin{eqnarray}
	\lefteqn{ \frac{\sqrt{n\Delta_n} }{\bar{\sigma}}  \Big\{ f_X(  q_\alpha ) \cdot  (\text{Error~III--1}) -g(  q_\alpha ,\cv)\cdot  (\text{Error~III--2})  \Big\}  } \label{eqn3000} \\
	&=& \frac{\sqrt{n\Delta_n}}{2\bar{\sigma} } f_X(  q_\alpha ) \int_{\R} t^2K( t)\dd t  \left[ \sum_{j=1}^m   \delta_{n,j}^2 \frac{\partial^2 g}{\partial x_j^2}(q_\alpha,\cv)  \right]  
	\ +\ \frac{ z \sigma_Y}{\bar{\sigma} } f_X(  q_\alpha ) f(  q_\alpha ,\cv)  \label{eqn300} \\
	&&-\ \frac{\sqrt{n\Delta_n} }{2 \bar{\sigma} }  g(  q_\alpha , \cv)   \int_{\R} t^2K( t)\dd t   \left[ \sum_{j=1}^m \delta_{n,j}^2 \frac{\partial^2 f_X}{\partial x_j^2}(q_\alpha) \right] \  +\  \sum_{j=1}^m o( \sqrt{n\Delta_n} \delta_{n,j}^2 ) \ +\  o(1), ~~~~~~ \label{eqn30}
\end{eqnarray}
as $n\to\infty$. It should be noted that, based on the definition of $\sigma_Y$, we have $\bar{\sigma}/\sigma_Y= f_X(  q_\alpha )f(  q_\alpha ,\cv)$. Consequently, the second term in Equation \eqref{eqn300} is equivalent to $z$. Under the assumption $\sqrt{n\Delta_n}\delta_{n,j}^2 \to c_j$ as $n\to\infty$ for certain constants $c_j\geq 0$, $j=1,\ldots,m$, Equation \eqref{eqn3000} converges to a non-zero constant $\mu^\prime+z$ as $n\to\infty$, where
$$
\mu^\prime =  \sum_{j=1}^m  \frac{c_j}{2\bar{\sigma} } \left[ f_X(  q_\alpha ) \frac{\partial^2 g}{\partial x_j^2}(q_\alpha,\cv) 
-  g(  q_\alpha , \cv)  \frac{\partial^2 f_X}{\partial x_j^2}(q_\alpha) \right]  \int_{\R} t^2K( t)\dd t. 
$$

By applying Lemmas~\ref{lemmaofMV}--\ref{2error} and Slutsky's theorem (Section 1.5.4 of \citealt{Serfling1980}), we establish the convergence:  
$$
\sqrt{n\Delta_n} \Big[\hat{F}_n(y_n,   \hat{q}_\alpha )- \beta  \Big] \Rightarrow \frac{\bar{\sigma}}{f_X^2(q_\alpha)} \left[ N(0, 1) + \mu^\prime +z \right] \quad \mbox{as $n\to\infty$}. 
$$
Consequently, based on Equation~\eqref{APPH:eqn1}, we have the following limit:  
\begin{equation}\label{thm3:CLT20}
\lim_{n\to\infty} \pr \left\{ \frac{\sqrt{n\Delta_n}}{\sigma_Y} \big( \hat{Y}-\cvv \big) \leq z \right\}   \ =\ \pr \Big\{ N(0, 1) + \mu^\prime +z \geq 0 \Big\} \ = \  \pr \Big\{  N(-\mu^\prime,1) \leq  z \Big\}, 
\end{equation}
where the last equality holds due to the symmetry of the normal distribution's density function. 
This implies that $\frac{\sqrt{n\Delta_n}}{\sigma_Y} \big( \hat{Y}-\cvv \big) \Rightarrow N(-\mu^\prime,1)$ as $n\to\infty$. 
Using the definition of $\mu_j$, we have
$$
-\frac{ \sqrt{n\Delta_n} }{ \sigma_Y } \sum_{j=1}^m \delta_{n,j}^2 \cdot \mu_j \to \mu^\prime \quad \mbox{as $n\to\infty$}, 
$$
and by combining Equation \eqref{thm3:CLT20} with Slutsky's theorem (Section 1.5.4 of \citealt{Serfling1980}), we obtain the convergence: 
$$
\frac{\sqrt{n\Delta_n}}{\sigma_Y} \Bigg( \hat{Y}-\cv - \sum_{j=1}^m \delta_{n,j}^2 \mu_j \Bigg) \ \Rightarrow \ N(-\mu^\prime,1) +\mu^\prime \ =\ N(0,1)  \quad \mbox{as $n\to\infty$}, 
$$
Therefore, Equation \eqref{thm3:CLT0} is derived, and we conclude the proof. 
\end{proof}

\bibliographystyle{informs2014}  
\bibliography{CoVaR-Kernel}  

\begin{thebibliography}{48}
\providecommand{\natexlab}[1]{#1}
\providecommand{\url}[1]{\texttt{#1}}
\providecommand{\urlprefix}{URL }

\bibitem[{Abdous \protect\BIBand{} Berlinet(1998)}]{abdous1998pointwise}
Abdous B, Berlinet A (1998) Pointwise improvement of multivariate kernel
  density estimates. \emph{Journal of Multivariate Analysis} 65(2):109--128.

\bibitem[{Acemoglu et~al.(2015)Acemoglu, Ozdaglar, \protect\BIBand{}
  Tahbaz-Salehi}]{acemoglu2015systemic}
Acemoglu D, Ozdaglar A, Tahbaz-Salehi A (2015) Systemic risk and stability in
  financial networks. \emph{American Economic Review} 105(2):564--608.

\bibitem[{Adrian \protect\BIBand{} Brunnermeier(2016)}]{CoVaR2016}
Adrian T, Brunnermeier MK (2016) {CoVaR}. \emph{The American Economic Review}
  106(7):1705--1741.

\bibitem[{Allen \protect\BIBand{} Babus(2009)}]{allen2009networks}
Allen F, Babus A (2009) Networks in finance. \emph{The Network Challenge:
  Strategy, Profit, and Risk in an Interlinked World} 367.

\bibitem[{Babus(2016)}]{babus2016formation}
Babus A (2016) The formation of financial networks. \emph{The RAND Journal of
  Economics} 47(2):239--272.

\bibitem[{Beckers(1980)}]{beckers1980constant}
Beckers S (1980) The constant elasticity of variance model and its implications
  for option pricing. \emph{The Journal of Finance} 35(3):661--673.

\bibitem[{Bernardi et~al.(2018)Bernardi, Maruotti, \protect\BIBand{}
  Petrella}]{bernardi2013multivariate}
Bernardi M, Maruotti A, Petrella L (2018) Multivariate markov-switching models
  and tail risk interdependence
  \urlprefix\url{https://arxiv.org/abs/1312.6407}.

\bibitem[{Bhattacharya(1967)}]{bhattacharya1967estimation}
Bhattacharya PK (1967) Estimation of a probability density function and its
  derivatives. \emph{Sankhy{\=a}: The Indian Journal of Statistics, Series A}
  373--382.

\bibitem[{Black(1976)}]{black1976pricing}
Black F (1976) The pricing of commodity contracts. \emph{Journal of Financial
  Economics} 3(1-2):167--179.

\bibitem[{Bochner(2005)}]{bochner2005harmonic}
Bochner S (2005) \emph{Harmonic Analysis and The Theory of Probability}
  (Courier Corporation).

\bibitem[{Cai \protect\BIBand{} Liu(2020)}]{cai2020functional}
Cai Z, Liu X (2020) A functional coefficient var model for dynamic quantiles
  with an application for constructing financial network
  \urlprefix\url{http://www2.ku.edu/~kuwpaper/2020Papers/202017.pdf}.

\bibitem[{Campbell \protect\BIBand{} Viceira(1999)}]{campbell1999consumption}
Campbell JY, Viceira LM (1999) Consumption and portfolio decisions when
  expected returns are time varying. \emph{The Quarterly Journal of Economics}
  114(2):433--495.

\bibitem[{Cao(2013)}]{cao2013multi}
Cao Z (2013) Multi-{CoVaR} and shapley value: A systemic risk measure.
  \emph{Banq. France Work. Pap} online.

\bibitem[{Chacko \protect\BIBand{} Viceira(2005)}]{chacko2005dynamic}
Chacko G, Viceira LM (2005) Dynamic consumption and portfolio choice with
  stochastic volatility in incomplete markets. \emph{The Review of Financial
  Studies} 18(4):1369--1402.

\bibitem[{Cox(1996)}]{cox1996constant}
Cox JC (1996) The constant elasticity of variance option pricing model.
  \emph{Journal of Portfolio Management} 15.

\bibitem[{Cox \protect\BIBand{} Ross(1976)}]{cox1976valuation}
Cox JC, Ross SA (1976) The valuation of options for alternative stochastic
  processes. \emph{Journal of Financial Economics} 3(1-2):145--166.

\bibitem[{Duffie \protect\BIBand{} Pan(1997)}]{DuffiePan}
Duffie D, Pan J (1997) An overview of value at risk. \emph{Journal of
  Derivatives} 4(3):7--49.

\bibitem[{Durrett(2019)}]{durrett2019probability}
Durrett R (2019) \emph{Probability: Theory and Examples, 5th Edition}
  (Cambridge University Press).

\bibitem[{Eisenberg \protect\BIBand{} Noe(2001)}]{eisenberg2001systemic}
Eisenberg L, Noe TH (2001) Systemic risk in financial systems. \emph{Management
  Science} 47(2):236--249.

\bibitem[{Elliott et~al.(2014)Elliott, Golub, \protect\BIBand{}
  Jackson}]{elliott2014financial}
Elliott M, Golub B, Jackson MO (2014) Financial networks and contagion.
  \emph{American Economic Review} 104(10):3115--3153.

\bibitem[{Girardi \protect\BIBand{} Erg{\"u}n(2013)}]{girardi2013systemic}
Girardi G, Erg{\"u}n AT (2013) Systemic risk measurement: Multivariate garch
  estimation of {CoVaR}. \emph{Journal of Banking \& Finance} 37(8):3169--3180.

\bibitem[{Glasserman(2004)}]{glasserman2004monte}
Glasserman P (2004) \emph{Monte Carlo Methods in Financial Engineering}
  (Springer).

\bibitem[{Glasserman \protect\BIBand{} Young(2015)}]{glasserman2015likely}
Glasserman P, Young HP (2015) How likely is contagion in financial networks?
  \emph{Journal of Banking \& Finance} 50:383--399.

\bibitem[{Glasserman \protect\BIBand{} Young(2016)}]{glasserman2016contagion}
Glasserman P, Young HP (2016) Contagion in financial networks. \emph{Journal of
  Economic Literature} 54(3):779--831.

\bibitem[{H{\"a}rdle et~al.(2004)H{\"a}rdle, M{\"u}ller, Sperlich, Werwatz
  et~al.}]{hardle2004nonparametric}
H{\"a}rdle W, M{\"u}ller M, Sperlich S, Werwatz A, et~al. (2004)
  \emph{Nonparametric and Semiparametric Models}, volume~1 (Springer).

\bibitem[{H{\"a}rdle et~al.(2016)H{\"a}rdle, Wang, \protect\BIBand{}
  Yu}]{hardle2016tenet}
H{\"a}rdle WK, Wang W, Yu L (2016) {TENET}: Tail-event driven network risk.
  \emph{Journal of Econometrics} 192(2):499--513.

\bibitem[{Heston(1993)}]{heston1993closed}
Heston SL (1993) A closed-form solution for options with stochastic volatility
  with applications to bond and currency options. \emph{The Review of Financial
  Studies} 6(2):327--343.

\bibitem[{Hong(2009)}]{Hong09}
Hong LJ (2009) Estimating quantile sensitivities. \emph{Operations Research}
  57(1):118--130.

\bibitem[{Huang et~al.(2022)Huang, Lin, \protect\BIBand{} Hong}]{huanghong23}
Huang W, Lin N, Hong LJ (2022) Monte-{C}arlo estimation of {CoVaR}
  \urlprefix\url{https://arxiv.org/abs/2210.06148}.

\bibitem[{Hull(2018)}]{hull2012risk}
Hull J (2018) \emph{Risk Management and Financial Institutions, 5th Edition}
  (John Wiley \& Sons).

\bibitem[{Jackson \protect\BIBand{} Pernoud(2021)}]{jackson2021systemic}
Jackson MO, Pernoud A (2021) Systemic risk in financial networks: A survey.
  \emph{Annual Review of Economics} 13:171--202.

\bibitem[{Jorion(2000)}]{Jorion}
Jorion P (2000) \emph{Value at Risk: The New Benchmark for Managing Financial
  Risk, 3rd Edition} (McGraw-Hill).

\bibitem[{Karimalis \protect\BIBand{} Nomikos(2018)}]{karimalis2018measuring}
Karimalis EN, Nomikos NK (2018) Measuring systemic risk in the european banking
  sector: A copula {CoVaR} approach. \emph{The European Journal of Finance}
  24(11):944--975.

\bibitem[{Kim \protect\BIBand{} Omberg(1996)}]{kim1996dynamic}
Kim TS, Omberg E (1996) Dynamic nonmyopic portfolio behavior. \emph{The Review
  of Financial Studies} 9(1):141--161.

\bibitem[{Koenker(2005)}]{koenker2005quantile}
Koenker R (2005) \emph{Quantile Regression}, volume~38 (Cambridge university
  press).

\bibitem[{Li \protect\BIBand{} Racine(2007)}]{li2007nonparametric}
Li Q, Racine JS (2007) \emph{Nonparametric Econometrics: Theory and Practice}
  (Princeton University Press).

\bibitem[{Liu \protect\BIBand{} Hong(2009)}]{LiuandHong}
Liu G, Hong LJ (2009) Kernel estimation of quantile sensitivities. \emph{Naval
  Research Logistics} 56(6):511--525.

\bibitem[{Mainik \protect\BIBand{} Schaanning(2014)}]{mainik2014dependence}
Mainik G, Schaanning E (2014) On dependence consistency of {CoVaR} and some
  other systemic risk measures. \emph{Statistics \& Risk Modeling}
  31(1):49--77.

\bibitem[{Merton(1975)}]{merton1975optimum}
Merton RC (1975) Optimum consumption and portfolio rules in a continuous-time
  model. \emph{Stochastic Optimization Models in Finance}, 621--661 (Elsevier).

\bibitem[{Oh \protect\BIBand{} Patton(2018)}]{oh2018time}
Oh DH, Patton AJ (2018) Time-varying systemic risk: Evidence from a dynamic
  copula model of {CDS} spreads. \emph{Journal of Business \& Economic
  Statistics} 36(2):181--195.

\bibitem[{Pagan \protect\BIBand{} Ullah(1999)}]{Pagan1999nonparametric}
Pagan A, Ullah A (1999) \emph{Nonparametric Econometrics} (Cambridge university
  press Cambridge).

\bibitem[{Parzen(1962)}]{parzen1962estimation}
Parzen E (1962) On estimation of a probability density function and mode.
  \emph{The Annals of Mathematical Statistics} 33(3):1065--1076.

\bibitem[{Ross(2019)}]{Ross}
Ross SM (2019) \emph{Introduction to Probability Models, 12th Edition}
  (Elsevier).

\bibitem[{Schroder(1989)}]{schroder1989computing}
Schroder M (1989) Computing the constant elasticity of variance option pricing
  formula. \emph{The Journal of Finance} 44(1):211--219.

\bibitem[{Schuster(1969)}]{schuster1969estimation}
Schuster EF (1969) Estimation of a probability density function and its
  derivatives. \emph{The Annals of Mathematical Statistics} 40(4):1187--1195.

\bibitem[{Serfling(1980)}]{Serfling1980}
Serfling RJ (1980) \emph{Approximation Theorems of Mathematical Statistics}
  (John Wiley and Sons).

\bibitem[{Wachter(2002)}]{wachter2002portfolio}
Wachter JA (2002) Portfolio and consumption decisions under mean-reverting
  returns: An exact solution for complete markets. \emph{Journal of Financial
  and Quantitative Analysis} 37(1):63--91.

\bibitem[{White et~al.(2015)White, Kim, \protect\BIBand{}
  Manganelli}]{white2015var}
White H, Kim TH, Manganelli S (2015) {VAR} for {VaR}: Measuring tail dependence
  using multivariate regression quantiles. \emph{Journal of Econometrics}
  187(1):169--188.

\end{thebibliography}

\end{document}